\renewcommand{\email}[2][]{%
  \ifx\emails\@empty\relax\else{\g@addto@macro\emails{,\space}}\fi%
  \@ifnotempty{#1}{\g@addto@macro\emails{\textrm{(#1)}\space}}%
  \g@addto@macro\emails{#2}%
}
\numberwithin{equation}{section}
\theoremstyle{plain}
\newtheorem{thm}{\protect\theoremname}[section]
  \theoremstyle{plain}
  \newtheorem{prop}[thm]{\protect\propositionname}
  \theoremstyle{remark}
  \newtheorem{rem}[thm]{\protect\remarkname}
  \theoremstyle{plain}
  \newtheorem{lem}[thm]{\protect\lemmaname}
  \theoremstyle{plain}
  \newtheorem{definition}[thm]{\protect\definitionname}
  \newtheorem{assumption}[thm]{\protect\assumptionname}
  \theoremstyle{plain}
  \newtheorem{corollary}[thm]{Corollary}
  \providecommand{\lemmaname}{Lemma}
  \providecommand{\propositionname}{Proposition}
  \providecommand{\remarkname}{Remark}
\providecommand{\theoremname}{Theorem}
\providecommand{\definitionname}{Definition}
\providecommand{\assumptionname}{Assumption}
\newcommand{\be}{\begin{equation}}
\newcommand{\ee}{\end{equation}}
\newcommand{\bea}{\begin{eqnarray}}
\newcommand{\eea}{\end{eqnarray}}
\newcommand{\beaa}{\begin{eqnarray*}}
\newcommand{\eeaa}{\end{eqnarray*}}
\newcommand{\bes}{\begin{subequations}}
\newcommand{\ees}{\end{subequations}}
\newcommand{\EE}{{\mathbb E}}
\newcommand{\R}{{\mathbb R}}
\newcommand{\RR}{{\mathbb R}}
\newcommand{\NN}{{\mathbb N}}
\newcommand{\cf}{{\mathcal F}}
\newcommand{\cm}{{\mathcal M}}
\newcommand{\cB}{\mathcal{B}}
\newcommand{\cF}{\mathcal{F}}
\newcommand{\cM}{\mathcal{M}}
\newcommand{\cP}{\mathcal{P}}
\newcommand{\cU}{\mathcal{U}}
\newcommand{\fP}{\mathfrak{P}}
\newcommand{\eps}{\varepsilon}
\DeclareMathOperator{\Law}{Law}
\title{
Bounds for VIX futures given S\&P~500 smiles}%
\thanks{AMS 2010 Subject Classification: 91B28; 60G42; 49N05. JEL Classification: G12. Keywords: VIX Futures; Price bounds.\\
\indent  Date: \today. }
\author{Julien Guyon}
\author{Romain Menegaux}
\address{Bloomberg L.P., Quantitative Research}
\email{jguyon2@bloomberg.net, rmenegaux@bloomberg.net}
\author{Marcel Nutz}
\address{Columbia University, Departments of Statistics and Mathematics}
\email{mnutz@columbia.edu}
\begin{document}

\begin{abstract}
We derive sharp bounds for the prices of VIX futures using the full information of S\&P~500 smiles. To that end, we formulate the model-free sub/superreplication of the VIX by trading in the S\&P~500 and its vanilla options as well as the forward-starting log-contracts. A dual problem of minimizing/maximizing certain risk-neutral expectations is introduced and shown to yield the same value.

The classical bounds for  VIX futures given the smiles only use a calendar spread of log-contracts on the S\&P~500. We analyze for which smiles the classical bounds are sharp and how they can be improved when they are not. In particular, we introduce a family of functionally generated portfolios which often improves the classical bounds while still being tractable; more precisely, determined by a single concave/convex function on the line. Numerical experiments on market data and SABR smiles show that the classical lower bound can be improved dramatically, whereas the upper bound is often close to optimal.
\end{abstract}

\maketitle

\section{Introduction}

In this article, we derive sharp bounds for the prices of VIX futures by using the full information of S\&P~500 smiles at two maturities. The VIX (short for volatility index) is published by the Chicago Board Options Exchange (CBOE) and used as an indicator of short-term options-implied volatility. By definition, the VIX is the implied volatility of the 30-day variance swap on the S\&P~500; see~\cite{CBOE}. Equivalently, using the well-known link between realized variance and log-contracts~\cite{neuberger}, the VIX at date $T_1$ is the implied volatility of a log-contract that delivers $\ln(S_{T_2}/S_{T_1})$ at $T_{2}=T_{1}+ \tau$, where $\tau=30$~days and $S_{T_{i}}$ is the S\&P~500 at date~$T_{i}$:
\beaa
(\mathrm{VIX}_{T_1})^{2} = -\frac{2}{\tau}\mathrm{Price}_{T_1}\left[ \ln\left(\frac{S_{T_2}}{S_{T_1}} \right)\right];
\eeaa
we are assuming zero interest rates, repos, and dividends for simplicity. The log-contract can itself be replicated at $T_1$ using call and put options on the S\&P~500 with maturity $T_2$.
The VIX index cannot be traded, but VIX futures can: the VIX future expiring at $T_1$ is an instrument that pays $\mathrm{VIX}_{T_1}$ at $T_1$. While $\mathrm{VIX}_{T_1}^{2}$ can be replicated, its square root $\mathrm{VIX}_{T_1}$ cannot; instead, sub/superreplication in the S\&P~500 and its options leads to model-free lower/upper bounds on the price of the VIX future.

The classical sub/superreplication argument is based on the fact that one can replicate any affine function of $\mathrm{VIX}_{T_1}^{2}$ at $T_1$ using cash and log-contracts with maturities $T_1$ and $T_2$. Thus, one searches for the sub/superreplication of the square root function by an affine function that gives the maximum/minimum portfolio price. Since the square root is a concave function, it is below all its tangent lines, and the classical superreplication boils down to selecting the line that gives the minimum portfolio price. This argument shows that, in the absence of
arbitrage, the price of the VIX future at time $T_0=0$ cannot exceed the implied volatility
$\sigma_{12}$ of the forward-starting log-contract on the index,
starting at the VIX future's expiry $T_{1}$ and maturing at~$T_{2}$,
\beaa
\sigma_{12}^2 \equiv -\frac{2}{\tau}\mathrm{Price}_{T_0}\left[ \ln\left(\frac{S_{T_2}}{S_{T_1}} \right)\right].
\eeaa
Subreplicating the VIX future using the same instruments corresponds to subreplicating the square root by an affine function. This yields zero as a lower bound for the future's price, which is clearly a poor estimate.

These classical bounds are suboptimal in the sense that they only use the prices of log-contracts. Our aim is, instead, to extract the full information contained in the S\&P~500 smiles at $T_{1}$ and $T_{2}$, by also including all vanilla options at these maturities as (static) hedging instruments, as well as trading (dynamically, i.e., at~$T_{1}$) in the S\&P~500 itself and the forward-starting log contract. Moreover, we allow the deltas at $T_{1}$ to depend on the information available, that is, the S\&P~500 and the VIX index at $T_{1}$.

\medskip

The first part of the paper analyzes this problem for general smiles. We formulate the sub/superreplication as a linear programming (LP) problem and define absence of arbitrage in this setting. The latter leads to the existence of risk-neutral joint distributions $\mu$ for $(S_{T_{1}},S_{T_{2}},\mathrm{VIX}_{T_1})$  which constitute the domain of an optimization problem dual to sub/superreplication (Theorem~\ref{thm:arbitrage-free}). The first two marginals $\mu_{1}$ and $\mu_{2}$ are given by the market smiles at $T_{1}$ and $T_{2}$, whereas the distribution of $\mathrm{VIX}_{T_1}$ merely satisfies a certain constraint. The dual problem is thus reminiscent of a (constrained) martingale optimal transport problem, but falls outside the transport framework because the third marginal is not prescribed. This necessitates a novel argument for our duality theorem which establishes the absence of a duality gap (Theorem~\ref{th:strongDuality}), i.e., primal and dual problem have the same value. This theorem holds, more generally, for an option payoff $f(S_{T_1},S_{T_2},\mathrm{VIX}_{T_1})$ rather than just the VIX. As a last abstract contribution, we characterize those smiles $\mu_{1},\mu_{2}$ for which the classical bounds for the VIX future are optimal (Theorems~\ref{thm:D=0} and~\ref{thm:D=Dbar}). The lower bound is optimal if and only if $\mu_{1}=\mu_{2}$, which never happens in practice. The characterization for the upper bound is more subtle, it states that a convex-order condition in two dimensions holds, or equivalently that a model with constant forward volatility is contained in the dual domain.

While our theoretical bounds are sharper than the classical ones, the corresponding hedging portfolios can only be found numerically, and the numerical problem is far from trivial. Aiming for a balance between flexibility and tractability, we introduce a family of functionally generated portfolios that are determined by a one-dimensional convex/concave function and a constant (Definition~\ref{de:functionallyGen}). The space of one-dimensional convex functions is easy to search numerically, and the generated portfolios are guaranteed to satisfy the sub/superhedging conditions at all values of the underlying, by our construction. We show that the lower price bound obtained by functionally generated portfolios improves the classical one as soon as $\mu_{1}\neq\mu_{2}$ (Proposition~\ref{prop:P>0}) and here the generating function can be chosen explicitly of an inverse ``hockey stick'' form.

In the second part of the paper, we study specific families of smiles $\mu_{1},\mu_{2}$ and corresponding portfolios. The case where $\mu_{2}$ is a Bernoulli distribution gives rise to a ``complete market'' where the VIX future can be replicated. While the classical upper bound is not sharp unless $\mu_{1}$ has a very particular form, we show how functionally generated portfolios lead to the sharp bound as given by the unique risk-neutral expectation (Section \ref{sec:binomial}). When $\mu_{2}$ is a general distribution with compact support, we present various sufficient conditions for the classical upper bound to be suboptimal (Section~\ref{sec:mu2_compact_support}). Finally, we discuss a family of examples for which the classical upper bound is already sharp (Section~\ref{sec:examples_where_classical_is_optimal}).

The third part of the paper presents numerical experiments using smiles from market data as well as smiles generated by a SABR model. We compare the classical bounds, the bounds obtained from functionally generated portfolios, and the bounds computed by an LP solver that correspond to the theoretical, optimal bounds modulo discretization error. For the generating functions, we use piecewise linear maps and a cut square root; the latter yields the best approximation in our experiments. The results suggest that the classical lower bound can be improved dramatically by functionally generated portfolios; the bound from the LP solver is only slightly better. On the other hand, the classical upper bound is already surprisingly sharp for typical smiles.

\medskip

Turning to the existing literature on volatility derivatives, the most closely related work is due to De~Marco and Henry-Labord\`ere~\cite{phl-demarco} who investigate bounds for VIX options, i.e., calls and puts on the VIX, given the smile of the S\&P~500 and the VIX future as liquidly tradable instruments. Thus, compared to~\cite{phl-demarco}, we take a step back by investigating bounds for the VIX future itself, given the smile of the S\&P~500. The sub/superreplication problem in~\cite{phl-demarco} leads to a linear program with a dual akin to (constrained) martingale optimal transport. The numerical results show that, for typical market smiles, the optimal upper bound on VIX options is equal to an analytical (a priori suboptimal) bound that the authors derive.  For a further discussion of numerical solutions to sub/superreplication problems, we also refer to~\cite{HenryLabordere.11}, and to~\cite{beiglbock, galichon} for background on martingale optimal transport. While~\cite{phl-demarco} and the present paper consider derivatives on options-implied volatility,  previous literature has studied derivatives on realized volatility. Using power payoffs, Carr and Lee~\cite{carr-lee} show that, if the returns and the volatility of an asset are driven by independent Brownian motions, the asset smile at a given maturity~$T$ determines the distribution of the realized variance at $T$, hence allowing perfect replication of derivatives on realized variance. Using business-time hedging, Dupire~\cite{dupire} derives a lower bound for a call on realized variance at a given maturity $T$, given the asset smile at $T$. Carr and Lee \cite{carr-lee2} extend Dupire's idea to tackle the cases of puts on realized variance as well as forward-starting calls and puts on realized variance. More recently, Cox and Wang~\cite{cox-wang} have derived the optimal portfolio subreplicating convex functions of realized variance. While these works assume that the underlying has continuous paths, Kahal\'e~\cite{Kahale.16} considers the model-free hedging of discretely monitored variance swaps based on squared logarithmic returns in a setting where the underlying may have jumps, and thus perfect hedging is not possible despite dynamic trading in the underlying. The optimal subhedging price and a corresponding hedge are obtained. This analysis has been extended by Hobson and Klimmek~\cite{HobsonKlimmek.12} who consider both sub- and superhedging; moreover, they study a variety of variance swap contracts and dynamic trading can be constrained to discrete rebalancing dates in their setting. The authors link this problem to Perkins' solution of the Skorokhod embedding problem and derive optimal price bounds and hedging strategies.

\medskip

The remainder of the article is structured as follows. Section~\ref{sec:primal} describes the primal problems of sub/super\-replication and recalls the classical bounds on VIX futures, while absence of arbitrage and existence of risk-neutral measures are characterized in Section~\ref{sec:absence_of_arbitrage}. In Section~\ref{sec:duality} we formulate the dual problem of maximizing over risk-neutral expectations and prove the absence of a duality gap. Next, we characterize in Section~\ref{sec:characterization_classical_is_optimal} the market smiles for which the classical bounds on VIX futures are already sharp.
In Section~\ref{sec:analytical_portfolios} we introduce the functionally generated portfolios and show that they essentially always improve the classical lower bound. The subsequent sections study examples where the classical upper bound is not optimal, when the smile $\mu_{2}$ is a two-point distribution (Section \ref{sec:binomial}) or more generally a distribution with compact support (Section~\ref{sec:mu2_compact_support}). On the other hand, Section~\ref{sec:examples_where_classical_is_optimal} provides smiles for which the classical upper bound is optimal. Finally, numerical experiments using SABR and market smiles are presented in Section \ref{sec:num}.

\section{Primal problem and classical bounds}\label{sec:primal}

\subsection{Setting and notation}
For simplicity, we assume zero interest rates, repos, and dividends. Moreover, we take as given the full market smiles of the S\&P~500 index $S$ at two maturities $T_{1}$ and $T_{2}\equiv T_{1}+30$
days, that is, the continuum of all call prices $C(K)$ for strikes $K\ge 0$. For each maturity $T_i$, $i\in\{1,2\}$, absence of static arbitrage (or butterfly arbitrage) is equivalent to the existence of a risk-neutral measure $\mu_i\equiv \partial^2 C_i/\partial K^2$ such that the price of any vanilla option $u_i$ written on $S_i\equiv S_{T_i}$ is the expectation $\EE^{i}[u_i(S_i)]\equiv \EE^{\mu_i}[u_i(S_i)]$ of the payoff under $\mu_i$, and we shall refer to $\mu_{i}$ as a smile as well. (The notation $\partial^2 C_i/\partial K^2$ refers to the second derivative measure of the convex function $C_{i}$.)
 In particular, the price of $S_i$ at time~0 must be the initial value $S_0\in \RR_+^*\equiv(0,\infty)$ of the S\&P~500. Therefore, throughout the article, $(S_{1},S_{2})$ denotes the identity on~$(\RR_+^*)^{2}$ and 
$\mu_1$, $\mu_2$ are probability measures on $\RR_+^*$ with mean $S_0$. 
Absence of dynamic (calendar) arbitrages will be discussed in Section \ref{sec:absence_of_arbitrage}.

We call \emph{forward-starting log-contract} (\emph{FSLC} for short) the financial derivative that pays $-\frac{2}{\tau}\ln\frac{S_{2}}{S_{1}}$
at $T_{2}$, where $\tau\equiv T_{2}-T_{1}=30$ days. We recall that, by definition of the VIX (substituting the strip of out-the-money
options by the log-contract for simplicity), the price at $T_{1}$
of the FSLC is $\mathrm{VIX}^2_{T_1}$, the square of the VIX at~$T_{1}$. For the log-contracts to have finite prices, the following is in force throughout the paper.

\begin{assumption}\label{as:integrability}
  The given marginals $\mu_{i}$ satisfy
  $$
    \EE^{i}[S_{i}]=S_0,\qquad \EE^{i}[|\ln(S_{i})|]<\infty,\qquad i\in\{1,2\}.
  $$
\end{assumption}

For convenience, we set
\beaa
L(x)\equiv-\frac{2}{\tau}\ln(x), \qquad
\ell_1 \equiv \mathbb{E}^{1}[L(S_{1})], \qquad \ell_2 \equiv \mathbb{E}^{2}[L(S_{2})]. \label{eq:L}
\eeaa
Moreover, we denote by
\bea
\sigma_{12}^{2}\equiv -\frac{2}{\tau}\left(\mathbb{E}^{2}\left[\ln S_{2}\right]-\mathbb{E}^{1}\left[\ln S_{1}\right]\right)=\mathbb{E}^{2}\left[L(S_{2})\right]-\mathbb{E}^{1}\left[L(S_{1})\right] = \ell_2 - \ell_1 \label{eq:sigma12}
\eea
the price at time 0 of the FSLC. As we will recall in Section \ref{sec:absence_of_arbitrage}, absence
of dynamic arbitrage implies that $\mu_{1}$ and $\mu_{2}$ are in convex
order. Then, $\sigma_{12}^{2}\ge0$, and $\sigma_{12}^{2}=0$ if
and only if $\mu_{1}=\mu_{2}$, as $L$ is strictly convex.

\subsection{The primal problem}

We consider a market with two trading dates ($T_0=0$ and $T_1$) where the financial instruments are the S\&P~500 (tradable at $T_0$ and $T_1$), the vanilla options on it with maturities $T_1$ and $T_2$ (tradable at $T_0$), and the FSLC (tradable at $T_1$). Note that we consider only static positions in vanilla options, but we allow dynamic trading, that is, trading at $T_1$, in the S\&P~500 and the FSLC. We are interested in deriving the optimal lower and upper bounds on the price of the VIX future expiring at~$T_1$, given these instruments.
Similarly as in De Marco and Henry-Labord\`ere \cite{phl-demarco}, the model-independent
no-arbitrage upper bound for the VIX future is the smallest price at time 0 of a superreplicating portfolio,
\begin{equation}
P_\mathrm{super}\equiv\inf_{\cU_\mathrm{super}}\left\{\mathbb{E}^{1}[u_{1}(S_{1})]+\mathbb{E}^{2}[u_{2}(S_{2})]\right\}\label{eq:primal}
\end{equation}
where $\cU_\mathrm{super}$ is the set of integrable superreplicating portfolios,
i.e., the set of all measurable functions $(u_{1},u_{2},\Delta^{S},\Delta^{L})$ with $u_1\in L^{1}(\mu_{1})$, $u_2\in L^{1}(\mu_{2})$ that satisfy the superreplication constraint
\begin{equation}
\forall(s_{1},s_{2},v)\in(\RR_+^*)^2\times\mathbb{R}_{+},\qquad u_{1}(s_{1})+u_{2}(s_{2})+\Delta^{S}(s_{1},v)(s_{2}-s_{1})+\Delta^{L}(s_{1},v)\left(-\frac{2}{\tau}\ln\frac{s_{2}}{s_{1}}-v^{2}\right)\ge v.\label{eq:constraint}
\end{equation}
This linear program is known as the primal problem; $P$ stands for ``primal'' and $v$ stands for the value of
the VIX at the future date $T_{1}$. At time $T_{1}$, delta-hedging in
the S\&P~500 and in the FSLC is allowed.
The respective deltas, $\Delta^{S}(s_{1},v)$ and $\Delta^{L}(s_{1},v)$,
may depend on the values $s_{1}$ and $v$ of the S\&P~500 and the
VIX at $T_{1}$. Since the price at $T_{1}$
of the FSLC is $v^{2}$, the delta strategies are costless, and the price of the portfolio is $\mathbb{E}^{1}[u_{1}(S_{1})]+\mathbb{E}^{2}[u_{2}(S_{2})]$. Similarly, the lower bound on the VIX future is the largest price of a subreplicating portfolio,
\begin{equation}
P_\mathrm{sub}\equiv\sup_{\cU_\mathrm{sub}}\left\{\mathbb{E}^{1}[u_{1}(S_{1})]+\mathbb{E}^{2}[u_{2}(S_{2})]\right\}, \label{eq:primal_sub}
\end{equation}
where $\cU_\mathrm{sub}$ is the set of integrable subreplicating portfolios, defined like $\cU_\mathrm{super}$ but with the opposite inequality
\begin{equation}
\forall(s_{1},s_{2},v)\in(\RR_+^*)^2\times\mathbb{R}_{+},\qquad u_{1}(s_{1})+u_{2}(s_{2})+\Delta^{S}(s_{1},v)(s_{2}-s_{1})+\Delta^{L}(s_{1},v)\left(-\frac{2}{\tau}\ln\frac{s_{2}}{s_{1}}-v^{2}\right)\le v.\label{eq:constraint_sub}
\end{equation}
We shall see in Remark~\ref{rk:contStrategies} that the portfolios could also be required to be continuous, without changing the values of $P_\mathrm{super}$ and $P_\mathrm{sub}$.

\begin{rem}\label{rk:realVIX}
CBOE \cite{CBOE} actually uses a finite number of out-the-money options to approximate the log profile. If this trapezoidal approximation were known at time $T_0$, our analysis would apply to the real VIX formula by simply substituting the approximation for the log profile in the constraints (\ref{eq:constraint}) and (\ref{eq:constraint_sub}). However, the list of out-the-money options used in the official VIX formula depends on the locations of zero bid quotes and hence varies over time, so the approximation used for computing the official VIX at time $T_1$ is only known at $T_1$. Maturity interpolation is another practical concern if no options with maturity $T_2$ are quoted. Then CBOE considers the two closest quoted maturities $T_2^-$ and $T_2^+$ straddling $T_2$ and interpolates linearly the corresponding variance swap variances. Sub/superreplication then requires trading options with maturities $T_2^-$ and $T_2^+$. Instead, our idealized setting corresponds to interpolating the $T_2^-$ and $T_2^+$ smiles and trading only in the synthetic $T_2$ options.
\end{rem}

\begin{rem}\label{rk:generalization}
  An analogous linear program can be studied if the payoffs $-\frac{2}{\tau} \ln \frac{s_2}{s_1}$ and $v$ are replaced by general payoffs $g(s_1,s_2)$ and $f(s_1,s_2,p)$, where $p$ denotes the price at $T_1$ of $g(s_1,s_2)$. In particular, bounds for options written on the price at $T_1$ of forward-starting calls or puts can be found using the same approach.
\end{rem}

\subsection{The classical bounds for VIX futures}\label{sec:classicalBounds}

Suppose for the moment that $\sigma_{12}^{2}$ as defined in (\ref{eq:sigma12}) is nonnegative, as will be the case in the absence of arbitrage. Then, 
it is well known that
\begin{equation}
P_\mathrm{super}\le\sigma_{12}\label{eq:P_le_sigma12}.
\end{equation}
Indeed, if $\sigma_{12}>0$, the portfolio given by
\begin{equation}
{u}_{1}(s_{1})=\frac{\sigma_{12}}{2}-\frac{L(s_{1})}{2\sigma_{12}},\qquad{u}_{2}(s_{2})=\frac{L(s_{2})}{2\sigma_{12}},\qquad{\Delta}^{S}(s_{1},v)=0,\qquad{\Delta}^{L}(s_{1},v)=-\frac{1}{2\sigma_{12}}\label{eq:portfolio_superrep_classical}
\end{equation}
has price $\mathbb{E}^{1}[{u}_{1}(S_{1})]+\mathbb{E}^{2}[{u}_{2}(S_{2})]=\sigma_{12}$ and belongs to $\cU_\mathrm{super}$ because
\[
{u}_{1}(s_{1})+{u}_{2}(s_{2})+{\Delta}^{S}(s_{1},v)(s_{2}-s_{1})+{\Delta}^{L}(s_{1},v)\left(L\left(\frac{s_{2}}{s_{1}}\right)-v^{2}\right)=\frac{\sigma_{12}}{2}+\frac{1}{2\sigma_{12}}v^{2}=v+\frac{1}{2\sigma_{12}}(v-\sigma_{12})^{2}\geq v.
\]
This corresponds to the superreplication
of a straight line ($v$) by a tangent parabola ($\frac{\sigma_{12}}{2}+\frac{1}{2\sigma_{12}}v^{2}$), or, equivalently, to the superreplication of the square root ($\sqrt{v^2}$) by its tangent line at $v^2=\sigma_{12}^2$. If $\sigma_{12}=0$, one can simply replace
$\sigma_{12}$ by an arbitrarily small $\varepsilon>0$ in (\ref{eq:portfolio_superrep_classical}), showing that $P_\mathrm{super}\le0=\sigma_{12}$. Similarly, canceling the dependency of the left-hand side of the subreplication constraint (\ref{eq:constraint_sub})
on $(s_{1},s_{2})$ yields 
\begin{equation*}
P_\mathrm{sub}\ge 0\label{eq:Psub_ge_0}.
\end{equation*}
Thus, the classical lower bound is trivial. In the following sections, we shall investigate how to obtain bounds sharper than the interval $[0,\sigma_{12}]$.

%
%

\section{Arbitrages and martingale measures}\label{sec:absence_of_arbitrage}

In this section, we define (dynamic) arbitrage and relate its absence to the existence of certain risk-neutral measures.
A model-free arbitrage is usually defined as a strategy that has a negative price at time 0 and generates a nonnegative payoff at the final horizon $T_2$. We shall distinguish two types of arbitrages.

An \emph{$S$-arbitrage} is an arbitrage that only trades in the S\&P~500 and its vanilla options. More precisely, let~$\cU_S^0$ be the set of measurable functions $(u_{1},u_{2},\Delta)$ with $u_1\in L^{1}(\mu_{1})$, $u_2\in L^{1}(\mu_{2})$ such that
\begin{equation*}
\forall(s_{1},s_{2})\in(\RR_+^*)^2,\qquad u_{1}(s_{1})+u_{2}(s_{2})+\Delta(s_{1})(s_{2}-s_{1})\ge 0 \label{eq:constraint_arb_S};
\end{equation*}
then an $S$-arbitrage is an element of $\cU_S^0$ such that $\mathbb{E}^{1}[u_{1}(S_{1})]+\mathbb{E}^{2}[u_{2}(S_{2})]<0$.

An \emph{$(S,V)$-arbitrage}, on the other hand, also trades in the FSLC at $T_1$. Let $\cU_{S,V}^0$ be the set of measurable functions $(u_{1},u_{2},\Delta^S,\Delta^V)$ with $u_1\in L^{1}(\mu_{1})$, $u_2\in L^{1}(\mu_{2})$, such that
\begin{equation*}
\forall(s_{1},s_{2},v)\in(\RR_+^*)^2\times\mathbb{R}_{+},\quad u_{1}(s_{1})+u_{2}(s_{2})+\Delta^{S}(s_{1},v)(s_{2}-s_{1})+\Delta^{L}(s_{1},v)\left(L\left(\frac{s_{2}}{s_{1}}\right)-v^{2}\right)\ge 0;\label{eq:constraint_arb_S_V}
\end{equation*}
then an $(S,V)$-arbitrage is an element of $\cU_{S,V}^0$ such that $\mathbb{E}^{1}[u_{1}(S_{1})]+\mathbb{E}^{2}[u_{2}(S_{2})]<0$.
Since any such element can be scaled, we observe that there is an $(S,V)$-arbitrage in the market if and only if
\begin{equation*}
\inf_{\cU_{S,V}^0}\left\{\mathbb{E}^{1}[u_{1}(S_{1})]+\mathbb{E}^{2}[u_{2}(S_{2})]\right\} = -\infty, \label{eq:primal_arb_S}
\end{equation*}
and the analogue with $\cU_{S}^0$ holds for $S$-arbitrages.

Clearly, absence of $(S,V)$-arbitrage implies absence of $S$-arbitrage, and we shall see in Theorem~\ref{thm:arbitrage-free} that they are in fact equivalent. Before that, let us introduce the risk-neutral measures that are dual to the portfolios.

\begin{definition}\label{def:Mbar} 
	We denote by $\mathcal{M}(\mu_{1},\mu_{2})$ the set of all martingale laws~$\mu$ on~$(\mathbb{R}_{+}^*)^{2}$ with marginals~$\mu_{1}$
	and~$\mu_{2}$, i.e., probability measures $\mu$ such that 
	\[
	S_{1}\sim\mu_{1},\qquad S_{2}\sim\mu_{2},\qquad \mathbb{E}^{\mu}\left[S_{2}|S_{1}\right]=S_{1}.
	\]
	For $\mu\in\mathcal{M}(\mu_{1},\mu_{2})$,
	\begin{equation}
	\Lambda_\mu(S_1) \equiv \mathbb{E}^{\mu}\left[L\left(\frac{S_{2}}{S_{1}}\right)\middle|S_{1}\right] \label{eq:def_Lambda}
	\end{equation}
	is the price at $T_{1}$ of the FSLC under $\mu$, and we denote by
	$\bar{\mathcal{M}}(\mu_{1},\mu_{2})$ the subset of all $\mu\in\mathcal{M}(\mu_{1},\mu_{2})$ such that $\Lambda_\mu(S_1)$ is $\mu$-a.s.\ constant. In this case, necessarily $\Lambda_\mu(S_1) = \sigma_{12}^{2}$ $\mu$-a.s.
\end{definition}
%

Next, we define a set of measures on the extended space $(\RR_+^*)^2\times\mathbb{R}_{+}$, where the last coordinate will accommodate the VIX at $T_1$. With a mild abuse of notation, we write $(S_1,S_2,V)$ for the identity on this space.

\begin{definition}
  Let $\mathcal{M}_{V}(\mu_{1},\mu_{2})$ be the set of all the probability
measures $\mu$ on $(\RR_+^*)^2\times\mathbb{R}_{+}$ such
that
\begin{equation}
S_{1}\sim\mu_{1},\qquad S_{2}\sim\mu_{2},\qquad\mathbb{E}^{\mu}\left[S_{2}|S_{1},V\right]=S_{1},\qquad\mathbb{E}^{\mu}\left[L\left(\frac{S_{2}}{S_{1}}\right)\middle|S_{1},V\right]=V^{2}.
\label{eq:def_MV}
\end{equation}
\end{definition}

Note that for all $\mu\in\mathcal{M}_{V}(\mu_{1},\mu_{2})$, we have
\bea
\EE^\mu[V^2] = \sigma_{12}^2. \label{eq:Unconditional_expectation_V2}
\eea
More precisely, extending the definition (\ref{eq:def_Lambda}) of $\Lambda_\mu(S_1)$ to $\mu\in\mathcal{M}_{V}(\mu_{1},\mu_{2})$, we have
\begin{equation}
\EE^\mu[V^2|S_{1}]=\Lambda_\mu(S_1),
\label{eq:Expectation_V2}
\end{equation}
that is, projecting the VIX squared onto functions of $S_1$ always yields $\Lambda_\mu(S_1)$. Moreover, Jensen's inequality implies 
\begin{equation}\label{eq:Lambda_Positive}
 \Lambda_\mu(S_1)  \ge  0.
\end{equation}


To relate the spaces  $\mathcal{M}(\mu_{1},\mu_{2})$ and $\mathcal{M}_{V}(\mu_{1},\mu_{2})$, we introduce the following notation. For $\mu\in\mathcal{M}_{V}(\mu_{1},\mu_{2})$, let $\mu_{(1,2)}$ be the projection of $\mu$ onto the first two coordinates, i.e.,
\begin{equation}
\mu_{(1,2)}(A)=\mu(A\times\mathbb{R}_{+}), \qquad A\in \cB((\RR_+^*)^2)\label{eq:mu12}
\end{equation}
and let $\mu_{3}$ be the projection onto the third coordinate. Thus, $(S_{1},S_{2})\sim\mu_{(1,2)}$ and $V\sim\mu_{3}$ under $\mu$. Conversely, let $\mu\in\mathcal{M}(\mu_{1},\mu_{2})$, then we denote by
$\mu_\Lambda$ the law of $(S_{1},S_{2},\sqrt{\Lambda_\mu(S_1)})$ under $\mu$. Recalling (\ref{eq:Lambda_Positive}), 
$\mu_\Lambda$ is the unique probability
distribution on $(\RR_+^*)^2\times\mathbb{R}_{+}$ such that $(\mu_\Lambda)_{(1,2)}=\mu$ and $V^2=\Lambda_\mu(S_1)$ $\mu_\Lambda$-a.s. The following is an immediate consequence of the tower property.

\begin{lem}
\label{lem:projection_MV_M}
\begin{itemize}
\item[(i)] If $\mu\in\mathcal{M}_{V}(\mu_{1},\mu_{2})$,  then $\mu_{(1,2)}\in\mathcal{M}(\mu_{1},\mu_{2})$.
\item[(ii)] If  $\mu\in\mathcal{M}(\mu_{1},\mu_{2})$, then $\mu_\Lambda\in\mathcal{M}_V(\mu_{1},\mu_{2})$.
\end{itemize}
\end{lem}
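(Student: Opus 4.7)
The plan is to verify the defining conditions of $\mathcal{M}(\mu_1,\mu_2)$ and $\mathcal{M}_V(\mu_1,\mu_2)$ directly, exploiting the tower property in one direction and the fact that $V$ is a deterministic function of $S_1$ under $\mu_\Lambda$ in the other.

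For part (i), I would start from $\mu\in\mathcal{M}_V(\mu_1,\mu_2)$ and check the three requirements in Definition~\ref{def:Mbar} for $\mu_{(1,2)}$. The marginal conditions $S_1\sim\mu_1$ and $S_2\sim\mu_2$ under $\mu_{(1,2)}$ are immediate from the definition~(\ref{eq:mu12}) of the projection combined with the corresponding marginals of $\mu$. For the martingale property, I would apply the tower property to the conditioning $\sigma$-algebras $\sigma(S_1)\subset\sigma(S_1,V)$:
\[
\EE^{\mu_{(1,2)}}[S_2|S_1] \;=\; \EE^{\mu}[S_2|S_1] \;=\; \EE^{\mu}\bigl[\EE^{\mu}[S_2|S_1,V]\,\big|\,S_1\bigr] \;=\; \EE^{\mu}[S_1|S_1] \;=\; S_1,
\]
using (\ref{eq:def_MV}) in the third equality. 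The first equality is justified because $S_2$ is a function of the first two coordinates only, so expectations and conditional expectations given $\sigma(S_1)$ coincide under $\mu$ and $\mu_{(1,2)}$.

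For part (ii), I would start from $\mu\in\mathcal{M}(\mu_1,\mu_2)$ and examine the joint law $\mu_\Lambda$ of $(S_1,S_2,\sqrt{\Lambda_\mu(S_1)})$. The marginal conditions on $S_1,S_2$ are inherited from $\mu$. The key observation is that under $\mu_\Lambda$, $V$ is by construction a measurable function of $S_1$, so $\sigma(S_1,V)=\sigma(S_1)$ up to $\mu_\Lambda$-null sets. Consequently,
\[
\EE^{\mu_\Lambda}[S_2|S_1,V] \;=\; \EE^{\mu_\Lambda}[S_2|S_1] \;=\; \EE^{\mu}[S_2|S_1] \;=\; S_1,
\]
and similarly
\[
\EE^{\mu_\Lambda}\!\left[L\!\left(\tfrac{S_2}{S_1}\right)\Big|S_1,V\right] \;=\; \EE^{\mu_\Lambda}\!\left[L\!\left(\tfrac{S_2}{S_1}\right)\Big|S_1\right] \;=\; \Lambda_\mu(S_1) \;=\; V^2,
\]
which is exactly (\ref{eq:def_MV}).

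There is no real obstacle here; the only points to be careful about are (a) translating expectations under the projection $\mu_{(1,2)}$ to expectations under $\mu$ for functions of the first two coordinates, and (b) the fact that in $\mu_\Lambda$ the third coordinate is $\sigma(S_1)$-measurable, which collapses the joint conditioning $\sigma(S_1,V)$ to $\sigma(S_1)$ and makes both conditional expectation identities trivial.
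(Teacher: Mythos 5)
Your proof is correct and follows exactly the route the paper intends: the paper dismisses this lemma as ``an immediate consequence of the tower property,'' and your argument simply fills in those details --- the tower property over $\sigma(S_1)\subset\sigma(S_1,V)$ for part~(i), and the collapse $\sigma(S_1,V)=\sigma(S_1)$ (since $V$ is a function of $S_1$ under $\mu_\Lambda$) for part~(ii). Nothing further is needed.
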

%

While elements of $\mathcal{M}_{V}(\mu_{1},\mu_{2})$ are interpreted as general joint models for the S\&P and the VIX, the second part of the lemma shows that elements of $\mathcal{M}(\mu_{1},\mu_{2})$ can be seen as special models of a ``local volatility'' type where the VIX is a function of $S_{1}$. Finally, elements of $\bar{\mathcal{M}}(\mu_{1},\mu_{2})$ are even more particular models where this function is constant; they are of ``Black--Scholes'' type as far as the forward volatility is concerned.

We can now formulate the main result of this section, stating that absence of arbitrage is equivalent to the existence of risk-neutral measures. More precisely, absence of $S$-arbitrages and $(S,V)$-arbitrages turns out to be equivalent, meaning that the possibility of trading the FSLC at $T_1$ does not add any restriction in our model-free setting. Hence, we will simply speak of \emph{absence of arbitrage} in later sections.

\begin{thm}\label{thm:arbitrage-free}
The following assertions are equivalent:
\begin{enumerate}
\item[(i)] The market is free of $S$-arbitrage,
\item[(ii)] the market is free of $(S,V)$-arbitrage,
\item[(iii)] $\mathcal{M}(\mu_{1},\mu_{2}) \neq \emptyset$,
\item[(iv)] $\mathcal{M}_{V}(\mu_{1},\mu_{2}) \neq \emptyset$,
\item[(v)] $\mu_1$ and $\mu_2$ are in convex order, i.e., $\EE^{1}[f(S_1)] \le \EE^{2}[f(S_2)]$ for any convex function $f:\RR_+^*\rightarrow\RR$.
\end{enumerate}
\end{thm}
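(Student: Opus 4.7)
\bigskip
\noindent\textbf{Proof plan for Theorem~\ref{thm:arbitrage-free}.}

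The plan is to establish the chain of implications (v) $\Leftrightarrow$ (iii) $\Leftrightarrow$ (iv) $\Rightarrow$ (ii) $\Rightarrow$ (i) $\Rightarrow$ (v). Two of these pieces are essentially free: (iii) $\Leftrightarrow$ (iv) is exactly Lemma~\ref{lem:projection_MV_M} (the projection $\mu\mapsto\mu_{(1,2)}$ and the lift $\mu\mapsto\mu_\Lambda$ map between the two sets), and (iii) $\Leftrightarrow$ (v) is the classical theorem of Strassen, using Assumption~\ref{as:integrability} (finite first moments) to ensure applicability. Also, (ii) $\Rightarrow$ (i) is immediate: given $(u_{1},u_{2},\Delta)\in\cU_{S}^{0}$, the quadruple $(u_{1},u_{2},\Delta^{S},\Delta^{L}):=(u_{1},u_{2},\Delta,0)$ lies in $\cU_{S,V}^{0}$ with the same price, so any $S$-arbitrage is an $(S,V)$-arbitrage.

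For (iv) $\Rightarrow$ (ii), I take any $\mu\in\cM_{V}(\mu_{1},\mu_{2})$ and any $(u_{1},u_{2},\Delta^{S},\Delta^{L})\in\cU_{S,V}^{0}$, and integrate the pointwise inequality against $\mu$. Formally, the martingale conditions (\ref{eq:def_MV}) give $\EE^{\mu}[\Delta^{S}(S_{1},V)(S_{2}-S_{1})]=0$ and $\EE^{\mu}[\Delta^{L}(S_{1},V)(L(S_{2}/S_{1})-V^{2})]=0$, hence $\EE^{1}[u_{1}(S_{1})]+\EE^{2}[u_{2}(S_{2})]\ge 0$, ruling out arbitrage. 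The subtle point is integrability: the delta increments need not be integrable under $\mu$. I plan to resolve this by a standard truncation, replacing $\Delta^{S}$ and $\Delta^{L}$ by $\Delta^{S}\mathbf{1}_{\{|S_{1}|+|V|\le N\}}$ (and similarly for $\Delta^{L}$), applying the argument to the truncated portfolio, and letting $N\to\infty$; since $u_{1},u_{2}$ are integrable and the hedge contributions have mean zero after truncation, the inequality survives in the limit via Fatou/monotone convergence.

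For (i) $\Rightarrow$ (v), I argue by contrapositive. Assume $\mu_{1},\mu_{2}$ are not in convex order, so there is a convex function $f:\RR_{+}^{*}\to\RR$ with $\EE^{1}[f(S_{1})]>\EE^{2}[f(S_{2})]$. By approximating $f$ from below by an increasing sequence of bounded convex functions (e.g., truncating $f$ and applying convex envelope), I may assume without loss of generality that $f$ is Lipschitz and bounded, so $f\in L^{1}(\mu_{1})\cap L^{1}(\mu_{2})$ and a bounded measurable subgradient $\partial f$ exists on $\RR_{+}^{*}$. Then the portfolio $u_{1}:=-f$, $u_{2}:=f$, $\Delta(s_{1}):=-\partial f(s_{1})$ satisfies
\[
u_{1}(s_{1})+u_{2}(s_{2})+\Delta(s_{1})(s_{2}-s_{1})=f(s_{2})-f(s_{1})-\partial f(s_{1})(s_{2}-s_{1})\ge 0
\]
by the subgradient inequality, hence lies in $\cU_{S}^{0}$, while its price $-\EE^{1}[f(S_{1})]+\EE^{2}[f(S_{2})]$ is strictly negative by construction, producing an $S$-arbitrage.

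The main obstacle I anticipate is the integrability bookkeeping: both the $(iv)\Rightarrow(ii)$ step, where one must interchange the expectation with the pointwise superreplication inequality despite potentially non-integrable hedge terms, and the $(i)\Rightarrow(v)$ step, where one must ensure the arbitrage portfolio sits inside the $L^{1}$ class that defines $\cU_{S}^{0}$, require careful truncation arguments. Everything else, including the use of Lemma~\ref{lem:projection_MV_M} and of Strassen's theorem, is essentially bookkeeping once the integrability issues are handled.
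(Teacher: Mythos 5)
Your proposal is correct and follows essentially the same route as the paper: the same cycle of implications, Strassen's theorem for (v)$\Leftrightarrow$(iii), the lift $\mu\mapsto\mu_{\Lambda}$ from Lemma~\ref{lem:projection_MV_M} for (iii)$\Rightarrow$(iv), integration of the superreplication inequality against $\mu\in\cM_{V}(\mu_1,\mu_2)$ for (iv)$\Rightarrow$(ii), and the tangent-line portfolio built from a violating convex $f$ for (i)$\Rightarrow$(v). The extra integrability care you add is welcome (the paper glosses over it); the only small slip is that a general convex $f$ on $(0,\infty)$ cannot be approximated from below by \emph{bounded} convex functions (these are necessarily non-increasing), but the Lipschitz inf-convolution regularization you mention already gives $f_n\in L^1(\mu_i)$ since $\EE^{i}[S_i]<\infty$, so that step goes through.
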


\begin{proof} We show (ii) $\Rightarrow$ (i) $\Rightarrow$ (v) $\Rightarrow$ (iii) $\Rightarrow$ (iv) $\Rightarrow$ (ii); the first implication is obvious. To see that (i)~$\Rightarrow$~(v), assume that there exists a convex function $f:\RR_+^*\rightarrow\RR$ such that $\EE^{1}[f(S_1)] > \EE^{2}[f(S_2)]$. By convexity,  $f(s_{2})\ge f(s_{1})+f_{r}'(s_{1})(s_{2}-s_{1})$
for all $s_{1},s_{2}>0$, where $f'_r$ denotes the right derivative of $f$. As a consequence, the portfolio 
\[
u_{1}(s_{1})=- f(s_{1}),\qquad u_{2}(s_{2})= f(s_{2}),\qquad\Delta(s_{1})=- f_{r}'(s_{1})
\]
is an $S$-arbitrage. The implication (v)~$\Rightarrow$~(iii) is Strassen's theorem~\cite{strassen} and (iii)~$\Rightarrow$~(iv) is a direct consequence of Lemma~\ref{lem:projection_MV_M}(ii). Finally, let us prove that (iv)~$\Rightarrow$~(ii). Let $\mu\in \mathcal{M}_{V}(\mu_{1},\mu_{2})$ and let $(u_{1},u_{2},\Delta^S,\Delta^V)\in\cU_{S,V}^0$. Using~(\ref{eq:def_MV}), the price of this portfolio at time 0 is
\begin{multline*}
\mathbb{E}^{1}[u_{1}(S_{1})]+\mathbb{E}^{2}[u_{2}(S_{2})] 
= \mathbb{E}^{\mu}\left[u_{1}(S_{1})+u_{2}(S_{2})+\Delta^{S}(S_{1},V)(S_{2}-S_{1})+\Delta^{L}(S_{1},V)\left(L\left(\frac{S_{2}}{S_{1}}\right)-V^{2}\right)\right] \ge 0
\end{multline*}
and as consequence, the market is free of $(S,V)$-arbitrage.
\end{proof}

\begin{rem}\label{rem:extreme_rays_dim1}
Since $\mu_1$ and $\mu_2$ are probabilities on the real line with the same mean, they are in convex order
if and only if $\EE^{1}[(S_1-K)_+] \le \EE^{2}[(S_2-K)_+]$ for all $K\geq0$, i.e., calls with maturity $T_1$ are cheaper than calls with maturity $T_2$. See, e.g., \cite[Theorem~2.58]{FollmerSchied.11} for a proof.
\end{rem}

\section{Duality}\label{sec:duality}

In this section, we introduce the dual problems to sub/superreplicating the VIX and prove the absence of a duality gap as well as the existence of an extremal model. We let 
\begin{equation}
D_\mathrm{super}\equiv\sup_{\mu\in\mathcal{M}_{V}(\mu_{1},\mu_{2})}\mathbb{E}^{\mu}[V], \qquad D_\mathrm{sub}\equiv\inf_{\mu\in\mathcal{M}_{V}(\mu_{1},\mu_{2})}\mathbb{E}^{\mu}[V]. \label{eq:dual}
\end{equation}
These dual problems are of a non-standard type. Indeed, while maximizing (or minimizing) over $\mathcal{M}(\mu_{1},\mu_{2})$ is the ``martingale optimal transport'' problem (see, e.g, \cite{beiglbock,nutz,galichon}), the optimization over $\mathcal{M}_{V}(\mu_{1},\mu_{2})$ is quite different since one marginal (the law of the third component $V$) is not prescribed. The marginal is merely required to satisfy~(\ref{eq:def_MV}) and in particular, \eqref{eq:dual} is not a constrained Monge--Kantorovich transport problem.

Formally, the dual problem~\eqref{eq:dual} arises by permuting the inf and sup operators (written here for the superreplication problem) as shown below, where $\cM_+$ denotes the set of nonnegative measures on $(\RR_+^*)^2 \times \RR_+$ and acts as the set of Lagrange multipliers associated with the superreplication constraint:
\beaa
P_\mathrm{super} & \equiv & \inf_{(u_1,u_2,\Delta^S,\Delta^L)\in\cU_\mathrm{super}}\left\{\mathbb{E}^{1}[u_{1}(S_{1})]+\mathbb{E}^{2}[u_{2}(S_{2})]\right\} \\
& = & \inf_{(u_1,u_2,\Delta^S,\Delta^L) \;\mathrm{unconstrained}} \; \sup_{\mu\in \cM_+}\Big\{\mathbb{E}^{1}[u_{1}(S_{1})]+\mathbb{E}^{2}[u_{2}(S_{2})] \\
& & - \EE^\mu\left[ u_{1}(S_{1})+u_{2}(S_{2})+\Delta^{S}(S_{1},V)(S_{2}-S_{1})+\Delta^{L}(S_{1},V)\left(L\left(\frac{S_{2}}{S_{1}}\right)-V^{2}\right) - V\right]\Big\} \\
& \overset{(?)}{=} & \sup_{\mu\in \cM_+} \; \inf_{(u_1,u_2,\Delta^S,\Delta^L) \;\mathrm{unconstrained}} \Big\{\mathbb{E}^{1}[u_{1}(S_{1})]+\mathbb{E}^{2}[u_{2}(S_{2})] \\
& & - \EE^\mu\left[ u_{1}(S_{1})+u_{2}(S_{2})+\Delta^{S}(S_{1},V)(S_{2}-S_{1})+\Delta^{L}(S_{1},V)\left(L\left(\frac{S_{2}}{S_{1}}\right)-V^{2}\right) - V\right]\Big\} \\
& = & \sup_{\mu\in\mathcal{M}_{V}(\mu_{1},\mu_{2})}\mathbb{E}^{\mu}[V] \;\; \equiv \;\; D_\mathrm{super}.
\eeaa
The second equality stems from the fact that (a) if $(u_1,u_2,\Delta^S,\Delta^L)\in\cU_\mathrm{super}$ then the $\EE^\mu$ term is nonnegative for all $\mu$, hence the $\sup_{\mu\in \cM_+}$ term is equal to $\mathbb{E}^{1}[u_{1}(S_{1})]+\mathbb{E}^{2}[u_{2}(S_{2})]$, and (b) otherwise the $\EE^\mu$ term can be made as large and negative as desired by picking $\mu=\kappa \delta_{(s_1,s_2,v)}$ and letting $\kappa$ tend to $+\infty$, where $(s_1,s_2,v)$ are such that constraint (\ref{eq:constraint}) is violated, so the $\sup_{\mu\in \cM_+}$ term is equal to $+\infty$. The fourth equality uses that (a) if $\mu\in\mathcal{M}_{V}(\mu_{1},\mu_{2})$ then for all unconstrained $(u_1,u_2,\Delta^S,\Delta^L)$ the curly bracket term is equal to $\mathbb{E}^{\mu}[V]$, and (b) otherwise the curly bracket term can be made as large and negative as desired by picking a convenient $u_1$ or $u_2$ or $\Delta^S$ or $\Delta^L$ depending on which of the four conditions (\ref{eq:def_MV}) defining $\mathcal{M}_{V}(\mu_{1},\mu_{2})$ is not satisfied, so the infimum term is equal to $-\infty$.

It is well known that such a formal duality may fail for infinite dimensional linear programming problems. Next, we shall establish rigorously the absence of a duality gap for general options $f(s_{1},s_{2},v)$; this does not cause additional work compared to the VIX future.
We extend the definition of the dual problem to
\begin{equation*}
  D_\mathrm{super}\equiv\sup_{\mu\in\mathcal{M}_{V}(\mu_{1},\mu_{2})}\mathbb{E}^{\mu}[f] \label{eq:dualGeneralf}
\end{equation*}
and similarly extend $P_\mathrm{super}$ and $\cU_\mathrm{super}$ by writing $f(s_{1},s_{2},v)$ instead of $v$ on the right-hand side of~\eqref{eq:constraint}.

\begin{thm}\label{th:strongDuality}
 Let $f:\RR_+^*\times\RR_+^*\times\RR_+\to\RR$ be upper semicontinuous and satisfy
  \begin{equation}\label{eq:growthOfF}
   |f(s_{1},s_{2},v)|\leq C\big(1+s_{1}+ s_{2}+|L(s_{1})|+|L(s_{2})|+v^{2}\big)
  \end{equation}
  for some constant $C>0$. Then 
 $$
  D_\mathrm{super} \equiv\sup_{\mu\in\mathcal{M}_{V}(\mu_{1},\mu_{2})}\mathbb{E}^{\mu}[f] = \inf_{\cU_\mathrm{super}}\left\{\mathbb{E}^{1}[u_{1}(S_{1})]+\mathbb{E}^{2}[u_{2}(S_{2})]\right\} \equiv P_\mathrm{super}.
 $$
 Moreover, $D_\mathrm{super}\neq-\infty$ if and only if $\mathcal{M}_{V}(\mu_{1},\mu_{2})\neq\emptyset$, and in that case the supremum is attained.
\end{thm}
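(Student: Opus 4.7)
The plan is to establish weak duality first, handle the degenerate case $\mathcal{M}_V(\mu_1,\mu_2)=\emptyset$, prove the absence of a duality gap in the non-degenerate case via a Hahn--Banach type argument, and finally deduce attainment by a tightness argument. Weak duality $D_{\mathrm{super}}\le P_{\mathrm{super}}$ is immediate: for any $\mu\in\mathcal{M}_V(\mu_1,\mu_2)$ and any $(u_1,u_2,\Delta^S,\Delta^L)\in\cU_\mathrm{super}$, integrating the superreplication inequality against $\mu$ makes the $\Delta^S$- and $\Delta^L$-terms vanish by virtue of~\eqref{eq:def_MV}, leaving $\EE^\mu[f]\le\EE^1[u_1(S_1)]+\EE^2[u_2(S_2)]$. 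If $\mathcal{M}_V(\mu_1,\mu_2)=\emptyset$, Theorem~\ref{thm:arbitrage-free} produces a scalable $S$-arbitrage $(u_1^\ast,u_2^\ast,\Delta^\ast)$; adding arbitrary positive multiples of it to any superhedge preserves the superhedging property and drives the portfolio price to $-\infty$, so $P_{\mathrm{super}}=-\infty$, matching $D_{\mathrm{super}}=\sup\emptyset=-\infty$.

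For $P_{\mathrm{super}}\le D_{\mathrm{super}}$ in the non-degenerate case, I would proceed in two stages. First I would reduce to continuous portfolios, defining $P^c_{\mathrm{super}}$ as the infimum of $\EE^1[u_1(S_1)]+\EE^2[u_2(S_2)]$ over superhedges with continuous components of controlled growth, and showing $P^c_{\mathrm{super}}=P_{\mathrm{super}}$ by mollification and $L^1(\mu_i)$-approximation, using upper semicontinuity of $f$ to preserve the superhedging inequality up to an arbitrary $\varepsilon>0$. Then I would dualize the continuous problem: introduce the sublinear, monotone functional $\phi(h):=\inf\{\EE^1[u_1(S_1)]+\EE^2[u_2(S_2)]:u_1(s_1)+u_2(s_2)+\Delta^S(s_1,v)(s_2-s_1)+\Delta^L(s_1,v)(L(s_2/s_1)-v^2)\ge h(s_1,s_2,v)\}$ on continuous test functions $h$ of admissible growth; a Daniell--Stone / Hahn--Banach separation argument identifies the set of positive linear functionals dominated by $\phi$ with $\mathcal{M}_V(\mu_1,\mu_2)$, the marginal constraints coming from the freedom in $u_i$ and the conditional martingale and log-price constraints~\eqref{eq:def_MV} coming from the freedom in $\Delta^S,\Delta^L$. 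This yields $\phi(f)=D_{\mathrm{super}}$, hence $P_{\mathrm{super}}=P^c_{\mathrm{super}}=\phi(f)=D_{\mathrm{super}}$.

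Attainment is a compactness argument. For a maximizing sequence $(\mu_n)\subset\mathcal{M}_V(\mu_1,\mu_2)$, the first two marginals are fixed, and~\eqref{eq:Unconditional_expectation_V2} gives the uniform bound $\EE^{\mu_n}[V^2]=\sigma_{12}^2$; Markov and Prokhorov then extract a weakly convergent subsequence $\mu_n\Rightarrow\mu^\star$. The defining conditions~\eqref{eq:def_MV} pass to the limit by testing against bounded continuous functions multiplied by $S_1$, $L(S_2/S_1)$ and $V^2$, the needed uniform integrability being supplied by Assumption~\ref{as:integrability} and the uniform second-moment bound on $V$; upper semicontinuity of $f$ together with~\eqref{eq:growthOfF} then gives $\limsup_n\EE^{\mu_n}[f]\le\EE^{\mu^\star}[f]$, so $\mu^\star$ is optimal. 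The main obstacle is the non-transport nature of the dual: since the third marginal is free, standard martingale optimal transport duality theorems do not apply off the shelf. The two key observations that unlock the proof are (i)~the automatic identity~\eqref{eq:Unconditional_expectation_V2}, which fixes the dispersion of the unconstrained marginal uniformly over $\mathcal{M}_V(\mu_1,\mu_2)$, and (ii)~the quadratic growth allowance in~\eqref{eq:growthOfF}, which matches this bound exactly and is essential both to make the Lagrangian well-defined and to justify the upper-semicontinuous passage to the limit in the attainment step.
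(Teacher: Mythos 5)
Your weak duality, your treatment of the degenerate case, and the overall architecture (reduce to continuous hedges, dualize, extract an optimizer by compactness) are all reasonable, but the proposal leaves out the one construction on which everything else hinges: a \emph{superquadratic} function of $V$ that can be superhedged at finite cost. Concretely, your attainment step claims that the uniform integrability needed to pass \eqref{eq:def_MV} to the limit and to get $\limsup_n\EE^{\mu_n}[f]\le\EE^{\mu^\star}[f]$ is ``supplied by the uniform second-moment bound on $V$.'' It is not: the identity \eqref{eq:Unconditional_expectation_V2} gives $\EE^{\mu_n}[V^2]=\sigma_{12}^2$, which yields tightness of $\Law(V)$ and uniform integrability of $V$, but \emph{not} uniform integrability of $V^2$ --- and $V^2$ is exactly the order of growth of $f$ in \eqref{eq:growthOfF} and of the test integrand in the last constraint of \eqref{eq:def_MV}. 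Mass escaping to $v=\infty$ along a maximizing sequence can carry a fixed amount of $\EE^{\mu_n}[f]$ with it. The fix is the paper's Lemmas~\ref{le:delaValleeAppl} and~\ref{le:hedgeForVixOption}: de la Vall\'ee--Poussin applied to $|L(S_i)|$ produces a superlinear convex $\xi$ with $\EE^i[\xi(|L(S_i)|)]<\infty$, and combining $V^2=\EE^\mu[L(S_2/S_1)\mid S_1,V]$ with Jensen (or, on the primal side, the explicit superhedge \eqref{eq:hedgeForVixOption}) yields a \emph{uniform superquadratic} bound $\sup_\mu\EE^\mu[\phi(V^2)V^2]<\infty$. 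That bound is what makes both the limit passage and the semicontinuity of $\mu\mapsto\EE^\mu[f]$ legitimate; without it your compactness argument does not close.

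The same omission undermines the core duality step. You correctly identify the non-transport structure (free third marginal) as the main obstacle, but the Daniell--Stone/Hahn--Banach route you sketch requires verifying that every positive linear functional dominated by your sublinear $\phi$ on a space of quadratically growing continuous functions over the non-compact, non-closed domain $(\RR_+^*)^2\times\RR_+$ is represented by a countably additive probability measure landing in $\mathcal{M}_V(\mu_1,\mu_2)$. That is precisely where such arguments break on non-compact spaces (the paper's Remark~\ref{rk:rk:necessityOfNA} exhibits a duality gap for a closely analogous one-dimensional problem), and the verification again requires coercive, superhedgeable functions of $s_1$, $s_2$, $L(s_i)$ \emph{and} $v$ to confine the dual measures to a weakly compact set --- which is exactly the role of the set $\Pi$ and Lemma~\ref{le:compactnessOfPi} in the paper, after which Sion's minimax theorem replaces Hahn--Banach. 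A secondary issue: your first stage proves $P_{\mathrm{super}}=P^c_{\mathrm{super}}$ by mollifying a measurable superhedge, but mollification of $u_1,u_2,\Delta^S,\Delta^L$ does not preserve the \emph{pointwise} inequality \eqref{eq:constraint}, and this direction is both hard and unnecessary --- the paper obtains $P_{\mathrm{super}}\le P^c_{\mathrm{super}}\le D_{\mathrm{super}}\le P_{\mathrm{super}}$ and reads off the equality $P_{\mathrm{super}}=P^c_{\mathrm{super}}$ as a corollary rather than an input.
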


Choosing $f(s_{1},s_{2},v)= v$, this shows in particular that there is no duality gap for the superreplication of VIX futures, and that a worst-case model exists. The analogue for the subreplication follows by considering $f(s_{1},s_{2},v)= -v$.
In view of Theorem \ref{thm:arbitrage-free}, we also obtain yet another characterization for the absence of arbitrage.

\begin{corollary}\label{cor:arbitrage-free2}
The following assertions are equivalent:
\begin{enumerate}
\item[(i)] The market is arbitrage-free,
\item[(ii)] $D_\mathrm{super}\neq -\infty$,
\item[(iii)] $D_\mathrm{sub}\neq +\infty$.
\end{enumerate}
\end{corollary}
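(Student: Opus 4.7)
The plan is to deduce this corollary directly by combining Theorem~\ref{thm:arbitrage-free} with Theorem~\ref{th:strongDuality}, so the work is essentially bookkeeping rather than a new argument. By Theorem~\ref{thm:arbitrage-free}, assertion~(i) is equivalent to $\mathcal{M}_V(\mu_1,\mu_2)\neq\emptyset$, so it suffices to show that each of (ii) and (iii) is equivalent to the non-emptiness of $\mathcal{M}_V(\mu_1,\mu_2)$.

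For (ii), I would take $f(s_1,s_2,v)=v$. This $f$ is continuous and satisfies the growth bound~(\ref{eq:growthOfF}) because $|v|\le 1+v^2$. Thus Theorem~\ref{th:strongDuality} applies and yields that $D_\mathrm{super}=\sup_{\mu\in\mathcal{M}_{V}(\mu_{1},\mu_{2})}\mathbb{E}^{\mu}[V]\neq-\infty$ if and only if $\mathcal{M}_V(\mu_1,\mu_2)\neq\emptyset$ (using the standard convention $\sup\emptyset=-\infty$). This gives (i)$\Leftrightarrow$(ii).

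For (iii), the idea is to reduce the subreplication problem to the superreplication problem by sign reversal. Concretely, $D_\mathrm{sub}=\inf_{\mu\in\mathcal{M}_{V}(\mu_{1},\mu_{2})}\mathbb{E}^{\mu}[V]=-\sup_{\mu\in\mathcal{M}_{V}(\mu_{1},\mu_{2})}\mathbb{E}^{\mu}[-V]$, so I would apply Theorem~\ref{th:strongDuality} to $f(s_1,s_2,v)=-v$, which again satisfies~(\ref{eq:growthOfF}), to obtain that $-D_\mathrm{sub}\neq-\infty$ iff $\mathcal{M}_V(\mu_1,\mu_2)\neq\emptyset$, equivalently $D_\mathrm{sub}\neq+\infty$ iff $\mathcal{M}_V(\mu_1,\mu_2)\neq\emptyset$. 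This closes the chain and gives the three-way equivalence.

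There is no real obstacle here; the only thing to check carefully is that the convention used in the statement matches the one in Theorem~\ref{th:strongDuality} (namely, $\sup$ over an empty set is $-\infty$ and $\inf$ over an empty set is $+\infty$), and that the trivial linear payoff $\pm v$ fits the growth hypothesis~(\ref{eq:growthOfF}). Both are immediate, so the proof will be only a couple of sentences invoking the two earlier results.
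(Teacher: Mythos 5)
Your proposal is correct and matches the paper's (implicit) argument exactly: the corollary is stated as an immediate consequence of Theorem~\ref{thm:arbitrage-free} combined with the assertion in Theorem~\ref{th:strongDuality} that $D_\mathrm{super}\neq-\infty$ iff $\mathcal{M}_V(\mu_1,\mu_2)\neq\emptyset$, applied to $f=v$ and $f=-v$. The bookkeeping you flag (the empty-set conventions and the growth bound $|v|\le 1+v^2$) is all that is needed.
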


Several duality results related to ours where previously obtained in \cite{beiglbock, BeiglbockJuillet.12, nutz, phl-demarco}, among others.  In our setting, the fact the one marginal of the measures in $\mathcal{M}_{V}(\mu_{1},\mu_{2})$ is not prescribed, combined with the non-compactness of the state space, necessitates a novel technique of proof.

\subsection{Proof of the duality theorem}

In the rest of this section, we report the proof of Theorem~\ref{th:strongDuality} in several steps; the strategy is to reduce our duality to a tailor-made auxiliary duality via the Minimax Theorem. Apart from having no duality gap, the auxiliary duality needs to satisfy two requirements: its constraints should be less restrictive than the ones of the original problem, and they need to be strong enough to imply the continuity and compactness properties needed for the application of the Minimax Theorem.

The growth condition \eqref{eq:growthOfF} immediately implies that $D_\mathrm{super}=-\infty$ if and only if $\mathcal{M}_{V}(\mu_{1},\mu_{2})$ is empty. Thus, we may focus on the case $\mathcal{M}_{V}(\mu_{1},\mu_{2})\neq \emptyset$, and then \eqref{eq:growthOfF} implies that $D_\mathrm{super}$ is finite.

\subsubsection{Superhedge for a superlinearly growing function of $v^{2}$}

The main aim of this step is to find a superlinearly growing function of~$V^{2}$ which can be superhedged at a finite price. This will be crucial to prepare the ground for the Minimax Theorem later on.

\begin{lem}\label{le:delaValleeAppl}
  There exists a function $\xi:\RR_+\to\RR_+$ of superlinear growth such that $\EE^{i}[\xi(S_{i})]<\infty$ and $\EE^{i}[\xi(|L(S_{i})|)]<\infty$. Moreover, $\xi$ can be chosen convex, strictly increasing, and to satisfy $\xi(0)=\xi'(0)=0$.
\end{lem}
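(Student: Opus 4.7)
Since Assumption~\ref{as:integrability} provides $S_i,|\ln S_i|\in L^1(\mu_i)$ for $i=1,2$, I consider the four nonnegative integrable random variables $X_1:=S_1$, $X_2:=S_2$ (under $\mu_1,\mu_2$ respectively) and $X_3:=|L(S_1)|$, $X_4:=|L(S_2)|$. The plan is to produce a single convex nondecreasing $\xi$ of superlinear growth for which $\EE[\xi(X_j)]<\infty$ for all four $j$. This is precisely the conclusion of the de~la~Vall\'ee~Poussin theorem applied to the finite (hence uniformly integrable) family $\{X_1,\ldots,X_4\}$, but I would give the explicit construction below, because it makes it transparent how to enforce the extra conditions $\xi(0)=\xi'(0)=0$ and strict monotonicity.

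By dominated convergence, $\EE[X_j\mathbf{1}_{\{X_j>n\}}]\to 0$ as $n\to\infty$ for each $j$. Pick, inductively on $k$, a strictly increasing sequence $0<n_1<n_2<\cdots$ tending to $+\infty$ such that $\EE[X_j\mathbf{1}_{\{X_j>n_k\}}]\le 2^{-k}$ for every $k\ge 1$ and every $j\in\{1,\ldots,4\}$. Define $\psi(t):=\sum_{k\ge 1}(t-n_k)_+$; its right derivative $\psi'_r(t)=\#\{k:n_k<t\}$ is nondecreasing, vanishes on $[0,n_1]$, and tends to $+\infty$, so $\psi$ is convex with $\psi(0)=\psi'_r(0)=0$ and $\psi(t)/t\to+\infty$. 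Monotone convergence then yields
\[
\EE[\psi(X_j)]=\sum_{k\ge 1}\EE[(X_j-n_k)_+]\le \sum_{k\ge 1}\EE[X_j\mathbf{1}_{\{X_j>n_k\}}]\le \sum_{k\ge 1}2^{-k}<\infty.
\]

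Since $\psi$ is flat on $[0,n_1]$, it is not yet strictly increasing, so I would set $\xi(t):=\psi(t)+\phi(t)$ with the correction $\phi(t):=t-\ln(1+t)$. Then $\phi$ is convex and strictly increasing on $\RR_+$, with $\phi(0)=\phi'(0)=0$ and $\phi(t)\le t$; hence $\xi$ is convex and strictly increasing, satisfies $\xi(0)=\xi'(0)=0$, inherits the superlinear growth of $\psi$, and obeys $\EE[\xi(X_j)]\le \EE[\psi(X_j)]+\EE[X_j]<\infty$ for each $j$.

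There is no real obstacle; the computation is standard. The only point deserving care is to use a single sequence $\{n_k\}$ that controls all four random variables simultaneously. Applying a one-variable de~la~Vall\'ee~Poussin argument would yield four possibly different functions $\psi_j$, which one cannot combine by taking a minimum (the minimum of convex functions need not be convex) nor straightforwardly by summing (the integrability property $\EE[\psi_i(X_j)]<\infty$ is not guaranteed when $i\neq j$); the simultaneous choice of $\{n_k\}$ avoids this issue.
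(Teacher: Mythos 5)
Your construction is correct and is essentially the paper's proof made explicit: the paper simply invokes ``(the proof of) the de la Vall\'ee--Poussin theorem'' together with Assumption~\ref{as:integrability}, and that proof is precisely your tail-sum $\psi(t)=\sum_k (t-n_k)_+$ with a single sequence $\{n_k\}$ chosen to control all four integrable variables at once. Your additional correction term $t-\ln(1+t)$ to enforce strict monotonicity while keeping $\xi(0)=\xi'(0)=0$ and integrability is a valid way to secure the extra properties the lemma asserts.
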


\begin{proof}
  The existence of $\xi$ follows from (the proof of) the de la Vall\'ee--Poussin theorem \cite[Theorem~II.22]{DellacherieMeyer.78} and Assumption~\ref{as:integrability}.
\end{proof}

\begin{lem}\label{le:hedgeForVixOption}
  There exists a continuous, increasing function $\phi:\RR_+\to\RR_+$ with $\phi(0)=0$ and $\phi(\infty)=\infty$ such that setting $\Delta^{L}(v)\equiv -1-\phi(v^{2})$, we have
	\begin{equation}\label{eq:hedgeForVixOption}
	  \phi(v^{2})v^{2}\leq \Delta^{L}(v)[L(s_{2}/s_{1})-v^{2}] + L(s_{2})-L(s_{1}) + \xi(|L(s_{1})|) + \xi(|L(s_{2})|)
	\end{equation}  
	for all $(s_{1},s_{2},v)\in(\RR_+^*)^{2}\times\RR_+$.
\end{lem}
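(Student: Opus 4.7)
The plan is first to simplify \eqref{eq:hedgeForVixOption} algebraically. Using the additivity $L(s_2/s_1)=L(s_2)-L(s_1)$ and the prescribed form $\Delta^L(v)=-1-\phi(v^2)$, the term $\phi(v^2)v^2$ on the left of \eqref{eq:hedgeForVixOption} cancels with a matching term on the right, and so do the $\pm(L(s_2)-L(s_1))$ contributions. Consequently, \eqref{eq:hedgeForVixOption} reduces to the equivalent inequality
\[
\phi(v^2)\bigl(L(s_2)-L(s_1)\bigr)\le v^2 + \xi(|L(s_1)|)+\xi(|L(s_2)|),\qquad (s_1,s_2,v)\in(\RR_+^*)^2\times\RR_+.
\]

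Since $L(s_2)-L(s_1)\le|L(s_1)|+|L(s_2)|$, it suffices to establish the stronger bound $\phi(v^2)(|L(s_1)|+|L(s_2)|)\le v^2+\xi(|L(s_1)|)+\xi(|L(s_2)|)$. The next step is to apply Young's inequality $xy\le \xi(x)+\xi^*(y)$ (for $x,y\ge0$), where $\xi^*(y)\equiv \sup_{x\ge 0}(xy-\xi(x))$ is the convex conjugate of $\xi$, with $x=|L(s_i)|$ and $y=\phi(v^2)$ for $i\in\{1,2\}$, and to sum the two resulting inequalities:
\[
\phi(v^2)\bigl(|L(s_1)|+|L(s_2)|\bigr)\le \xi(|L(s_1)|)+\xi(|L(s_2)|)+2\xi^*(\phi(v^2)).
\]
It therefore suffices to construct a $\phi$ with the declared properties for which $2\xi^*(\phi(v^2))\le v^2$ for all $v\ge0$.

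Finally, I exhibit such a $\phi$. The superlinear growth of $\xi$ makes $\xi^*$ finite-valued on $[0,\infty)$; it is convex and non-decreasing, with $\xi^*(0)=0$ (since $\xi\ge 0$ and $\xi(0)=0$), continuous on $[0,\infty)$, and $\xi^*(\infty)=\infty$. To avoid issues caused by possible flat parts of $\xi^*$, I set $\psi(x)\equiv \xi^*(x)+x^2/2$, which is strictly convex, strictly increasing, and continuous on $[0,\infty)$, with $\psi(0)=0$ and $\psi(\infty)=\infty$. Then $\phi(t)\equiv \psi^{-1}(t/2)$ is a continuous, strictly increasing function on $[0,\infty)$ satisfying $\phi(0)=0$, $\phi(\infty)=\infty$, and $\xi^*(\phi(t))\le \psi(\phi(t))=t/2$, i.e., $2\xi^*(\phi(v^2))\le v^2$. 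The main subtlety of the argument is precisely this last step: the regularity of $\phi$ is not automatic from a naive inversion of $\xi^*$, and the quadratic perturbation is what guarantees a genuinely continuous and strictly increasing inverse.
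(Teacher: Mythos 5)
Your argument is correct and follows essentially the same route as the paper: both reduce \eqref{eq:hedgeForVixOption} to the one-dimensional inequality $\phi(v^{2})L(s_{2}/s_{1})\le v^{2}+\xi(|L(s_{1})|)+\xi(|L(s_{2})|)$ and then produce $\phi$ by Legendre--Young duality applied to the de la Vall\'ee--Poussin function $\xi$. The only difference is cosmetic: the paper packages the construction as $\phi(y)=\inf_{b>0}\frac{y+\zeta(b)}{b}$ with $\zeta(x)=\xi(|x|/2)$, which is automatically concave, increasing and continuous, whereas you invert the quadratically regularized conjugate $\xi^{*}(x)+x^{2}/2$ to secure the same regularity.
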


\begin{proof}
  Let $\Delta^{L}(v)\equiv -1-\phi(v^{2})$; then the desired inequality~\eqref{eq:hedgeForVixOption} is equivalent to
	$$
	  \phi(v^{2}) L(s_{2}/s_{1}) \leq v^{2} + \xi(|L(s_{1})|) + \xi(|L(s_{2})|)
	$$
	for all $(s_{1},s_{2},v)\in(\RR_+^*)^{2}\times\RR_+$. Let $\zeta(x)\equiv \xi(x/2)$ for $x\geq0$ and extend this function to $\RR$ by setting $\zeta(x)\equiv \zeta(|x|)$ for $x<0$. Then,
  \beaa
    \zeta(L(s_{2}/s_{1})) = \zeta(L(s_{2}) - L(s_{1})) \leq \zeta(2|L(s_{2})|) + \zeta(2|L(s_{1})|) = \xi(|L(s_{2})|) + \xi(|L(s_{1})|)
  \eeaa
	by the convexity and symmetry of $\zeta$. Thus, it is sufficient to find $\phi$ such that
	\begin{equation}\label{eq:proofHedgeForVixOption}
	  \phi(v^{2}) L(s_{2}/s_{1}) \leq v^{2} + \zeta(L(s_{2}/s_{1})).
  \end{equation}
	
	Indeed, define 
	\begin{equation}\label{eq:proofHedgeForVixOption2}
	  \phi(y)\equiv \inf_{b>0} \frac{y+\zeta(b)}{b},\quad\quad y\geq0.
  \end{equation}
	Then $\phi$ is nonnegative and concave, and $\phi(y)b \leq y+\zeta(b)$ for all $b>0$. This inequality is trivial for $b\leq0$, so that
	$$
	  \phi(y)b \leq y+\zeta(b),\quad\quad b\in\RR.
	$$
	In particular, choosing $y=v^{2}$ and $b=L(s_{2}/s_{1})$, we see that~\eqref{eq:proofHedgeForVixOption} is satisfied. 
	It follows from~\eqref{eq:proofHedgeForVixOption2} that $\phi$ is increasing, $\zeta(0)=\zeta'(0)=0$ implies that $\phi(0)=0$, and the superlinear growth of $\zeta$ yields that $\phi(\infty)=\infty$.
\end{proof}

\subsubsection{An auxiliary duality}

Let $(\Omega,\cF)$ be a measurable space and let $\fP(\Omega)$ be the set of all probability measures on $(\Omega,\cF)$. 
It is well known that infinite-dimensional linear programming duality fails in general, and often topological conditions are used to obtain a positive result. The following lemma holds in the space of measurable functions without any topology; instead, it is based on two features: finite-dimensional Lagrange multipliers and the no-arbitrage type condition~\eqref{eq:NAcondition}. Thus, it is in the spirit of the robust duality results provided in~\cite{BouchardNutz.13} for equality constraints and in~\cite{BayraktarZhou.14} for inequality constraints. We use the notation $\mu(f)\equiv \EE^{\mu}[f]$.

\begin{lem}\label{le:umbrella}
  Let $w=(w^{1},\dots,w^{n}): \Omega\to\R^{n}$ be measurable and such that for all $\alpha=(\alpha^{1},\dots,\alpha^{n})\in\R^{n}_{+}$, 
  \begin{equation}\label{eq:NAcondition}
    \alpha \cdot w\geq0\quad\mbox{implies}\quad \alpha \cdot w\equiv0.
  \end{equation}
  Moreover, let 
  \begin{equation}\label{eq:defPi}
    \Pi\equiv \{\pi\in\fP(\Omega) \,|\; \pi(|w^{i}|)<\infty, \; \pi(w^{i})\leq0,\;i=1,\dots,n\}.
\end{equation}
  For any measurable function $f: \Omega\to\R$, we have
  $$
    \sup_{\pi\in\Pi} \pi(f) = \inf \{x\in\R \,|\; x+\alpha \cdot w \geq f \mbox{ for some } \alpha\in\R^{n}_{+}\}
  $$
  and the infimum is attained if it is finite.
\end{lem}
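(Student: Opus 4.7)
The plan is to establish this as a finite-dimensional LP-type duality in $\R^{n+1}$, the finite dimensionality coming from having only $n$ Lagrange multipliers. Weak duality is direct: for any $\pi \in \Pi$ and any feasible pair $(x, \alpha) \in \R \times \R^n_+$ with $x + \alpha \cdot w \geq f$ pointwise, integrating under $\pi$ and using $\alpha^i \geq 0$ together with $\pi(w^i) \leq 0$ gives $x \geq \pi(f)$, whence $V := \sup_{\pi \in \Pi} \pi(f) \leq D$, where $D$ denotes the right-hand side infimum.

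For strong duality, assume $V$ is finite. I would work with the convex set
$$K = \{(\pi(w), \pi(f)) : \pi \in \fP(\Omega),\ \pi(|w^i|)<\infty,\ \pi(|f|)<\infty\} + \R^n_+ \times (-\infty, 0] \;\subset\; \R^{n+1},$$
which is \emph{upward-closed} in the first $n$ coordinates and \emph{downward-closed} in the last. By definition of $V$, the point $(0, V+\epsilon)$ lies outside $K$ for every $\epsilon > 0$. Finite-dimensional separation then yields a nonzero $(\beta, \gamma) \in \R^n \times \R$ with $\beta \cdot y + \gamma t \leq \gamma(V+\epsilon)$ on $K$; the recession directions of $K$ force $\beta \in -\R^n_+$ and $\gamma \geq 0$, so setting $\alpha := -\beta \in \R^n_+$ the inequality reads $\gamma t - \alpha \cdot y \leq \gamma(V+\epsilon)$ on all of $K$.

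In the nondegenerate case $\gamma > 0$, dividing by $\gamma$ and specializing to Dirac masses $\pi = \delta_\omega$ gives $f(\omega) \leq V+\epsilon + \bar\alpha \cdot w(\omega)$ pointwise with $\bar\alpha := \alpha/\gamma \in \R^n_+$, exhibiting a primal-feasible pair, so $D \leq V+\epsilon$; letting $\epsilon \downarrow 0$ closes the duality. In the degenerate case $\gamma = 0$, the same Dirac specialization yields only $\alpha \cdot w \geq 0$ pointwise, and \eqref{eq:NAcondition} then forces $\alpha \cdot w \equiv 0$, so the separating hyperplane is parallel to the $t$-axis and fails to properly separate $(0,V+\epsilon)$ from $K$. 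To rule this out I would either invoke a proper-separation theorem or pass to the quotient by the subspace $N := \{\alpha \in \R^n : \alpha \cdot w \equiv 0\}$, modulo which the condition \eqref{eq:NAcondition} strengthens to ``$\alpha \in \R^n_+$ with $\alpha \cdot w \geq 0$ implies $\alpha = 0$'', excluding the $\gamma = 0$ alternative.

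For attainment when $D$ is finite, take a minimizing sequence $(x_k, \alpha_k)$ with $x_k \to D$ and $x_k + \alpha_k \cdot w \geq f$. If $|\alpha_k|$ is bounded, compactness in $\R^n$ delivers a subsequential limit $\alpha_\infty \in \R^n_+$ yielding a feasible pair $(D, \alpha_\infty)$. Otherwise, rescaling $\bar\alpha_k := \alpha_k / |\alpha_k|$ and extracting a subsequence produces a unit vector $\bar\alpha_\infty \in \R^n_+$ with $\bar\alpha_\infty \cdot w \geq 0$, whence $\bar\alpha_\infty \cdot w \equiv 0$ by \eqref{eq:NAcondition}. Subtracting off the largest permissible nonnegative multiple of $\bar\alpha_\infty$ from each $\alpha_k$ keeps the inequality intact (since $\bar\alpha_\infty \cdot w \equiv 0$) and produces a new minimizing sequence with at least one additional vanishing coordinate; iterating at most $n$ times reduces to the bounded case. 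The main obstacle is the degenerate $\gamma = 0$ alternative in the separation step, which is precisely where the no-arbitrage condition \eqref{eq:NAcondition} does the essential work, and the accompanying quotient-space bookkeeping is the crux of the argument.
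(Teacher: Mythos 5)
Your overall architecture --- weak duality by integration, finite-dimensional separation for the reverse inequality, and the normalization/dimension-reduction argument for attainment --- is sound and genuinely different from the paper, whose proof of this lemma is a one-line reduction to the one-period superhedging duality of Bayraktar--Zhou; your weak-duality and attainment steps are complete and correct. The gap is exactly where you place it, in the alternative $\gamma=0$, and neither of your proposed fixes works as stated. Proper separation does \emph{not} exclude $\gamma=0$: condition \eqref{eq:NAcondition} permits a nonzero $\alpha_0\in N\cap\R^{n}_{+}$, and then the functional $(y,t)\mapsto-\alpha_0\cdot y$ satisfies $\sup_{K}(-\alpha_0\cdot y)=0=-\alpha_0\cdot 0$ while $\inf_{K}(-\alpha_0\cdot y)=-\infty$ because of the recession cone $\R^{n}_{+}\times(-\infty,0]$ you attached to $K$; so it is a \emph{proper} separator of $(0,V+\epsilon)$ from $K$ that carries no information. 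Strict separation would rule out $\gamma=0$ (it would force $\alpha\cdot w\geq c>0$, contradicting \eqref{eq:NAcondition}), but $K$ need not be closed and $(0,V+\epsilon)$ may well lie in $\overline{K}\setminus K$, so strict separation is unavailable --- this non-closedness is the actual difficulty that the cited theorem resolves.

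The quotient idea can be made to work, but the ``bookkeeping'' you defer is the proof. Concretely: set $W\equiv N^{\perp}=\mathrm{span}\,w(\Omega)$, note that $\pi(w)\in W$ always, and separate inside $W\times\R$ using $K'\equiv\{(\pi(w),\pi(f))\}+(\R^{n}_{+}\cap W)\times(-\infty,0]$, which still encodes the constraint because $(0,t)\in K'$ if and only if some $\pi\in\Pi$ has $\pi(f)\geq t$. The recession-cone argument now only yields that $\beta$ lies in the polar of $\R^{n}_{+}\cap W$; by polyhedrality this polar equals $-\R^{n}_{+}+N$, so $\beta=-\alpha+\nu$ with $\alpha\in\R^{n}_{+}$ and $\nu\in N$, and $\beta$ acts on $W$ as $-\alpha$. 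If $\gamma=0$, Dirac masses give $\alpha\cdot w\geq0$, hence $\alpha\cdot w\equiv0$ by \eqref{eq:NAcondition}, hence $\alpha\in N$ and $\beta\in N\cap W=\{0\}$, contradicting $(\beta,\gamma)\neq0$; if $\gamma>0$ you conclude as before. Without the restriction to $W$ and the polar-cone decomposition, the separation argument does not close, so as written the strong-duality half of your proof is incomplete at its crux.
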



\begin{proof}
  This is a special case of the one-step duality result in~\cite[Theorem~4.3]{BayraktarZhou.14}. Indeed, let $\cP\equiv\fP(\Omega)$ be the set of all probability measures on $(\Omega,\cF)$; then~\eqref{eq:NAcondition} is the robust no-arbitrage condition $\mathrm{NA}(\cP)$ of~\cite{BayraktarZhou.14,BouchardNutz.13}. Moreover, if we see $w$ as the increment of a stock price vector $S$ over a single period, then $\Pi$ is precisely the set of supermartingale measures for $S$. The lemma then follows from~\cite[Theorem~4.3]{BayraktarZhou.14} once we note that a set which is $P$-null for all $P\in\cP$ is necessarily empty since $\cP$ includes the Dirac measures at all points of~$\Omega$.
\end{proof}

\begin{rem}\label{rk:rk:necessityOfNA}
  Condition~\eqref{eq:NAcondition} is necessary for the validity of Lemma~\ref{le:umbrella}. Indeed, let $\Omega=\R_{+}^{*}$ and $n=1$. We consider the butterfly and put payoffs
  $$
    w(x)=(1-2|x-1/2|)^{+},\qquad f(x)=(1-x)^{+}
  $$  
  which are bounded continuous functions on $\Omega$. Then, $\Pi=\{\pi\in\fP(\Omega)\,|\, \pi((0,1))=0\}$ and hence $\sup_{\pi\in\Pi} \pi(f)=0$. However, the infimum equals one because $\lim_{x\to0+} \alpha w(x)=0$ for any $\alpha\geq0$ whereas $\lim_{x\to0+} f(x)=1$. In particular, there is a duality gap.
\end{rem}  

We shall apply the lemma as follows. Let $\Omega\equiv(\RR_+^*)^{2}\times\RR_+$ and let $\cf$ be its Borel $\sigma$-field. Moreover, let $\xi$ be the function introduced in Lemma~\ref{le:delaValleeAppl} and let
$$
 w^{i}= \xi(S_{i}) - \EE^{i}[\xi(S_{i})] - \xi(1), \quad  i\in\{1,2\},
$$
$$
 w^{i+2}= \xi(|L(S_{i})|) - \EE^{i}[\xi(|L(S_{i})|)]-1, \quad  i\in\{1,2\},
$$
$$
  w^{5}= \Phi(V) - m-1,
$$
where $\Phi(v)\equiv \phi(v^2)v^{2}$ and $m\geq0$ is the price of the right-hand side in~\eqref{eq:hedgeForVixOption} as implied by $\mu_{1}$ and $\mu_{2}$,
$$
  m\equiv \EE^2[L(S_{2})] - \EE^1[L(S_{1})] +\EE^1[\xi(|L(S_{1})|)] + \EE^2[\xi(|L(S_{2})|)].
$$
Let $w=(w^{1},\dots,w^{5})$. Since the functions $w^{i}$ all have strictly negative values at the point $(s_1,s_2,v)=(1,1,0)$, the no-arbitrage condition~\eqref{eq:NAcondition} holds.
We define $\Pi$ as in~\eqref{eq:defPi}. Moreover, let $G$ be the cone consisting of all functions of the form $g=\alpha\cdot w$ for some $\alpha\in\R^{n}_{+}$. Lemma~\ref{le:umbrella} yields the following.
	
\begin{corollary}\label{co:umbrella}
  Let $\tilde f: \Omega\to\RR$ be measurable. Then
	\begin{align*}
    \inf \{x\in\RR \,|\, x+g\geq \tilde f \mbox{ for some }g\in G\}
	  & = \sup_{\pi\in\Pi} \pi(\tilde f).
	\end{align*}
\end{corollary}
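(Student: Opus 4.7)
The plan is to obtain Corollary~\ref{co:umbrella} as a direct application of Lemma~\ref{le:umbrella} to the measurable space $\Omega=(\RR_+^*)^{2}\times\RR_+$ equipped with the $\RR^{5}$-valued measurable map $w=(w^{1},\dots,w^{5})$ constructed just above the corollary. By their very definitions, the cone $G$ and the set $\Pi$ appearing in the corollary coincide with the corresponding objects of Lemma~\ref{le:umbrella} taken with $n=5$; in particular, the condition ``$x+g\geq \tilde f$ for some $g\in G$'' is literally ``$x+\alpha\cdot w\geq \tilde f$ for some $\alpha\in\RR^{5}_{+}$''. The only hypothesis of the lemma that still requires verification is the no-arbitrage condition~\eqref{eq:NAcondition}.

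To verify~\eqref{eq:NAcondition}, I would evaluate $w$ at the single point $(s_{1},s_{2},v)=(1,1,0)\in\Omega$. There one has $\xi(|L(1)|)=\xi(0)=0$ and $\Phi(0)=\phi(0)\cdot 0=0$, so the coordinates satisfy
\begin{align*}
w^{i}(1,1,0)&=-\EE^{i}[\xi(S_{i})],\qquad i\in\{1,2\},\\
w^{i+2}(1,1,0)&=-\EE^{i}[\xi(|L(S_{i})|)]-1\leq -1,\qquad i\in\{1,2\},\\
w^{5}(1,1,0)&=-m-1\leq -1.
\end{align*}
The first two are strictly negative because $\xi$ is strictly increasing with $\xi(0)=0$ and $\mu_{i}$ is supported on $(0,\infty)$, so $\xi(S_{i})>0$ $\mu_{i}$-a.s. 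Hence every $w^{i}(1,1,0)$ is strictly negative. Consequently, if $\alpha\in\RR^{5}_{+}$ satisfies $\alpha\cdot w\geq 0$ on $\Omega$, then in particular $(\alpha\cdot w)(1,1,0)=\sum_{i}\alpha^{i}w^{i}(1,1,0)\geq 0$, which forces $\alpha=0$, so $\alpha\cdot w\equiv 0$ holds trivially. This yields~\eqref{eq:NAcondition}, and Corollary~\ref{co:umbrella} then follows from Lemma~\ref{le:umbrella}.

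The real work is not in this corollary but in its ingredients: the abstract duality Lemma~\ref{le:umbrella} (itself an instance of the robust duality of~\cite{BayraktarZhou.14}) and the construction of the fifth coordinate $w^{5}=\Phi(V)-m-1$, whose finite superhedging price under $(\mu_{1},\mu_{2})$ relies crucially on Lemmas~\ref{le:delaValleeAppl} and~\ref{le:hedgeForVixOption}. Given these, Corollary~\ref{co:umbrella} is bookkeeping, the one-point check of~\eqref{eq:NAcondition} being the only residual step and presenting no obstacle.
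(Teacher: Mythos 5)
Your proof is correct and is exactly the paper's argument: the paper likewise notes that all five coordinates $w^{i}$ are strictly negative at $(1,1,0)$, deduces the no-arbitrage condition~\eqref{eq:NAcondition}, and invokes Lemma~\ref{le:umbrella}. Your write-up merely makes the one-point evaluation explicit, which is a harmless elaboration of the same step.
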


\subsubsection{Proof of duality for VIX options} 

We shall apply Corollary~\ref{co:umbrella} in combination with the Minimax Theorem. For that, we need the following facts.

\begin{lem}\label{le:compactnessOfPi}
  The set $\Pi$ is weakly compact. Moreover, if $f$ is as in Theorem~\ref{th:strongDuality}, then $\pi\mapsto \pi(f)$ is weakly upper semicontinuous on $\Pi$.
\end{lem}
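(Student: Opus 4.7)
The plan is to obtain weak compactness of $\Pi$ via Prokhorov's theorem, and then upgrade the standard Portmanteau upper semicontinuity from bounded-above functions to the growing $f$ by a truncation-plus-uniform-integrability argument.

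First, I would unpack the defining constraints of $\Pi$. Each bound $\pi(w^{i})\leq 0$ translates into $\pi(\xi(S_{i}))\leq c_{i}$, $\pi(\xi(|L(S_{i})|))\leq c'_{i}$, and $\pi(\Phi(V))\leq m+1$ for universal constants $c_{i},c'_{i}$. Since the compact subsets of $\Omega=(\RR_+^*)^{2}\times\RR_+$ are of the form $[\delta,R]^{2}\times[0,R']$, tightness follows from three Markov bounds: $\pi(S_{i}>R)\leq c_{i}/\xi(R)\to 0$ as $R\to\infty$ (since $\xi$ is superlinear, hence unbounded); $\pi(S_{i}<\delta)\leq c'_{i}/\xi(|L(\delta)|)\to 0$ as $\delta\to 0^{+}$ (since $|L(\delta)|\to\infty$); and $\pi(V>R')\leq(m+1)/\Phi(R')\to 0$ (since $\Phi(v)/v^{2}=\phi(v^{2})\to\infty$). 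Closedness of $\Pi$ under weak convergence then comes from Fatou's inequality applied to the continuous nonnegative generators $\xi(S_{i})$, $\xi(|L(S_{i})|)$, $\Phi(V)$: if $\pi_{n}\in\Pi$ and $\pi_{n}\to\pi$ weakly, then $\pi(\xi(S_{i}))\leq\liminf_{n}\pi_{n}(\xi(S_{i}))\leq c_{i}$, and likewise for the other generators, so $\pi\in\Pi$. Combined with Prokhorov, this gives weak compactness of $\Pi$.

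For the second assertion, set $h(s_{1},s_{2},v):=1+s_{1}+s_{2}+|L(s_{1})|+|L(s_{2})|+v^{2}$, so that $|f|\leq Ch$ by~\eqref{eq:growthOfF}. The uniform bounds above, combined with the superlinearities $\xi(x)/x\to\infty$ and $\Phi(v)/v^{2}=\phi(v^{2})\to\infty$, yield that each of $S_{1}$, $S_{2}$, $|L(S_{1})|$, $|L(S_{2})|$, $V^{2}$ is uniformly $\pi$-integrable over $\pi\in\Pi$ (de la Vall\'ee--Poussin); hence $\sup_{\pi\in\Pi}\pi(h\,\mathbf{1}_{\{h\geq M\}})\to 0$ as $M\to\infty$. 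Given $\pi_{n}\to\pi$ in $\Pi$, I decompose $f=(f\wedge M)+(f-M)^{+}$. The tail satisfies $(f-M)^{+}\leq Ch\,\mathbf{1}_{\{Ch\geq M\}}$, so its $\pi_{n}$-integral is uniformly small in $n$ by the preceding uniform integrability. The truncation $f\wedge M$ is upper semicontinuous and bounded above by $M$, so Portmanteau gives $\limsup_{n}\pi_{n}(f\wedge M)\leq\pi(f\wedge M)$; dominated convergence (using $|f\wedge M|\leq Ch\in L^{1}(\pi)$) then lets $M\to\infty$ to obtain $\pi(f\wedge M)\to\pi(f)$, and assembly yields $\limsup_{n}\pi_{n}(f)\leq\pi(f)$.

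The main technical obstacle is that the state space is neither compact nor closed (it omits $s_{i}=0$), while $f$ is merely upper semicontinuous with polynomial-plus-logarithmic-plus-quadratic growth. Both difficulties are defused by the specific form of the auxiliary payoffs $w^{i}$ defining $\Pi$: the constraints $\pi(w^{i})\leq 0$ deliver exactly the superlinear growth controls on $S_{i}$, $|L(S_{i})|$, and $V^{2}$ that simultaneously secure tightness and the uniform integrability needed to truncate $f$. This is why Lemmas~\ref{le:delaValleeAppl} and~\ref{le:hedgeForVixOption} were engineered to produce $\xi$ and $\phi$ with superlinear growth in the first place.
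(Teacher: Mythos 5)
Your proposal is correct and follows essentially the same route as the paper: tightness and closedness of $\Pi$ from the superlinear/superquadratic moment bounds encoded in the constraints $\pi(w^{i})\leq 0$ plus Prokhorov, and upper semicontinuity of $\pi\mapsto\pi(f)$ by truncating $f$, applying Portmanteau to the truncation, and controlling the tail via the uniform integrability supplied by $\xi$ and $\Phi$ (the paper's bound \eqref{eq:UIbound}). Your write-up merely spells out in more detail the steps the paper compresses (e.g., using the lower-semicontinuity half of Portmanteau on the nonnegative generators where the paper uses monotone convergence on $w^{i}\wedge N$).
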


\begin{proof} 
  To see that $\Pi$ is closed, let $\pi_n\in \Pi$ converge weakly to $\pi$. Let $i\in\{1,\ldots,5\}$ and note that $w^i$ is continuous and bounded from below. Hence, $w^i\wedge N$ is bounded and $\pi(w^i\wedge N) = \lim_{n\rightarrow \infty} \pi_n(w^i\wedge N)$  for $N\in\NN$. In view of $\pi_n(w^i\wedge N)\le \pi_n(w^i)\le 0$, it follows that $\pi(w^i\wedge N)\le 0$ and then monotone convergence yields $\pi(w^i)\le 0$, i.e., $\pi\in\Pi$.
  The tightness of $\Pi$ follows from 
  $$
    \lim_{s\to \infty} \xi(s)=\infty, \quad \lim_{s\to 0,\infty} \xi(|L(s)|)=\infty,\quad \lim_{v\to\infty} \Phi(v)=\infty
  $$
  and
  \begin{align}\label{eq:UIbound}
    \sup_{\pi\in\Pi} \EE^{\pi}\Bigg[\sum_{i\in\{1,2\}} \{\xi(S_{i})+\xi(|L(S_{i})|)\}+\Phi(V)\Bigg]<\infty;
  \end{align}
  indeed, the above is bounded by $\sum_{i\in\{1,2\}} \{\EE^{i}[\xi(S_{i})] + \EE^{i}[\xi(|L(S_{i})|)]\} + m  +2\xi(1)+3$ for all $\pi\in\Pi$. Thus, compactness holds by Prokhorov's theorem.  
Given $N\in\R$, the semicontinuity of $f$ implies the semicontinuity of $\pi\mapsto \pi(f\wedge N)$, by the Portmanteau theorem. To see that $\pi\mapsto \pi(f)$ is semicontinuous as well, we pass to the limit $N\to\infty$ using the growth condition on $f$, the superlinear and superquadratic growth of $\xi$ and $\Phi$, respectively, and~\eqref{eq:UIbound}.
\end{proof}

We have now prepared all the tools for the proof of the main result.

\begin{proof}[Proof of Theorem~\ref{th:strongDuality}]
  We first show that the supremum is attained. Indeed, note that $\mathcal{M}_{V}(\mu_1,\mu_2)\subset \Pi$ is a closed subset because the defining conditions~\eqref{eq:def_MV} are determined by bounded continuous test functions. Thus, it is weakly compact by Lemma~\ref{le:compactnessOfPi} which also shows that $\pi\mapsto \pi(f)$ is weakly upper semicontinuous and thus attains its supremum on $\mathcal{M}_{V}(\mu_1,\mu_2)$.
  
  Next, we turn to the duality. Let us write $H$ for the set of all functions $h$ of the form
  $$
	  h (s_{1},s_{2},v) = u_{1}(s_{1})-\EE^{1}[u_{1}(S_1)] + u_{2}(s_{2}) - \EE^{2}[u_{2}(S_2)] + \Delta^{S}(s_{1},v)(s_{2}-s_{1}) + \Delta^{L}(s_{1},v)(L(s_{2}/s_{1})-v^{2})
 $$
	where $u_{i}\in L^{1}(\mu_{i})$ and $\Delta^{S},\Delta^{L}$ are measurable. Note that $H$ is a linear space. Moreover, we let $H^{c}$ (resp.~$H^{cb}$) be the subspace consisting of those functions $h$ whose coefficients  $u_{i},\Delta^{S},\Delta^{L}$ are continuous (resp.\ continuous and bounded). By the definition of the primal problem, we then have
	$$
	  P_\mathrm{super} = \inf \{x\in\RR \,|\, x+h\geq f \mbox{ for some }h\in H\} \leq   \inf \{x\in\RR \,|\, x+h\geq f \mbox{ for some }h\in H^c\} \equiv P_\mathrm{super}^c .
	$$
	 Using Lemma~\ref{le:hedgeForVixOption} and in particular the definition of $w^{5}$, we see that for every $g\in G$ there exists $h\in H^c$ such that $g\leq h$. Together with the fact that $H^c$ is a linear space, that yields the first equality in
	\begin{align*}
	  P_\mathrm{super}^c
	  &=\inf \{x\in\RR \,|\, x+g+h\geq f \mbox{ for some }g\in G,\, h\in H^c\}\\
	  &\leq \inf \{x\in\RR \,|\, x+g+h\geq f \mbox{ for some }g\in G,\, h\in H^{cb}\}\\
	  & = \inf_{h\in H^{cb}} \inf \{x\in\RR \,|\, x+g\geq f-h \mbox{ for some }g\in G\}\\
	  & = \inf_{h\in H^{cb}} \sup_{\pi\in\Pi} \pi(f-h),
	\end{align*}
	where the last step used Corollary~\ref{co:umbrella} applied to the function $\tilde f\equiv f-h$ with a fixed $h\in H^{cb}$; note that $\tilde f$ is upper semicontinuous and satisfies~\eqref{eq:growthOfF}. Hence, by Lemma~\ref{le:compactnessOfPi}, the linear mapping $\pi\mapsto \pi(f-h)$ is weakly upper semicontinuous on the weakly compact set $\Pi$. On the other hand, $h\mapsto \pi(f-h)$ is linear on $H^{cb}$ for fixed $\pi\in\Pi$. In particular, $(\pi,h)\mapsto \pi(f-h)$ is ``concave-convexlike'' in the sense of~\cite{Sion.58} and it follows via the Minimax Theorem in the form of \cite[Theorem~4.2]{Sion.58} that
	$$
	  P_\mathrm{super}^c  \leq  \inf_{h\in H^{cb}} \sup_{\pi\in\Pi} \pi(f-h) = \sup_{\pi\in\Pi} \inf_{h\in H^{cb}} \pi(f-h).
	$$
	By linearity of $H^{cb}$, we have $\inf_{h\in H^{cb}} \pi(f-h) =-\infty$ if $\pi\in\Pi\setminus\cm_{V}(\mu_{1},\mu_{2})$, and since $\pi(h)=0$ for $\pi\in\cm_{V}(\mu_{1},\mu_{2})$, we conclude that
	$$
	  P_\mathrm{super}^c
	  \leq \sup_{\pi\in\cm_{V}(\mu_{1},\mu_{2})} \inf_{h\in H^{cb}} \pi(f-h)
	  \; = \; \sup_{\pi\in\cm_{V}(\mu_{1},\mu_{2})} \pi(f) \; = \; D_\mathrm{super}.
	$$
	As a consequence, $P_\mathrm{super}\leq P_\mathrm{super}^{c} \leq D_\mathrm{super}$.
	
	It remains to prove the converse inequality. Let $\mu\in\mathcal{M}_{V}(\mu_{1},\mu_{2})$ and $(u_1, u_2, \Delta^S, \Delta^L)\in \cU_\mathrm{super}$; thus
\begin{equation*}
 u_{1}(s_{1})+u_{2}(s_{2})+\Delta^{S}(s_{1},v)(s_{2}-s_{1})+\Delta^{L}(s_{1},v)\left(L\left(\frac{s_{2}}{s_{1}}\right)-v^{2}\right)\ge f(s_{1},s_{2},v).
\end{equation*}
Taking expectations on both sides yields
$
\mathbb{E}^{1}[u_{1}(S_{1})]+\mathbb{E}^{2}[u_{2}(S_{2})] \ge \mu(f)
$ 
and as $\mu\in\mathcal{M}_{V}(\mu_{1},\mu_{2})$ and $(u_1, u_2, \Delta^S, \Delta^L)\in \cU_\mathrm{super}$ were arbitrary, it follows that $P_\mathrm{super} \ge D_\mathrm{super}$. This completes the proof that $P_\mathrm{super}=P_\mathrm{super}^c=D_\mathrm{super}$.
\end{proof}

\begin{rem}\label{rk:contStrategies}
  One consequence of the above proof is that $P_\mathrm{super}=P_\mathrm{super}^c$, i.e., the value of the primal problem does not change if we require the functions $u_{i}, \Delta^{S}, \Delta^{L}$ to be continuous. 
\end{rem}

\subsection{Local volatility property of the superreplication price}\label{sec:localVolSuperrep}

The following result shows that the superreplication bound for the VIX future can be computed on the dual side by merely maximizing over models of ``local volatility'' type, i.e., models $\hat\mu\in \mathcal{M}_{V}(\mu_{1},\mu_{2})$ of the form $\hat\mu=\mu_{\Lambda}$ for some $\mu\in\mathcal{M}(\mu_{1},\mu_{2})$, with the notation introduced above Lemma~\ref{lem:projection_MV_M}.
This property greatly simplifies the computation of $D_\mathrm{super}$; it is particular to superreplication of the VIX because it is based on the concavity of the square root.

\begin{prop}
\label{prop:optimality_of_local_vol}We have
\begin{equation*}
D_\mathrm{super}=\sup_{\mu\in\mathcal{M}(\mu_{1},\mu_{2})}\EE^{\mu_{\Lambda}}\left[V\right] =\sup_{\mu\in\mathcal{M}(\mu_{1},\mu_{2})}\EE^1\left[\sqrt{\Lambda_\mu(S_1)}\right].
\end{equation*}
\end{prop}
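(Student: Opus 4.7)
The plan is to obtain the identity as a two-sided inequality, exploiting that the square root is concave so that conditional Jensen becomes sharp exactly on ``local-volatility'' models.

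First, I would establish the easy direction $D_\mathrm{super}\geq\sup_{\mu\in\mathcal{M}(\mu_{1},\mu_{2})}\EE^{1}[\sqrt{\Lambda_\mu(S_1)}]$. Given $\mu\in\mathcal{M}(\mu_{1},\mu_{2})$, Lemma~\ref{lem:projection_MV_M}(ii) provides the lifted measure $\mu_\Lambda\in\mathcal{M}_V(\mu_{1},\mu_{2})$ under which $V=\sqrt{\Lambda_\mu(S_1)}$ a.s. by construction, so
$$
  \EE^{\mu_\Lambda}[V]=\EE^{\mu_\Lambda}\!\left[\sqrt{\Lambda_\mu(S_1)}\right]=\EE^{1}\!\left[\sqrt{\Lambda_\mu(S_1)}\right],
$$
since $\Lambda_\mu(S_1)$ is a function of $S_{1}$ and $S_1\sim\mu_1$. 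Taking the supremum over $\mu\in\mathcal{M}(\mu_{1},\mu_{2})$ yields the claimed inequality.

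The core step is the reverse bound. For arbitrary $\mu\in\mathcal{M}_V(\mu_{1},\mu_{2})$, I would apply conditional Jensen's inequality to the concave function $x\mapsto\sqrt{x}$, using~(\ref{eq:Expectation_V2}):
$$
  \EE^{\mu}[V\mid S_{1}]=\EE^{\mu}\!\left[\sqrt{V^{2}}\;\Big|\;S_{1}\right]\leq\sqrt{\EE^{\mu}[V^{2}\mid S_{1}]}=\sqrt{\Lambda_{\mu}(S_{1})}.
$$
Taking unconditional expectations and using that $S_{1}\sim\mu_{1}$ and that $\Lambda_{\mu}(S_{1})=\Lambda_{\mu_{(1,2)}}(S_{1})$ (which follows from the tower property together with the second identity in~(\ref{eq:def_MV})), one gets $\EE^{\mu}[V]\leq\EE^{1}[\sqrt{\Lambda_{\mu_{(1,2)}}(S_{1})}]$. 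Since $\mu_{(1,2)}\in\mathcal{M}(\mu_{1},\mu_{2})$ by Lemma~\ref{lem:projection_MV_M}(i), supping over $\mu\in\mathcal{M}_V(\mu_{1},\mu_{2})$ gives the reverse inequality.

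The two inequalities together yield the claim. The first equality in the statement comes for free since $\{\mu_\Lambda : \mu\in\mathcal{M}(\mu_{1},\mu_{2})\}\subset\mathcal{M}_V(\mu_{1},\mu_{2})$ and $\EE^{\mu_\Lambda}[V]=\EE^{1}[\sqrt{\Lambda_\mu(S_{1})}]$ as computed above. There is no real obstacle here: the entire argument rests on the single observation that $\sqrt{\cdot}$ is concave, so Jensen gives the one-sided bound and equality is recovered precisely when the conditional law of $V^{2}$ given $S_{1}$ is a Dirac mass—i.e., on the ``local volatility'' models $\mu_\Lambda$.
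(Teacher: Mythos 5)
Your proof is correct and follows essentially the same route as the paper's: the easy direction via the lift $\mu\mapsto\mu_\Lambda$ from Lemma~\ref{lem:projection_MV_M}(ii), and the reverse direction via conditional Jensen applied to the concave square root together with~(\ref{eq:Expectation_V2}) and the projection $\mu\mapsto\mu_{(1,2)}$. (Only a cosmetic remark: the identity $\Lambda_\mu(S_1)=\Lambda_{\mu_{(1,2)}}(S_1)$ needs no appeal to~(\ref{eq:def_MV}); it holds simply because a conditional expectation of a function of $(S_1,S_2)$ given $S_1$ depends only on the joint law of $(S_1,S_2)$.)
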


\begin{proof}
We have $D_\mathrm{super}\geq \sup_{\mu\in\mathcal{M}(\mu_{1},\mu_{2})}\EE^{\mu_{\Lambda}}\left[V\right]$ by Lemma~\ref{lem:projection_MV_M} and the second equality holds by the definition of $\mu_{\Lambda}$. To see the converse inequality, let $\mu\in\mathcal{M}_{V}(\mu_{1},\mu_{2})$ and consider $\hat\mu\equiv (\mu_{(1,2)})_{\Lambda}$; cf.\ (\ref{eq:mu12}) for the notation. Then, using (\ref{eq:Expectation_V2}) and the concavity of the square root,
\begin{equation*}
\mathbb{E}^{\mu}[V] = \mathbb{E}^{\mu}\left[\mathbb{E}^{\mu}\left[\sqrt{V^2}\middle|S_1\right]\right] \le \mathbb{E}^{\mu}\left[\sqrt{\mathbb{E}^{\mu}[V^2|S_1]}\right] = \mathbb{E}^{\mu}\left[\sqrt{\Lambda_{\mu}(S_1)}\right]= \mathbb{E}^{\mu}\left[\sqrt{\Lambda_{\mu_{(1,2)}}(S_1)}\right] =\mathbb{E}^{\hat\mu}\left[V\right]. 
\end{equation*}
As $\mu\in\mathcal{M}_{V}(\mu_{1},\mu_{2})$ was arbitrary, the result follows.
\end{proof}

\section{Characterization of market smiles for which the classical bounds are optimal}\label{sec:characterization_classical_is_optimal}

In this section, we characterize the market smiles $\mu_{1},\mu_{2}$ for which the classical upper bound $P_\mathrm{super}\leq\sigma_{12}$ and the classical lower bound $P_\mathrm{sub}\geq0$ from Section~\ref{sec:classicalBounds} are already optimal.  
Let us first consider the subreplication problem; here our result shows that the classical bound is never sharp in practice.

\begin{thm}\label{thm:D=0}
  We have $P_\mathrm{sub}=0$ if and only if $\mu_1=\mu_2$.
\end{thm}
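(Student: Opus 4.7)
My plan is to pass to the dual problem and exhibit/extract an explicit optimal measure. By Theorem~\ref{th:strongDuality} (applied with $f(s_1,s_2,v)=-v$, which clearly satisfies the growth bound~\eqref{eq:growthOfF}), we have
\[
  P_\mathrm{sub} \;=\; D_\mathrm{sub} \;=\; \inf_{\mu\in\mathcal{M}_{V}(\mu_1,\mu_2)} \mathbb{E}^\mu[V],
\]
and, when $\mathcal{M}_{V}(\mu_1,\mu_2)\neq\emptyset$, the infimum is attained. Since $V\geq 0$ under any $\mu\in\mathcal{M}_V(\mu_1,\mu_2)$, we always have $D_\mathrm{sub}\geq 0$.

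For the direction ``$\mu_1=\mu_2\Rightarrow P_\mathrm{sub}=0$'', I would exhibit an explicit $\mu\in\mathcal{M}_V(\mu_1,\mu_2)$ achieving $\mathbb{E}^\mu[V]=0$. The natural candidate is the ``no-move'' model: let $(S_1,S_2)\sim \mu_{(1,2)}$ be the law of $(S_1,S_1)$ under $\mu_1$, and set $V\equiv 0$. All four conditions in~\eqref{eq:def_MV} are trivially verified (the martingale condition since $S_2=S_1$; the FSLC condition since $L(1)=0$ and $V^2=0$), so $\mu\in\mathcal{M}_V(\mu_1,\mu_2)$ and $\mathbb{E}^\mu[V]=0$.

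For the converse ``$P_\mathrm{sub}=0\Rightarrow \mu_1=\mu_2$'', I would first note that $P_\mathrm{sub}=0$ in particular means $D_\mathrm{sub}<\infty$, so $\mathcal{M}_V(\mu_1,\mu_2)\neq\emptyset$ by Corollary~\ref{cor:arbitrage-free2}. By the attainment statement in Theorem~\ref{th:strongDuality} there exists an optimal $\mu^*\in\mathcal{M}_V(\mu_1,\mu_2)$ with $\mathbb{E}^{\mu^*}[V]=0$; since $V\geq 0$ $\mu^*$-a.s., this forces $V=0$ $\mu^*$-a.s. Taking expectations in the last identity of~\eqref{eq:def_MV} then gives $\mathbb{E}^{\mu^*}[L(S_2/S_1)]=\mathbb{E}^{\mu^*}[V^2]=0$. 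Since $L(S_2/S_1)=L(S_2)-L(S_1)$, the left-hand side equals $\ell_2-\ell_1=\sigma_{12}^2$, hence $\sigma_{12}^2=0$. Finally, $\mu^*\in\mathcal{M}_V$ implies convex order via Theorem~\ref{thm:arbitrage-free}, and then the strict convexity of $L$ (as already noted at the end of Section~2.1) forces $\mu_1=\mu_2$.

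I do not anticipate a serious obstacle here: the argument is essentially a dual reading of $P_\mathrm{sub}=0$. The only minor care points are (a) invoking attainment (rather than only a minimizing sequence) to turn $\mathbb{E}^\mu[V]=0$ into $V=0$ $\mu$-a.s., and (b) being careful that $\sigma_{12}^2=0$ under convex order implies $\mu_1=\mu_2$ by strict convexity of $L$, which is the standard equality case of Jensen applied to the martingale $(S_1,S_2)$.
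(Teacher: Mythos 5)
Your proof is correct and follows essentially the same route as the paper: duality plus attainment (Theorem~\ref{th:strongDuality}) to extract an optimal $\mu$ with $V=0$ a.s., then $\sigma_{12}^2=\mathbb{E}^\mu[V^2]=0$ and strict convexity of $L$ to conclude $\mu_1=\mu_2$, with the law of $(S_1,S_1,0)$ under $\mu_1$ for the converse. The extra care points you flag (attainment rather than a minimizing sequence, and the equality case of Jensen) are exactly the ones the paper relies on.
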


\begin{proof}
Assume that $P_\mathrm{sub}=0$, thus $D_\mathrm{sub}=0$ and the infimum defining $D_\mathrm{sub}$ is attained (Theorem \ref{th:strongDuality}). Hence,  there exists $\mu\in\mathcal{M}_{V}(\mu_{1},\mu_{2})$ such that $\mathbb{E}^{\mu}[V]=0$. Since $V\ge 0$ $\mu$-a.s., this means that $V=0$ $\mu$-a.s.\ and thus $\sigma_{12}^2=0$ by (\ref{eq:Unconditional_expectation_V2}), whence $\mu_1=\mu_2$. Conversely, if $\mu_1=\mu_2$, let $\mu$ be the law of $(S_{1},S_{1},0)$ under $\mu_{1}$. Then $\mu\in\mathcal{M}_{V}(\mu_{1},\mu_{2})$ and $\mathbb{E}^{\mu}[V]=0$, so that $P_\mathrm{sub}=D_\mathrm{sub}=0$.
\end{proof}

Next, we consider the superreplication problem where the characterization turns out to be nondegenerate. We recall that two distributions $\nu_{1},\nu_{2}$ on $\mathbb{R}_{+}^*\times\mathbb{R}$ are said to be in convex order if 
$
\mathbb{E}^{\nu_{1}}[g] \le \mathbb{E}^{\nu_{2}}[g] 
$
for any convex function $g\,:\,\mathbb{R}_{+}^*\times\mathbb{R}\rightarrow\mathbb{R}$.

\pagebreak[2]

\begin{thm}
\label{thm:D=Dbar}The following assertions are equivalent:
\begin{enumerate}
\item[(i)] $P_\mathrm{super}=\sigma_{12}$,
\item[(ii)] there exists $\mu\in\mathcal{M}_{V}(\mu_{1},\mu_{2})$ such that $V=\sigma_{12}$ $\mu$-a.s.,
\item[(iii)] $\bar{\mathcal{M}}(\mu_{1},\mu_{2})\neq\emptyset$,
\item[(iv)] $\Law_{\mu_{1}}(S_{1}, L(S_{1})-\ell_1)$ and $\Law_{\mu_{2}}(S_{2}, L(S_{2}) - \ell_2)$ are in convex order,
\item[(iv')] $\Law_{\mu_{1}}(S_{1}, L(S_{1})+\sigma_{12}^2)$ and $\Law_{\mu_{2}}(S_{2}, L(S_{2}))$ are in convex order.
\end{enumerate}
\end{thm}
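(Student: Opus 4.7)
The plan is to establish (i)$\Leftrightarrow$(ii)$\Leftrightarrow$(iii)$\Leftrightarrow$(iv)$\Leftrightarrow$(iv'), in that order. The first two equivalences are essentially immediate from the duality theorem and Lemma~\ref{lem:projection_MV_M}; the core analytic content sits in the step (iii)$\Leftrightarrow$(iv), which I would handle via a two-dimensional Strassen theorem, and the last equivalence is a straightforward translation of coordinates.

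For (i)$\Leftrightarrow$(ii): Theorem~\ref{th:strongDuality} gives $P_\mathrm{super}=D_\mathrm{super}$ with the supremum attained, while equation~\eqref{eq:Unconditional_expectation_V2} forces $\EE^{\mu}[V^{2}]=\sigma_{12}^{2}$ for every $\mu\in\cM_{V}(\mu_{1},\mu_{2})$. Jensen's (or Cauchy--Schwarz) inequality then yields $\EE^{\mu}[V]\le\sigma_{12}$ with equality iff $V\equiv \sigma_{12}$ $\mu$-a.s. Hence the extremal $\mu$ achieves $\sigma_{12}$ precisely when it is degenerate in $V$, proving both directions. For (ii)$\Leftrightarrow$(iii): given $\mu$ as in (ii), the projection $\mu_{(1,2)}\in\cM(\mu_{1},\mu_{2})$ satisfies $\Lambda_{\mu_{(1,2)}}(S_{1})=\EE^{\mu}[V^{2}\mid S_{1}]=\sigma_{12}^{2}$ by~\eqref{eq:Expectation_V2}, hence lies in $\bar{\cM}(\mu_{1},\mu_{2})$. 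Conversely, for $\nu\in\bar{\cM}(\mu_{1},\mu_{2})$ the lift $\nu_{\Lambda}\in\cM_{V}(\mu_{1},\mu_{2})$ supplied by Lemma~\ref{lem:projection_MV_M}(ii) satisfies $V=\sqrt{\Lambda_{\nu}(S_{1})}=\sigma_{12}$ $\nu_{\Lambda}$-a.s.

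The step I expect to be the heart of the matter is (iii)$\Leftrightarrow$(iv). The idea is to recognize that $\mu\in\bar{\cM}(\mu_{1},\mu_{2})$ is exactly a martingale coupling, on $(\RR_{+}^{*}\times\RR)^{2}$, between the two planar laws $\nu_{i}\equiv\Law_{\mu_{i}}(S_{i},L(S_{i})-\ell_{i})$: the $\mu$-martingale property of $S$ and the constancy of $\Lambda_{\mu}$ combine (using $\sigma_{12}^{2}=\ell_{2}-\ell_{1}$) into the single vector identity
\[
\EE^{\mu}\!\left[(S_{2},\,L(S_{2})-\ell_{2})\,\middle|\,S_{1},\,L(S_{1})-\ell_{1}\right]=(S_{1},\,L(S_{1})-\ell_{1}),
\]
where the conditioning on the pair collapses to conditioning on $S_{1}$ because the second component is a function of the first. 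Conversely, any martingale coupling between $\nu_{1}$ and $\nu_{2}$ pushes forward to an element of $\bar{\cM}(\mu_{1},\mu_{2})$ by projecting onto the first coordinates. By Assumption~\ref{as:integrability} both $\nu_{i}$ have finite first moments, so Strassen's theorem in $\RR^{2}$ characterizes the existence of such a coupling as the convex order between $\nu_{1}$ and $\nu_{2}$, which is exactly~(iv). The main technical point to verify carefully is that the marginals in the first coordinate are preserved (automatic from the projection structure) and that the $\RR^{n}$ version of Strassen may be invoked under just $L^{1}$ integrability of both marginals; I would cite the standard reference (e.g. Strassen's original paper~\cite{strassen}) rather than reprove it.

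Finally, (iv)$\Leftrightarrow$(iv') is immediate: translating both of the compared measures by the same vector $(0,-\ell_{2})\in\RR^{2}$ sends the pair in (iv') to the pair in (iv), since $\ell_{1}+\sigma_{12}^{2}=\ell_{2}$ by~\eqref{eq:sigma12}, and convex order on $\RR^{2}$ is invariant under common translations (because $g(\cdot+c)$ is convex whenever $g$ is). This closes the chain of equivalences.
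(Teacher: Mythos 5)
Your proposal is correct and follows essentially the same route as the paper: duality with attainment plus the equality case of Jensen/Cauchy--Schwarz for (i)$\Leftrightarrow$(ii), the projection/lift of Lemma~\ref{lem:projection_MV_M} for (ii)$\Leftrightarrow$(iii), the identification of $\bar{\mathcal{M}}(\mu_{1},\mu_{2})$ with martingale couplings of the planar laws followed by Strassen's theorem for (iii)$\Leftrightarrow$(iv), and the translation $\sigma_{12}^{2}=\ell_{2}-\ell_{1}$ for (iv)$\Leftrightarrow$(iv'). The only (immaterial) difference is organizational: the paper closes the cycle via a direct (iii)$\Rightarrow$(i) using the lift $\mu_{\Lambda}$ and the classical bound \eqref{eq:P_le_sigma12}, whereas you prove each equivalence in both directions.
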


\begin{proof}
(i) $\Rightarrow$ (ii)
By Theorem \ref{th:strongDuality}, the supremum defining $D_\mathrm{super}=P_\mathrm{super}$ is attained, so there exists $\mu\in\mathcal{M}_{V}(\mu_{1},\mu_{2})$
such that $\mathbb{E}^{\mu}[V]=\sigma_{12}$.
Thus, by (\ref{eq:Unconditional_expectation_V2}), $\mathbb{E}^{\mu}\left[V^2\right] =\sigma_{12}^2=\mathbb{E}^{\mu}\left[V\right]^{2}$ which by the strict convexity of the square implies $V=\sigma_{12}$ $\mu$-a.s.

(ii) $\Rightarrow$ (iii) Let $\mu\in\mathcal{M}_{V}(\mu_{1},\mu_{2})$ be
such that $V=\sigma_{12}$ $\mu$-a.s. By Lemma \ref{lem:projection_MV_M}(i), the projection $\mu_{(1,2)}$ of $\mu$ onto the first two coordinates belongs to $\mathcal{M}(\mu_{1},\mu_{2})$, and
\[
\mathbb{E}^{\mu_{(1,2)}}\left[L\left(\frac{S_{2}}{S_{1}}\right)\middle|S_{1}\right]=\mathbb{E}^{\mu}\left[L\left(\frac{S_{2}}{S_{1}}\right)\middle|S_{1}\right]=\mathbb{E}^{\mu}\left[\mathbb{E}^{\mu}\left[L\left(\frac{S_{2}}{S_{1}}\right)\middle|S_{1},V\right]\middle|S_{1}\right]=\mathbb{E}^{\mu}\left[V^{2}\middle|S_{1}\right]=\sigma_{12}^{2}.
\]
As a consequence, $\mu_{(1,2)}\in\bar{\mathcal{M}}(\mu_{1},\mu_{2})$ and in particular
$\bar{\mathcal{M}}(\mu_{1},\mu_{2})\neq\emptyset.$

(iii) $\Leftrightarrow$ (iv) $\Leftrightarrow$ (iv') Let us define $M_{1}=(S_{1},L(S_{1})-\ell_1)$ and $M_{2}=(S_{2},L(S_{2}) - \ell_2)$ as well as 
\[
\mu_{M_{1}}(dx,dy)=\mu_{1}(dx)\delta_{L(x)-\ell_1}(dy),\qquad\mu_{M_{2}}(dx,dy)=\mu_{2}(dx)\delta_{L(x) - \ell_2}(dy).
\]
Then, $\bar{\mathcal{M}}(\mu_{1},\mu_{2})$ is precisely the set of probability measures $\nu$
on $(\mathbb{R}_{+}^*)^{2}$ such that
\begin{equation*}
M_{1} \sim \mu_{M_{1}},\qquad M_{2}\sim\mu_{M_{2}},\qquad\mathbb{E}^{\nu}\left[M_{2}|M_{1}\right]=M_{1}.
\end{equation*}
By Strassen's theorem \cite{strassen}, this set is nonempty if and only if $\mu_{M_{1}}$ and $\mu_{M_{2}}$ are in convex order, which yields the equivalence of~(iii) and~(iv), and then also of~(iv') due to $\sigma_{12}^2 = \ell_2 -\ell_1$.

(iii) $\Rightarrow$ (i) 
Let $\mu\in\bar{\mathcal{M}}(\mu_{1},\mu_{2})$ and recall the definition of $\mu_\Lambda$ above Lemma \ref{lem:projection_MV_M}. Since $\mu_\Lambda\in\mathcal{M}_{V}(\mu_{1},\mu_{2})$ and $V=\sqrt{\Lambda_\mu(S_1)}=\sigma_{12}$
$\mu_\Lambda$-a.s., we have $P_\mathrm{super}=D_\mathrm{super}\ge\sigma_{12}$. Conversely, $P_\mathrm{super}\leq\sigma_{12}$ by~\eqref{eq:P_le_sigma12}.
\end{proof}


The necessary and sufficient conditions in Theorem~\ref{thm:D=Dbar} are not straightforward to check given the marginals. While the convex ordering of two measures on $\R$ can be verified by computing the one-parameter family of call option prices, cf.\ Remark~\ref{rem:extreme_rays_dim1}, no simple family of test functions exists in two or more dimensions. See also Johansen \cite{johansen72,johansen74} and Scarsini \cite{scarsini} for more precise (negative) results.
Thus, we are interested in simpler criteria, at the expense of not being sharp. The following is a condition that involves only call and put prices, and we shall give more conditions in the context of the examples in Sections~\ref{sec:mu2_compact_support} and~\ref{sec:examples_where_classical_is_optimal}.

\begin{prop}\label{prop:general_sufficient_condition_D<sigma12}
Denote by $C_i(K)$ and $P_i(K)$ the prices at time 0 of the call and put options with maturity~$T_i$ and strike~$K$. Let 
\begin{equation*}
\Psi_1(K) \equiv \Phi_1\left(K + \frac{1}{2}\sigma_{12}^{2}\tau\right), \quad \Psi_2(K) \equiv \Phi_2(K),
\label{eq:E11}
\end{equation*}
where
\begin{equation*}
\Phi_i(K) = \begin{cases}
\frac{C_i(e^{K})}{S_{0}}-\int_{e^{K}}^{\infty}\frac{C_i(k)}{k^{2}}dk & \mbox{if}\,\,K>\ln S_{0}\\
\ln S_{0}-K+\frac{P_i(e^{K})}{S_{0}}-\int_{e^{K}}^{S_{0}}\frac{P_i(k)}{k^{2}}dk-\int_{S_{0}}^{\infty}\frac{C_i(k)}{k^{2}}dk & \mbox{otherwise.}
\end{cases}
\end{equation*}
If there exists $K\in\mathbb{R}$ such that $\Psi_1(K)>\Psi_2(K)$,
then $P_\mathrm{super}<\sigma_{12}$.
\end{prop}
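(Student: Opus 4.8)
The plan is to use Theorem~\ref{thm:D=Dbar}: since $P_\mathrm{super}\le\sigma_{12}$ always by~\eqref{eq:P_le_sigma12}, establishing $P_\mathrm{super}<\sigma_{12}$ is the same as ruling out $P_\mathrm{super}=\sigma_{12}$, which by the equivalence of~(i) and~(iv') in Theorem~\ref{thm:D=Dbar} amounts to showing that $\Law_{\mu_{1}}(S_{1},L(S_{1})+\sigma_{12}^{2})$ and $\Law_{\mu_{2}}(S_{2},L(S_{2}))$ fail to be in convex order. Hence it is enough to exhibit a single convex function $g\colon\RR_+^*\times\RR\to\RR$ with $\EE^{1}\bigl[g(S_{1},L(S_{1})+\sigma_{12}^{2})\bigr]>\EE^{2}\bigl[g(S_{2},L(S_{2}))\bigr]$, and I would read off such a $g$ directly from the hypothesis $\Psi_{1}(K)>\Psi_{2}(K)$.

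The first computational step is to recognize what $\Phi_i$ is: by the Carr--Madan/Breeden--Litzenberger representation applied to the log--hockey-stick $s\mapsto(\ln s-K)^{+}$ --- whose second-derivative measure is $e^{-K}\delta_{e^{K}}(dk)-k^{-2}\mathbf{1}_{\{k>e^{K}\}}\,dk$, and whose kink sits above or below the forward $S_{0}$ according as $K>\ln S_{0}$ or $K\le\ln S_{0}$, which is exactly the case split in the definition --- one obtains $\Phi_i(K)=\EE^{i}\bigl[(\ln S_{i}-K)^{+}\bigr]$, a finite quantity by Assumption~\ref{as:integrability}. Writing $c:=\tfrac12\sigma_{12}^{2}\tau$, so that $\Psi_{1}(K)=\Phi_1(K+c)=\EE^{1}[(\ln S_{1}-K-c)^{+}]$ and $\Psi_{2}(K)=\EE^{2}[(\ln S_{2}-K)^{+}]$, I would then take
\[
  g(x,y):=\Bigl(-\tfrac{\tau}{2}\,y-K\Bigr)^{+},
\]
which is convex on $\RR_+^*\times\RR$ as the positive part of an affine map. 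Since $-\tfrac{\tau}{2}L(s)=\ln s$, we have $-\tfrac{\tau}{2}\bigl(L(S_{1})+\sigma_{12}^{2}\bigr)=\ln S_{1}-c$ and $-\tfrac{\tau}{2}L(S_{2})=\ln S_{2}$, hence
\[
  \EE^{1}\bigl[g(S_{1},L(S_{1})+\sigma_{12}^{2})\bigr]=\EE^{1}\bigl[(\ln S_{1}-c-K)^{+}\bigr]=\Psi_{1}(K),\qquad
  \EE^{2}\bigl[g(S_{2},L(S_{2}))\bigr]=\EE^{2}\bigl[(\ln S_{2}-K)^{+}\bigr]=\Psi_{2}(K).
\]
Thus, whenever $\Psi_{1}(K)>\Psi_{2}(K)$ for some $K$, this $g$ witnesses the failure of the convex order in~(iv') of Theorem~\ref{thm:D=Dbar}, and therefore $P_\mathrm{super}<\sigma_{12}$.

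The main obstacle is the identification $\Phi_i(K)=\EE^{i}[(\ln S_i-K)^{+}]$: the Carr--Madan decomposition of $(\ln s-K)^{+}$ has to be carried out carefully around the kink at $e^{K}$, and one must check that the two branches of the formula (which differ by an $i$-independent put--call-parity constant, harmless because it cancels in $\Psi_{1}-\Psi_{2}$) are reconciled so that the clean identity holds for every admissible smile; the precise convex test function has to be matched to the precise form of $\Phi_i$. Everything else is immediate once the right $g$ is in hand, and the shape of $g$ is in fact forced: it must be jointly convex in $(x,y)$ while restricting, on the two graphs $\{y=L(x)\}$ and $\{y=L(x)+\sigma_{12}^{2}\}$, to the log--hockey-sticks of strikes $K$ and $K+c$ respectively --- which is exactly what the shift $\Psi_{1}(K)=\Phi_1\!\bigl(K+\tfrac12\sigma_{12}^{2}\tau\bigr)$ is engineered to produce.
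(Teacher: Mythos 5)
Your proof is correct and follows essentially the same route as the paper: both reduce the claim via Theorem~\ref{thm:D=Dbar} to a failure of convex order witnessed by the log--call payoff, and both identify $\Phi_i(K)=\EE^{i}[(\ln S_i-K)^{+}]$ through the Carr--Madan formula (using $\EE^{i}[S_i]=S_0$ to kill the forward term and to reconcile the two branches at $K=\ln S_0$ via put--call parity, so there is in fact no residual constant to worry about). The only cosmetic difference is that you exhibit the two-dimensional convex witness $g(x,y)=(-\tfrac{\tau}{2}y-K)^{+}$ directly, whereas the paper phrases the same test as a one-dimensional convex-order condition on $\Law_{\mu_1}(\ln S_1-\tfrac12\sigma_{12}^2\tau)$ versus $\Law_{\mu_2}(\ln S_2)$.
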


\begin{proof}
By Theorem \ref{thm:D=Dbar}, a necessary condition for $P_\mathrm{super}=\sigma_{12}$
is that $\Law_{\mu_{1}}(\ln S_{1}-\frac{1}{2}\sigma_{12}^{2}\tau)$ and $\Law_{\mu_{1}}(\ln S_{2})$
are in convex order. Since $\mathbb{E}^{1}[\ln S_{1}-\frac{1}{2}\sigma_{12}^{2}\tau]=\mathbb{E}^{2}[\ln S_{2}]$,
this is equivalent to
\[
\forall K\in\mathbb{R},\quad\mathbb{E}^{1}[(\ln S_{1}-\frac{1}{2}\sigma_{12}^{2}\tau-K)_{+}]\le\mathbb{E}^{2}[(\ln S_{2}-K)_{+}].
\]
Now, from the Carr--Madan formula \cite{carr-madan}, 
\begin{multline*}
\mathbb{E}^{i}[(\ln S_{i}-K)_{+}]=(\ln S_{0}-K)_{+}+\frac{\mathbb{E}^{i}[S_{i}]-S_{0}}{S_{0}}\mathbf{1}_{\ln S_{0} \ge K}
-\int_{0}^{S_{0}}\mathbf{1}_{k\ge e^{K}}\frac{P_i(k)}{k^{2}}dk \\ -\int_{S_{0}}^{\infty}\mathbf{1}_{k\ge e^{K}}\frac{C_i(k)}{k^{2}}dk +\frac{O_i(e^{K})}{S_{0}}
\end{multline*}
where $O_i(e^{K})$ denotes the
price of the out-the-money option, i.e., $O_i(e^{K})=P_{i}(e^{K})$ if $e^{K}<S_{0}$ 
and $O_i(e^{K})=C_{i}(e^{K})$ otherwise. Since $\mathbb{E}^{i}[S_i] = S_{0}$, we have $\mathbb{E}^{i}[(\ln S_{i}-K)_{+}]=\Phi_i(K)$ and
\begin{equation*}
\Psi_1(K) = \mathbb{E}^{1}[(\ln S_{1}-\frac{1}{2}\sigma_{12}^{2}\tau-K)_{+}], \quad \Psi_2(K) = \mathbb{E}^{2}[(\ln S_{2}-K)_{+}].
\end{equation*}
Therefore, the necessary condition for $P_\mathrm{super} = \sigma_{12}$ can be stated as $\Psi_{1}(K)\le \Psi_{2}(K)$ for all $K\in\R$. 
If this condition is not met, then $P_\mathrm{super}<\sigma_{12}$ since $P_\mathrm{super} \le \sigma_{12}$ by~\eqref{eq:P_le_sigma12}.
\end{proof}


\section{A family of functionally generated portfolios}\label{sec:analytical_portfolios}

In this section, we introduce a new family of portfolios that sub/superreplicate the VIX. Their main merits are their simple functional form and that their sub/superreplication property is guaranteed by construction for all values of the underlying---in contrast to numerical solutions of the linear programming problems.
While these portfolios are not optimal in general, i.e., their prices do not attain $P_\mathrm{sub}$ and $P_\mathrm{super}$, they often improve the classical bounds from Section~\ref{sec:classicalBounds}, in particular the lower bound. In specific examples, they even turn out to be optimal, as we shall see in the subsequent sections.

Our portfolios are based on concave/convex payoffs of both the S\&P~500 and its logarithm. Let us start with superreplicating portfolios. For a convex function $\varphi:\mathbb{R}_{+}^{*}\times\mathbb{R}\rightarrow\mathbb{R}$
and $s_{1}>0$, we denote by
\[
\varphi^*_\mathrm{super}(s_{1})\equiv\sup_{v\ge0}\left\{ v-\varphi(s_{1},L(s_{1})+v^{2})\right\} 
\]
the smallest function $u_{1}:\mathbb{R}_{+}^{*}\rightarrow\mathbb{R}\cup\{+\infty\}$
such that $u_{1}(s_{1})+\varphi(s_{1},L(s_{1})+v^{2})\ge v$ for all $s_{1}>0$ and $v\ge0$.
Moreover, we denote by $\partial_{i,r}\varphi$ the right 
derivative of $\varphi$ with respect to its $i$-th argument.

\begin{prop}
\label{prop:convex_portfolio} Let $\varphi:\mathbb{R}_{+}^{*}\times\mathbb{R}\rightarrow\mathbb{R}$
be convex and define
\begin{alignat}{2}
u_{1}(s_{1}) & =  \varphi^*_\mathrm{super}(s_{1}),      &\Delta^{S}(s_{1},v) & =  -\partial_{1,r}\varphi(s_{1},L(s_{1})+v^{2}),\label{eq:portfolio_convex}\\
u_{2}(s_{2})&=\varphi(s_{2},L(s_{2})),    \qquad &\Delta^{L}(s_{1},v)&=-\partial_{2,r}\varphi(s_{1},L(s_{1})+v^{2}).\nonumber 
\end{alignat}
Then the superreplication constraint (\ref{eq:constraint}) holds.
\end{prop}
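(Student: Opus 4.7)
The plan is to substitute the portfolio defined in~\eqref{eq:portfolio_convex} directly into the left-hand side of the superreplication constraint~\eqref{eq:constraint} and show it dominates $v$, by reducing the problem to two ingredients: a supporting-hyperplane inequality for the convex function $\varphi$, and the defining property of $u_1 = \varphi^*_\mathrm{super}$. First, since $L(x) = -(2/\tau)\ln x$ satisfies $L(s_2/s_1) = L(s_2) - L(s_1)$, the constraint to be verified rewrites as
$$
u_1(s_1) + \varphi(s_2, L(s_2)) - \partial_{1,r}\varphi(p)(s_2 - s_1) - \partial_{2,r}\varphi(p)\bigl(L(s_2) - (L(s_1) + v^2)\bigr) \ge v,
$$
where $p \equiv (s_1, L(s_1) + v^2)$. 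The key observation is that the two correction terms now have the form of the linear part of a tangent plane to $\varphi$ at $p$, evaluated at $q \equiv (s_2, L(s_2))$; the portfolio is precisely engineered so that convexity of $\varphi$ at $p$ can be invoked.

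The argument then proceeds in two steps. By the supporting-hyperplane inequality for the convex $\varphi$ at $p$,
$$
\varphi(q) \ge \varphi(p) + \partial_{1,r}\varphi(p)(q_1 - p_1) + \partial_{2,r}\varphi(p)(q_2 - p_2),
$$
so the three $\varphi$-involving terms on the left-hand side collapse to a lower bound of $\varphi(p) = \varphi(s_1, L(s_1) + v^2)$. On the other hand, the defining property $u_1(s_1) = \varphi^*_\mathrm{super}(s_1) = \sup_{w \ge 0}\{w - \varphi(s_1, L(s_1) + w^2)\}$, applied with $w = v$, immediately gives $u_1(s_1) + \varphi(s_1, L(s_1) + v^2) \ge v$. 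Chaining these two bounds delivers the desired inequality for every $(s_1, s_2, v) \in (\R_+^*)^2 \times \R_+$.

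The step I expect to require the most care is the validity of the supporting-hyperplane bound with the vector of \emph{right} partial derivatives $(\partial_{1,r}\varphi, \partial_{2,r}\varphi)$ playing the role of the subgradient: for a general convex function of several variables this vector need not lie in the subdifferential at $p$ (for instance, at $(1,1)$ the right partials of $\varphi(x,y) = \max(x,y)$ equal $(1,1)$, while the subdifferential is $\{(t, 1-t) : t \in [0,1]\}$). In typical applications $\varphi$ will be differentiable at $p$, so the right partials coincide with $\nabla\varphi(p)$ and the inequality is standard; in the general case, one may interpret the notation $(\partial_{1,r}\varphi, \partial_{2,r}\varphi)$ in~\eqref{eq:portfolio_convex} as a fixed measurable selection from the subdifferential of $\varphi$, which is nonempty on the interior of the domain, and the same argument applies verbatim.
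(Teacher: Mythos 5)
Your proof is correct and is essentially identical to the paper's: the same supporting-hyperplane inequality for $\varphi$ at $(s_1,L(s_1)+v^2)$ evaluated at $(s_2,L(s_2))$, chained with the defining property of $\varphi^*_\mathrm{super}$. Your caveat about the vector of right partials not always being a subgradient in dimension two is a legitimate refinement that the paper glosses over, though it acknowledges it implicitly by remarking that ``any other tangent will do as well''; interpreting \eqref{eq:portfolio_convex} as a subgradient selection, as you propose, is the right fix.
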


\begin{proof}
Since $\varphi$ is convex, 
\[
\varphi(s_{2},L(s_{2}))-\varphi(s_{1},L(s_{1})+v^{2})\ge\partial_{1,r}\varphi(s_{1},L(s_{1})+v^{2})(s_{2}-s_{1})+\partial_{2,r}\varphi(s_{1},L(s_{1})+v^{2})(L(s_{2})-L(s_{1})-v^{2})
\]
and as a consequence,
\begin{align*}
u_{1}&(s_{1})+u_{2}(s_{2})+\Delta^{S}(s_{1},v)(s_{2}-s_{1})+\Delta^{L}(s_{1},v)\left(L\left(\frac{s_{2}}{s_{1}}\right)-v^{2}\right)\\
&=u_{1}(s_{1})+\varphi(s_{2},L(s_{2}))-\partial_{1,r}\varphi(s_{1},L(s_{1})+v^{2})(s_{2}-s_{1})-\partial_{2,r}\varphi(s_{1},L(s_{1})+v^{2})(L(s_{2})-L(s_{1})-v^{2})\\
&\ge u_{1}(s_{1})+\varphi(s_{1},L(s_{1})+v^{2})\ge v. \qedhere
\end{align*}
\end{proof}

%

We remark that the use of the right derivative is not crucial; any other tangent will do as well.

A result similar to Proposition \ref{prop:convex_portfolio} holds for the subreplication of VIX futures. Given a concave function $\varphi:\mathbb{R}_{+}^{*}\times\mathbb{R}\rightarrow\mathbb{R}$, we set
\[
\varphi^*_\mathrm{sub}(s_{1})\equiv\inf_{v\ge0}\left\{ v-\varphi(s_{1},L(s_{1})+v^{2})\right\}.
\]

\begin{prop}
	\label{prop:concave_portfolio}
	Let $\varphi:\mathbb{R}_{+}^{*}\times\mathbb{R}\rightarrow\mathbb{R}$
	be concave and define
	\begin{alignat}{2}
	u_{1}(s_{1}) & =  \varphi^*_\mathrm{sub}(s_{1}),        &\Delta^{S}(s_{1},v) & =  -\partial_{1,r}\varphi(s_{1},L(s_{1})+v^{2}), \nonumber\\
	u_{2}(s_{2})&=\varphi(s_{2},L(s_{2})),     \qquad&\Delta^{L}(s_{1},v)&=-\partial_{2,r}\varphi(s_{1},L(s_{1})+v^{2}).\nonumber 
	\end{alignat}
	Then the subreplication constraint (\ref{eq:constraint_sub}) holds.
\end{prop}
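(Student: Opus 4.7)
The plan is to mirror the proof of Proposition \ref{prop:convex_portfolio} almost verbatim, with every convexity-based inequality reversed, exploiting the fact that $\varphi$ is now concave rather than convex. I expect no new analytical obstacle: the argument is structurally symmetric to the convex case, with the supergradient playing the role of the subgradient and $\varphi^*_\mathrm{sub}$ playing the role of $\varphi^*_\mathrm{super}$.

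First, by concavity of $\varphi$, the supergradient inequality at the point $(s_{1},L(s_{1})+v^{2})$ gives
\[
\varphi(s_{2},L(s_{2}))-\varphi(s_{1},L(s_{1})+v^{2})\le \partial_{1,r}\varphi(s_{1},L(s_{1})+v^{2})(s_{2}-s_{1})+\partial_{2,r}\varphi(s_{1},L(s_{1})+v^{2})(L(s_{2})-L(s_{1})-v^{2}),
\]
which is exactly the inequality used in the proof of Proposition \ref{prop:convex_portfolio} with its direction flipped.

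Next, I would substitute the prescribed $u_{2}$, $\Delta^{S}$, $\Delta^{L}$ into the left-hand side of the subreplication constraint \eqref{eq:constraint_sub}. The two delta terms absorb the partial right derivatives with opposite sign, and combining with the supergradient inequality above yields
\[
u_{1}(s_{1})+u_{2}(s_{2})+\Delta^{S}(s_{1},v)(s_{2}-s_{1})+\Delta^{L}(s_{1},v)\left(L(s_{2}/s_{1})-v^{2}\right)\le u_{1}(s_{1})+\varphi(s_{1},L(s_{1})+v^{2}).
\]
To finish, note that by the very definition of $\varphi^*_\mathrm{sub}$ one has $u_{1}(s_{1})=\varphi^*_\mathrm{sub}(s_{1})\le v-\varphi(s_{1},L(s_{1})+v^{2})$ for every $v\ge 0$, so the right-hand side is bounded above by $v$, which is exactly~\eqref{eq:constraint_sub}.

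The only delicate point---shared with the convex case---is the validity of the supergradient inequality stated with the vector of right partial derivatives. As in Proposition \ref{prop:convex_portfolio}, this is harmless whenever $\varphi$ is differentiable (which holds almost everywhere by standard convex-analytic regularity) or of the separable form $\varphi(x,y)=f(x)+g(y)$ used in the applications of the paper; more generally, any selection from the superdifferential would serve equally well in the statement, and the argument above goes through unchanged.
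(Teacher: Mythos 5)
Your proof is correct and is exactly the argument the paper intends: the paper omits the proof of Proposition~\ref{prop:concave_portfolio}, presenting it as the mirror image of Proposition~\ref{prop:convex_portfolio}, and your supergradient inequality combined with the definition of $\varphi^*_\mathrm{sub}$ as an infimum (so that $u_1(s_1)\le v-\varphi(s_1,L(s_1)+v^2)$ for every $v\ge 0$) reproduces that argument with every inequality reversed. Your closing caveat about the vector of right partial derivatives not being a genuine supergradient for a general concave function of two variables is well taken, but it is shared by the paper's own proof of the convex case and is addressed there only by the remark that any other tangent (i.e., any selection from the superdifferential) would serve, which is precisely how you resolve it.
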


As mentioned in Section~\ref{sec:characterization_classical_is_optimal}, the space of convex functions $\varphi$ in two dimensions is still intractable. Thus, we specialize further to the form
\begin{equation}
\varphi(x,y)=\psi(ax+y)\label{eq:convex_function_1d}
\end{equation}
where $\psi:\R\to\R$ is a (one-dimensional) convex function and $a\in\R$. (Adding a constant in front of~$y$ does not increase the generality.) 

\begin{definition}\label{de:functionallyGen}
  Let $\psi:\R\to\R$ be convex (concave) and $a\in\R$. The portfolio defined by Proposition~\ref{prop:convex_portfolio} (Proposition~\ref{prop:concave_portfolio}) based on $\varphi(x,y)=\psi(ax+y)$ is called the \emph{superreplicating (subreplicating) portfolio generated by $\psi$ and $a$}.
\end{definition}

We call these portfolios \emph{functionally generated} because they are determined by a single real function and a constant. As convex functions on $\R$ are well approximated by linear combinations of call payoffs, is is easy to search numerically over a representative subset of this class. Of course, any functionally generated portfolio is superreplicating as a special case of Proposition~\ref{prop:convex_portfolio}, and the analogue holds for subreplication using concave functions.

The classical superreplication portfolio~(\ref{eq:portfolio_superrep_classical}) corresponds to the particular
case where $\varphi(x,y)=by$; that is, $\psi(z)=bz$ is a linear function and $a=0$, so that $\varphi$ does not depend on the first variable $x$. Indeed, in this case, we have
$u_{2}(s_{2})=bL(s_{2})$, $\Delta^{S}(s_{1},v)=0$, and $\Delta^{L}(s_{1},v)=-b$.
For $\varphi^*_\mathrm{super}(s_{1})$ to be finite, one must choose $b>0$,
in which case $\varphi^*_\mathrm{super}(s_{1})=\frac{1}{4b}-bL(s_{1})\equiv u_{1}(s_{1})$.
Minimizing 
$$
  \mathbb{E}^{1}\left[u_{1}(S_{1})\right]+\mathbb{E}^{2}[u_{2}(S_{2})]=\frac{1}{4b}-b\mathbb{E}^{1}\left[L(S_{1})\right]+b\mathbb{E}^{2}[L(S_{2})]=\frac{1}{4b}+b\sigma_{12}^{2}
$$
over the parameter $b$ yields $b=\frac{1}{2\sigma_{12}}$ and we recover
(\ref{eq:portfolio_superrep_classical}).

Our portfolios are more general in that we consider a convex function of $(s_{2},L(s_{2}))$ rather than a linear function of $L(s_{2})$. We remark that it is meaningless to consider functions $\varphi$ of the first variable alone:
for $\varphi^*_\mathrm{super}(s_{1})=\sup_{v\ge0} \{ v-\varphi(s_{1},L(s_{1})+v^{2})\}$ to be finite, $\varphi$ must depend
on the second variable and in fact be unbounded.

Let $\mathcal{F}_{\mathrm{cvx}}(\mathbb{R}_{+}^{*}\times\mathbb{R})$ and $\mathcal{F}_{\mathrm{cvx}}(\mathbb{R})$ be the sets of all convex functions on $\mathbb{R}_{+}^{*}\times\mathbb{R}$ and $\R$, respectively. 
The two families of superreplicating portfolios considered above correspond to the price bounds
	\begin{align}
	P_\mathrm{super}^{\mathrm{cvx}} & \equiv  \inf_{\varphi\in\mathcal{F}_{\mathrm{cvx}}(\mathbb{R}_{+}^{*}\times\mathbb{R})}\left\{ \mathbb{E}^{1}\left[\sup_{v\ge0}\left\{ v-\varphi(S_{1},L(S_{1})+v^{2})\right\}\right]+\mathbb{E}^{2}[\varphi(S_{2},L(S_{2}))]\right\}, \nonumber \\ 
	P_\mathrm{super}^{\mathrm{cvx},1}&\equiv\inf_{\psi\in\mathcal{\mathcal{F}_{\mathrm{cvx}}}(\mathbb{R}),a\in\mathbb{R}}\left\{ \mathbb{E}^{1}\left[\sup_{v\ge0}\left\{ v-\psi(aS_{1}+L(S_{1})+v^{2})\right\} \right]+\mathbb{E}^{2}[\psi(aS_{2}+L(S_{2}))]\right\} \label{eq:cvx1def}
\end{align}
which satisfy 
\[
	P_\mathrm{super}^{\mathrm{cvx},1}\ge P_\mathrm{super}^{\mathrm{cvx}}\ge P_\mathrm{super};
\]
the expectation of a non-integrable function is read as $+\infty$ in the above formulas.
For the analogous definitions in the subreplication problem, exchanging convex/concave as well as inf/sup, we have $P_\mathrm{sub}^{\mathrm{ccv},1}\le P_\mathrm{sub}^{\mathrm{ccv}}\le P_\mathrm{sub}$.

\medskip

The following result shows that functionally generated portfolios improve the classical subreplication bound $P_\mathrm{sub}\geq0$ in all relevant cases. While we already know from the abstract result in Theorem~\ref{thm:D=0} that $P_\mathrm{sub}>0$ when $\mu_1 \neq \mu_2$, we now construct  an explicit, functionally generated subreplicating portfolio that has strictly positive price.

\begin{prop}
\label{prop:P>0} Let $\mu_1 \neq \mu_2$ be in convex order. 
Then there exists a functionally generated subreplicating portfolio $(u_{1},u_{2},\Delta^{S},\Delta^{L})\in\cU_\mathrm{sub}$ with strictly positive price.

More precisely, it is generated by the concave function $\psi(z)\equiv -\gamma (z+b)_{-}$ and a constant $a>0$, where $\gamma>0$ and $b\in\R$. The values of the constants depend on $\mu_{1},\mu_{2}$ and can be found explicitly as indicated in the proof.
\end{prop}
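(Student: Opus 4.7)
The plan is to invoke Proposition~\ref{prop:concave_portfolio} with $\varphi(x,y)=\psi(ax+y)$, that is, with the generating pair $(\psi,a)$ of Definition~\ref{de:functionallyGen}; since composing with an affine form preserves concavity, $\varphi$ is concave and the resulting quadruple lies in $\cU_\mathrm{sub}$. Write $\eta(s)\equiv as+L(s)+b$ and note that for $a>0$, $\eta$ is strictly convex, tends to $+\infty$ at both $0$ and $\infty$, and attains its minimum $\eta(s^*)$ at $s^*=2/(a\tau)$; in particular $\eta_-$ is bounded. By construction, $u_2(s_2)=\psi(\eta(s_2))=-\gamma\,\eta_-(s_2)$.

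First I would compute $u_1=\varphi^*_{\mathrm{sub}}$ in closed form. When $\eta(s_1)\geq 0$ the term $\psi(\eta(s_1)+v^2)$ vanishes for all $v\geq 0$ and $u_1(s_1)=0$; when $\eta(s_1)<0$, set $w=-\eta(s_1)>0$, so that $u_1(s_1)=\inf_{v\geq 0}\{v+\gamma(w-v^2)_+\}$. The integrand on $v\in[0,\sqrt{w}]$ is the concave quadratic $\gamma w+v-\gamma v^2$, whose infimum on this interval is attained at an endpoint ($\gamma w$ at $v=0$, $\sqrt w$ at $v=\sqrt w$); on $v\geq\sqrt{w}$ it equals $v$ with infimum $\sqrt{w}$. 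Hence $u_1(s_1)=\min(\gamma w,\sqrt{w})$. Choosing $\gamma$ small enough that $\gamma^{2}\sup_s\eta_-(s)\leq 1$, which is feasible precisely because $\eta_-$ is bounded, forces $\gamma w\leq\sqrt{w}$ for every $s_1$ and yields $u_1(s_1)=\gamma\,\eta_-(s_1)$. The portfolio price is therefore
\[
\EE^1[u_1(S_1)]+\EE^2[u_2(S_2)]=\gamma\bigl(\EE^1[\eta_-(S_1)]-\EE^2[\eta_-(S_2)]\bigr)=\gamma\bigl(\sigma_{12}^2-\EE^2[\eta_+(S_2)]+\EE^1[\eta_+(S_1)]\bigr),
\]
where the second equality comes from the decomposition $\eta_-=\eta_+-\eta$ together with $\EE^i[\eta(S_i)]=aS_0+\ell_i+b$ and $\sigma_{12}^2=\ell_2-\ell_1$.

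It remains to pick $a,b,\gamma$ so that this price is strictly positive. Since $L$ is strictly convex and $\mu_1\neq\mu_2$, the convex-order assumption combined with Jensen along a martingale coupling gives $\sigma_{12}^2>0$. Fix any $a>0$ (e.g.\ $a=1$). As $b\to-\infty$, $\eta_+(s)$ decreases pointwise to $0$; the family is dominated by $aS_i+|L(S_i)|+|b|$, which is $\mu_i$-integrable by Assumption~\ref{as:integrability}, so monotone convergence yields $\EE^i[\eta_+(S_i)]\to 0$. Hence for $b$ negative enough, $\EE^2[\eta_+(S_2)]-\EE^1[\eta_+(S_1)]<\sigma_{12}^2$, and the explicit threshold on $b$ is read off from this limit in terms of the marginals $\mu_1,\mu_2$. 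With any such $b$ one in particular has $\eta(s^*)<0$, and then any $\gamma\in\bigl(0,|\eta(s^*)|^{-1/2}\bigr)$ activates the regime $u_1=\gamma\eta_-$ used above and produces a strictly positive price.

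The main obstacle I anticipate is the closed-form evaluation of $u_1$: the dichotomy between the branch minima $\gamma w$ and $\sqrt{w}$ is what compels the constraint $\gamma^{2}w\leq 1$, and this is precisely why boundedness of $\eta_-$ -- guaranteed by $a>0$ -- is essential to render $\gamma$ uniformly small in $s_1$. Once that computation is settled, everything else is the reduction to the convex-order identity $\EE^i[\eta]=aS_0+\ell_i+b$ and a routine monotone-convergence argument quantifying how the tail contribution $\eta_+$ washes out as $b$ is pushed to~$-\infty$.
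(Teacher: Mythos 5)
Your proposal is correct and follows essentially the same route as the paper: the same generator $\psi(z)=-\gamma(z+b)_-$ with $a>0$, the same closed-form evaluation $\varphi^*_{\mathrm{sub}}=\min(\gamma\,\eta_-,\sqrt{\eta_-})$ reduced to $\gamma\,\eta_-$ by taking $\gamma^2\sup\eta_-\le 1$, and the same mechanism for positivity, namely that $\int L\,d(\mu_2-\mu_1)=\sigma_{12}^2>0$ while the positive-part contributions vanish as the parameters are pushed to their limits. The only (harmless) difference is in the final bookkeeping: you fix $a$ and use the identity $\eta_-=\eta_+-\eta$ together with $\EE^i[\eta(S_i)]=aS_0+\ell_i+b$ to reduce everything to $\EE^2[\eta_+(S_2)]\to 0$ as $b\to-\infty$, whereas the paper splits the integral at the zeros $n_1<n_2$ of $\Lambda_{a,b}$ and also sends $a\downarrow 0$; both arguments are valid.
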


\begin{proof}
We consider the concave function $\psi:\mathbb{R}\rightarrow\mathbb{R}$ defined by
\[
\psi(z)\equiv \gamma \min(z+b,0)
\]
where $\gamma>0$ and $b\in\RR$. Moreover, let $a>0$. Then, denoting
\begin{equation}\label{eq:LambdaAB}
\Lambda_{a,b}(s) \equiv L(s) + as + b,
\end{equation}
we have
\[
u_{2}(s_{2})=\varphi(s_{2},L(s_{2}))=\psi(as_2+L(s_2))=\gamma\min(\Lambda_{a,b}(s_{2}),0) = -\gamma\Lambda_{a,b}(s_{2})_{-} \;.
\]
Since
$
\varphi(s_{1},L(s_{1})+v^{2})=\gamma\min(v^{2}+\Lambda_{a, b}(s_{1}),0),
$
the infimum of $v-\varphi(s_{1},L(s_{1})+v^{2})$ over $v\ge 0$ can only be attained for $v=0$ or $v^{2}=-\Lambda_{a, b}(s_{1})$, whence
$$
\varphi^*_\mathrm{sub}(s_{1}) = 
\min\left(-\gamma\Lambda_{a,b}(s_{1}),\sqrt{-\Lambda_{a,b}(s_{1})}\right)\mathbf{1}_{\Lambda_{a,b}(s_{1})\le0} = \min\left(\gamma\Lambda_{a,b}(s_{1})_{-},\sqrt{\Lambda_{a,b}(s_{1})_{-}}\right).
$$

\begin{figure}
\begin{center}
\includegraphics[trim={3cm 3cm 3cm 3cm},clip,width=11cm,height=6cm]{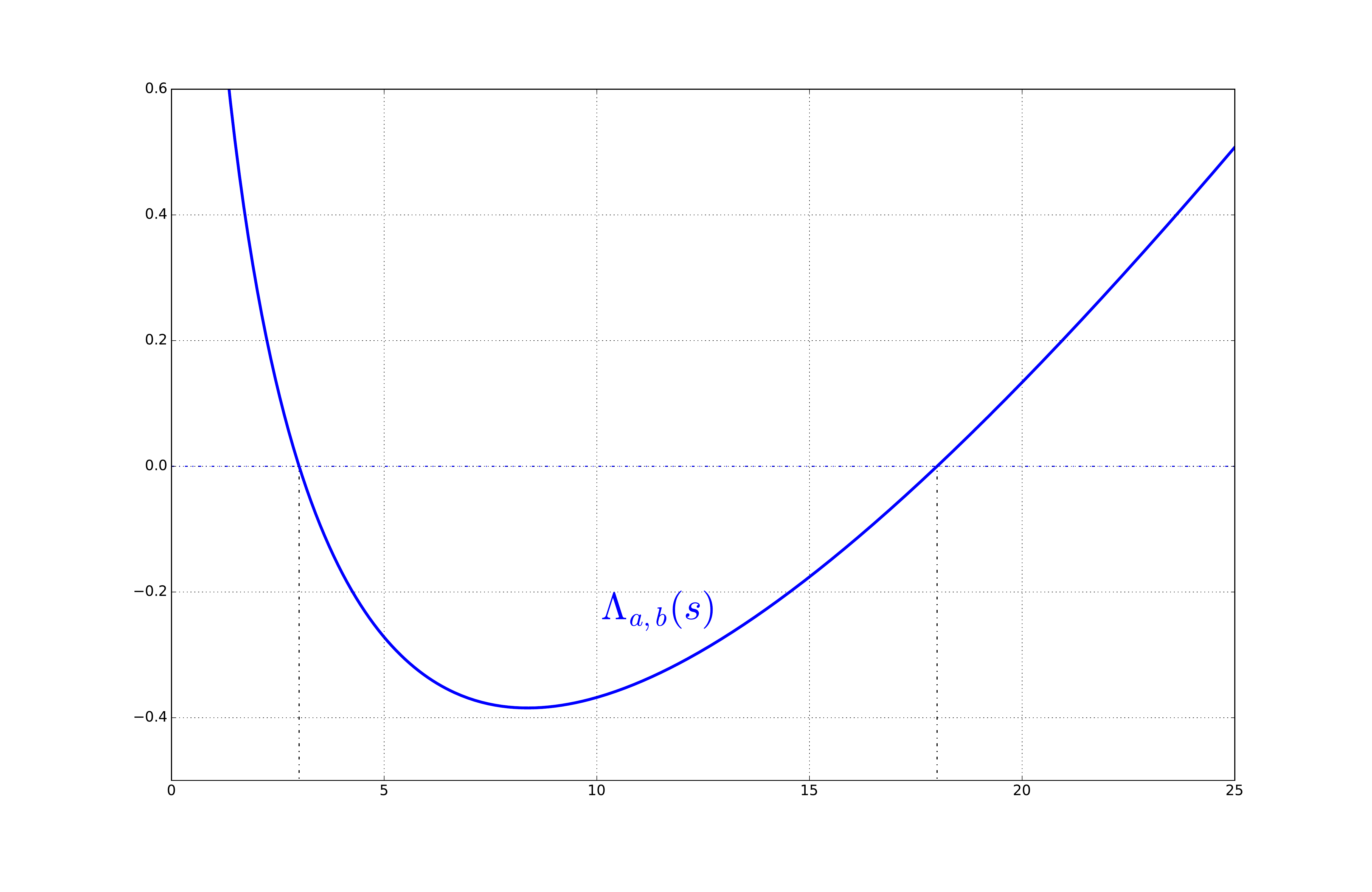}
\end{center}
\caption{Graph of $\Lambda_{a,b}$ for $a>0$}
\label{fig:lambda_ab}
\end{figure}

As $a>0$, the function $\Lambda_{a,b}$ is bounded from below on $\mathbb{R}_{+}^{*}$ (see Figure \ref{fig:lambda_ab}). For $b<-\frac{2}{\tau}-L\left(\frac{2}{a\tau}\right)$, the minimum of $\Lambda_{a,b}$ is strictly negative,
$$
  M_{a,b}\equiv \max (\Lambda_{a,b})_{-} >0.
$$
The function $\Lambda_{a,b}$ then has two distinct zeros $n_{1}<n_{2}$ and we observe that $n_{1}\to0$ as $b\to -\infty$ and $n_{2}\to\infty$ as $a\downarrow 0$. 
Let $0<\gamma\leq\frac{1}{\sqrt{M_{a,b}}}$. Then $\gamma\Lambda_{a,b}(s_{1})_{-}\le\sqrt{\Lambda_{a,b}(s_{1})_{-}}$ and hence
$$
  u_{1}(s_{1})\equiv \varphi^*_\mathrm{sub}(s_{1}) =\gamma\Lambda_{a,b}(s_{1})_{-}\;.
$$

Next, we show that there exist parameters $a,b$ such that 
\begin{equation}\label{eq:posDiffClaim}
 \int_{n_{1}}^{n_{2}} \Lambda_{a,b}(s)\, (\mu_{2} -\mu_{1})(ds)>0.
\end{equation}
Since $\mu_{1}\neq\mu_{2}$ are in convex order and $L$ is strictly convex,
$$
  \delta \equiv \int \Lambda_{a,b}(s) \, (\mu_{2} -\mu_{1})(ds) = \int L(s) \, (\mu_{2} -\mu_{1})(ds) >0
$$
and this value is independent of $a,b$. Let $\eps\equiv\delta/3$. In view of Assumption~\ref{as:integrability}, choosing $b$ small enough guarantees that $\int \mathbf{1}_{(0,n_{1}]}\Lambda_{a,b} \, d\mu_{i} \leq \eps$ and then choosing also $a>0$ small enough yields  $\int \mathbf{1}_{[n_{2},\infty)}\Lambda_{a,b} \, d\mu_{i} \leq \eps$, for $i\in\{1,2\}$. As a result, we have 
$$
 \int_{n_{1}}^{n_{2}} \Lambda_{a,b}(s)\, (\mu_{2} -\mu_{1})(ds)\ge\delta/3
$$
which proves our claim~\eqref{eq:posDiffClaim}. 
With this choice of $a>0$ and $b<-\frac{2}{\tau}-L\left(\frac{2}{a\tau}\right)$, we have 
\[
\mathbb{E}^{1}\left[u_{1}(S_{1})\right]+\mathbb{E}^{2}\left[u_{2}(S_{2})\right]
 =  \gamma\left(\mathbb{E}^{1}\left[\Lambda_{a,b}(S_{1})_{-}\right]-\mathbb{E}^{2}\left[\Lambda_{a,b}(S_{2})_{-}\right]\right)\\
=\gamma \int_{n_{1}}^{n_{2}} \Lambda_{a,b}(s)_{-}(\mu_{1} -\mu_{2})(ds) > 0.
\]
Optimizing over the parameter $\gamma$ leads us to the choice $\gamma=\frac{1}{\sqrt{M_{a,b}}}$. Summarizing, the portfolio
$$
  u_{1}(s_{1})=\frac{1}{\sqrt{M_{a,b}}} \Lambda_{a,b}(s_{1})_{-},\quad u_{2}(s_{2})=-\frac{1}{\sqrt{M_{a,b}}} \Lambda_{a,b}(s_{2})_{-}
$$
with deltas as in Proposition~\ref{prop:concave_portfolio}, is subreplicating at price
$
   \frac{1}{\sqrt{M_{a,b}}} \left(\mathbb{E}^{2}\left[\Lambda_{a,b}(S_{2})\right]-\mathbb{E}^{1}\left[\Lambda_{a,b}(S_{1})\right]\right) >0.
$
\end{proof}


\begin{rem}\label{rk:compactSupport} In many important cases it is straightforward to find $a,b$ satisfying~\eqref{eq:posDiffClaim}. Indeed, suppose that $\mu_{1},\mu_{2}$ have continuous densities $f_{1},f_{2}$. Then, is suffices to choose $a,b$ such that $[n_{1},n_{2}]\subset \{f_{1}>f_{2}\}$. Or, if~$\mu_2$ (and hence~$\mu_{1}$) is concentrated on a compact interval $I\subset \R_{+}^{*}$, then we can choose $a,b$ such that $I\subset [n_{1},n_{2}]$.
\end{rem}

\section{The case where $\mu_2$ is a Bernoulli distribution}\label{sec:binomial}

In this section, we study in detail an example where the
classical upper bound $\sigma_{12}$ is typically not optimal, i.e.,
$P_\mathrm{super}<\sigma_{12}$. We will explicitly compute the
optimal bound $P_\mathrm{super}$ and derive ($\eps$-)optimal superreplicating portfolios within the functionally generated class. In all of this section, $\mu_{2}$ is a Bernoulli distribution,
\begin{equation}
\mu_{2}=p\delta_{s_{2}^{u}}+(1-p)\delta_{s_{2}^{d}},\qquad 0< s_{2}^{d}<S_0<s_{2}^{u},\qquad p = \frac{S_0-s_{2}^{d}}{s_{2}^{u}-s_{2}^{d}}\in(0,1).\label{eq:mu2_bernoulli}
\end{equation}
Thus, $S_{2}$
can only take the two values $s_{2}^{d}<s_{2}^{u}$ and these are the only free parameters---the fact that $\mu_2$ has mean $S_0$ determines the value of $p$. 

In our first result, we show that in the absence of arbitrage, the sets $\mathcal{M}(\mu_{1},\mu_{2})$ and $\mathcal{M}_V(\mu_{1},\mu_{2})$ both have a unique element. As a consequence, $D_\mathrm{super}=D_\mathrm{sub}$ and this number can be computed explicitly as the expectation under the unique risk-neutral measure. Some notation is needed to state this result. We define the functions 
\begin{equation*}
\pi_{u}(s_{1})\equiv\frac{s_{1}-s_{2}^{d}}{s_{2}^{u}-s_{2}^{d}},\qquad\pi_{d}(s_{1})\equiv\frac{s_{2}^{u}-s_{1}}{s_{2}^{u}-s_{2}^{d}}=1-\pi_{u}(s_{1})\label{eq:piu_pid}
\end{equation*}
which will represent the unique martingale transition probabilities. 
Moreover, an important role will be played by the function
\begin{equation}
\Lambda_b(s_{1})\equiv \Lambda_{s_2^d,s_2^u}(s_{1})\equiv \pi_{u}(s_{1})L\left(\frac{s_{2}^{u}}{s_{1}}\right)+\pi_{d}(s_{1})L\left(\frac{s_{2}^{d}}{s_{1}}\right),\qquad s_1>0. \label{eq:def_lambda_b}
\end{equation}
Note that the function $\Lambda_b$ depends only on the two values $s_{2}^{d}$, $s_{2}^{u}$. When $s_{1}\in[s_{2}^{d},s_{2}^{u}]$, $\Lambda_b(s_{1})$ is the risk-neutral price of the FSLC in the one-step binomial model (hence the subscript $b$), given the price $s_1$ of the S\&P~500 index at time $T_1$. In particular, in this model, one can replicate the log-contract payoff at $T_2$ by holding $\Lambda_b(s_{1})$ units of cash at $T_1$ and $\Delta_b$ units of the index over $[T_1,T_2]$, where $\Delta_b$ is the delta of the log-contract in the one-step binomial model:
\begin{eqnarray}
\forall s_1 > 0,\;\;\forall s_2 \in \{s_{2}^{d},s_{2}^{u}\},\qquad L\left(\frac{s_2}{s_1}\right) & = & \Lambda_b(s_1) + \Delta_b (s_2 - s_1) ,\label{eq:replication_binomial} \\
\Delta_b & \equiv & \frac{L(s_2^u)-L(s_2^d)}{s_2^u -s_2^d} \; < \; 0.  \label{eq:def_Delta_b}
\end{eqnarray}
We observe that $\Lambda_b(s_{2}^{d})=\Lambda_b(s_{2}^{u})=0$. Moreover, from (\ref{eq:replication_binomial}), the function $\Lambda_b$ is strictly concave. As a consequence, $\Lambda_b(s_{1})>0$ for all
$s_{1}\in(s_{2}^{d},s_{2}^{u})$, and $\Lambda_b(s_{1})<0$ for all
$s_{1}\notin[s_{2}^{d},s_{2}^{u}]$. We denote by $\lambda_b$ the square root of $\Lambda_b$ on the interval $[s_{2}^{d},s_{2}^{u}]$,
\begin{equation*}\label{eq:lambdaDef}
\lambda_b(s_1) = \sqrt{\Lambda_b(s_1)},\qquad s_{1}\in[s_{2}^{d},s_{2}^{u}], \qquad\mbox{and}\qquad \bar{\lambda}_b \equiv \max_{[s_2^d,s_2^u]}\lambda_b.
\end{equation*}
For any $\sigma\in[0,\bar{\lambda}_b)$, the equation $\lambda_b(s_1) = \sigma$ is equivalent to the equation $\Lambda_b(s_1) = \sigma^2$ which has exactly two solutions $s_1$ that we denote by
\begin{equation}\label{eq:lambdaDefSol}
  \lambda_{b,-}^{-1}(\sigma) < \lambda_{b,+}^{-1}(\sigma).
\end{equation}
For convenience, we write $\lambda_{b,-}^{-1}(\bar{\lambda}_b)=\lambda_{b,+}^{-1}(\bar{\lambda}_b)$ for the unique solution of $\lambda_b(s_1) = \bar{\lambda}_b$. Figure~\ref{fig:lambda_bin} shows the graphs of $\Lambda_b$ and $\lambda_b$.

\begin{figure}
\begin{center}
\includegraphics[trim={3cm 3cm 3cm 3cm},clip,width=11cm,height=6cm]{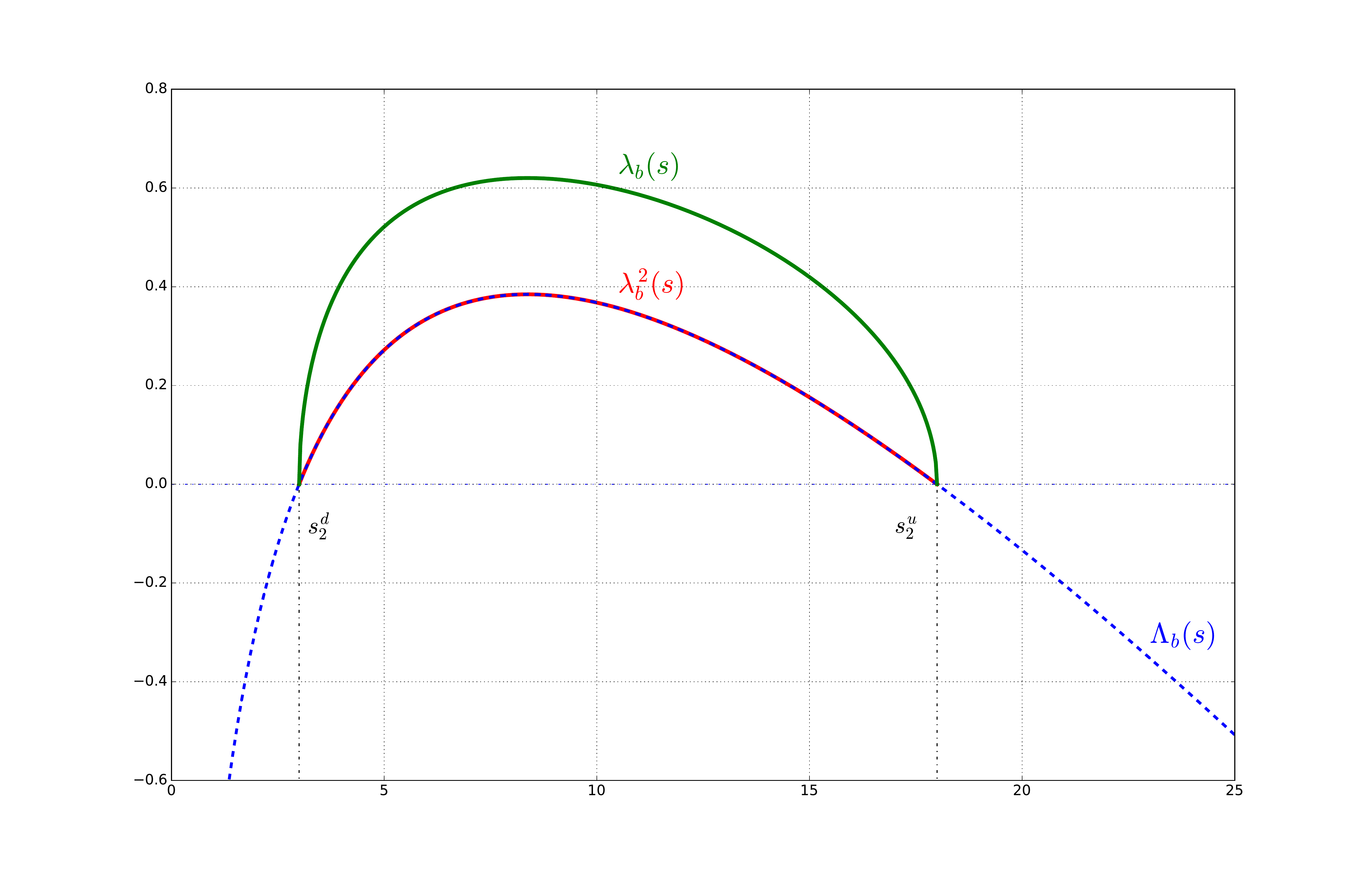}
\end{center}
\caption{$\Lambda_b$ and its square-root, $\lambda_b$}
\label{fig:lambda_bin}
\end{figure}

\begin{thm}\label{thm:3cases}
	Let $\mu_{2}$ be the Bernoulli distribution (\ref{eq:mu2_bernoulli}). Then, there is no arbitrage, or equivalently $\mathcal{M}(\mu_{1},\mu_{2})\neq\emptyset$, if and only if $\mathrm{{supp}(\mu_{1})}\subset[s_{2}^{d},s_{2}^{u}]$. In this case, $\mathcal{M}(\mu_{1},\mu_{2})$ has a unique element
	$\beta$, given by
	\begin{eqnarray}
	\beta(ds_{1},ds_{2}) & = & \mu_{1}(ds_{1})\left(\pi_{u}(s_{1})\delta_{s_{2}^{u}}(ds_{2})+\pi_{d}(s_{1})\delta_{s_{2}^{d}}(ds_{2})\right)\label{eq:beta}
	\end{eqnarray}
	and $\mathcal{M}_{V}(\mu_{1},\mu_{2})$ has a unique element
	$\beta_{\Lambda}$, given by
	\[
	\beta_{\Lambda}(ds_{1},ds_{2},dv)=\beta(ds_{1},ds_{2})\delta_{\lambda_b(s_{1})}(dv).
	\]
	In particular, $V=\lambda_b(S_{1})$ $\beta_{\Lambda}$-a.s. Moreover, 
	\begin{enumerate}
	\item[(i)] if $\mu_1$ is the Bernoulli distribution that takes values in $\{\lambda_{b,-}^{-1}(\sigma),\lambda_{b,+}^{-1}(\sigma)\}$ for some $\sigma\in[0,\lambda_b(S_0)]$, then $\beta\in\bar{\mathcal{M}}(\mu_{1},\mu_{2})$ and $P_\mathrm{super}=\sigma_{12}$,
	\item[(ii)] if $\mu_{1}$ is a different distribution, then $\bar{\mathcal{M}}(\mu_{1},\mu_{2})=\emptyset$ and $P_\mathrm{super}=\mathbb{E}^{1}[\lambda_b(S_{1})]<\sigma_{12}$.
	\end{enumerate}
\end{thm}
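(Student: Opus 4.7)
The plan is to exploit that $\mu_2$ charges only two points, which forces the martingale transition kernel and hence the entire dynamics.

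First, I would pin down $\mathcal{M}(\mu_1,\mu_2)$. Any $\mu\in\mathcal{M}(\mu_1,\mu_2)$ admits a regular conditional distribution of $S_2$ given $S_1$; since $S_2\sim\mu_2$ is supported on $\{s_2^d,s_2^u\}$, this conditional is of the form $q(s_1)\delta_{s_2^u}+(1-q(s_1))\delta_{s_2^d}$, and the martingale condition $\mathbb{E}^{\mu}[S_2|S_1=s_1]=s_1$ then forces $q(s_1)=\pi_u(s_1)$. For this to define a probability one needs $s_1\in[s_2^d,s_2^u]$ $\mu_1$-a.s., giving necessity of the support condition. Conversely, I would verify that $\beta$ defined by \eqref{eq:beta} is a valid martingale coupling: its first marginal is $\mu_1$ by construction; its second marginal puts mass $\int \pi_u(s_1)\mu_1(ds_1)=(S_0-s_2^d)/(s_2^u-s_2^d)=p$ on $s_2^u$ thanks to $\mathbb{E}^1[S_1]=S_0$; and the martingale property holds by design. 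Uniqueness follows since $q$ was forced. Existence is then equivalent to no-arbitrage by Theorem~\ref{thm:arbitrage-free}.

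Next, I would show $\mathcal{M}_V(\mu_1,\mu_2)=\{\beta_\Lambda\}$. Let $\mu\in\mathcal{M}_V(\mu_1,\mu_2)$. By Lemma~\ref{lem:projection_MV_M}(i), $\mu_{(1,2)}=\beta$. Arguing as above but conditioning on $(S_1,V)$, the constraint $\mathbb{E}^\mu[S_2|S_1,V]=S_1$ again forces the conditional of $S_2$ given $(S_1,V)=(s_1,v)$ to put mass $\pi_u(s_1)$ on $s_2^u$, independently of $v$. Then the constraint $\mathbb{E}^\mu[L(S_2/S_1)|S_1,V]=V^2$ becomes $\Lambda_b(S_1)=V^2$ $\mu$-a.s. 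Since $\Lambda_b\geq 0$ on $[s_2^d,s_2^u]\supset\mathrm{supp}(\mu_1)$ and $V\geq 0$, this yields $V=\lambda_b(S_1)$, hence $\mu=\beta_\Lambda$.

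Having established uniqueness, Theorem~\ref{th:strongDuality} gives
$$
P_\mathrm{super}=D_\mathrm{super}=\mathbb{E}^{\beta_\Lambda}[V]=\mathbb{E}^1[\lambda_b(S_1)].
$$
By \eqref{eq:Unconditional_expectation_V2} applied to $\beta_\Lambda$, $\mathbb{E}^1[\Lambda_b(S_1)]=\sigma_{12}^2$. Jensen's inequality for the strictly concave square root then yields $\mathbb{E}^1[\lambda_b(S_1)]\leq\sigma_{12}$ with equality if and only if $\Lambda_b(S_1)$ is $\mu_1$-a.s.\ constant, equal to $\sigma_{12}^2$. To conclude I would translate this equality condition into the statement of~(i): the support of $\mu_1$ must lie in a level set $\{\Lambda_b=\sigma^2\}$, which for $\sigma\in[0,\bar\lambda_b)$ consists of exactly the two points $\lambda_{b,-}^{-1}(\sigma)$ and $\lambda_{b,+}^{-1}(\sigma)$ (and of a single point for $\sigma=\bar\lambda_b$). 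Together with $\mathrm{supp}(\mu_1)\subset[s_2^d,s_2^u]$ and the mean constraint $\mathbb{E}^1[S_1]=S_0$, which requires $\lambda_{b,-}^{-1}(\sigma)\leq S_0\leq \lambda_{b,+}^{-1}(\sigma)$, this forces $\sigma\in[0,\lambda_b(S_0)]$ and recovers precisely the family described in (i); in particular $\beta$ then lies in $\bar{\mathcal{M}}(\mu_1,\mu_2)$ since $\Lambda_\beta=\Lambda_b$ is constant on $\mathrm{supp}(\mu_1)$. All other cases are (ii). The only mildly delicate point—and the main place where care is needed—is the geometric translation of the Jensen equality case into the two-point support description, which requires tracking the strict concavity of $\Lambda_b$ on $[s_2^d,s_2^u]$ and the compatibility of the two roots of $\Lambda_b=\sigma^2$ with the mean constraint.
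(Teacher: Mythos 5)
Your proposal is correct, and the core of it---forcing the transition kernel of $S_2$ given $S_1$ (resp.\ given $(S_1,V)$) to be the binomial kernel $\pi_u\delta_{s_2^u}+\pi_d\delta_{s_2^d}$ via the martingale condition, then reading off $V^2=\Lambda_b(S_1)$ from the second constraint in \eqref{eq:def_MV}---is exactly the paper's argument for the uniqueness of $\beta$ and $\beta_\Lambda$. (The paper gets the support condition from convex order; you get it from $\pi_u(s_1)\in[0,1]$; both are fine.)

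Where you diverge is the endgame for (i)/(ii). The paper first characterizes when $\bar{\mathcal{M}}(\mu_1,\mu_2)\neq\emptyset$ (using strict concavity of $\Lambda_b$ to force a two-atom $\mu_1$) and then invokes Theorem~\ref{thm:D=Dbar} to translate this into $P_\mathrm{super}=\sigma_{12}$ versus $P_\mathrm{super}<\sigma_{12}$, computing $\mathbb{E}^1[\lambda_b(S_1)]$ only in case (ii). You instead establish the unified formula $P_\mathrm{super}=\mathbb{E}^1[\lambda_b(S_1)]$ in all cases from duality plus uniqueness of $\beta_\Lambda$, and then obtain the dichotomy from the equality case of Jensen's inequality applied to $\mathbb{E}^1[\sqrt{\Lambda_b(S_1)}]\le\sqrt{\mathbb{E}^1[\Lambda_b(S_1)]}=\sigma_{12}$, identifying the equality case with the two-point level sets of $\Lambda_b$ exactly as the paper does. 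This is a clean, self-contained variant that bypasses one application of Theorem~\ref{thm:D=Dbar} (you still need Theorem~\ref{th:strongDuality}); the paper's route has the mild advantage of exhibiting the connection to $\bar{\mathcal{M}}(\mu_1,\mu_2)$ explicitly, which is the statement asserted in (i) and (ii). Both arguments rest on the same two facts: strict concavity of $\Lambda_b$ and $\mathbb{E}^1[\Lambda_b(S_1)]=\sigma_{12}^2$.
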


\begin{proof}
We first characterize the absence of arbitrage. Let $\beta\in\mathcal{M}(\mu_{1},\mu_{2})$. Then, $\mu_{1}$ and $\mu_{2}$ are in convex order and in particular $\mathrm{{supp}(\mu_{1})}\subset[s_{2}^{d},s_{2}^{u}]$. 
Since $\mathbb{E}^{\beta}\left[S_{2}|S_{1}\right]=S_{1}$, we have
\[
\beta(S_{2}=s_{2}^{u}|S_{1})s_{2}^{u}+(1-\beta(S_{2}=s_{2}^{u}|S_{1}))s_{2}^{d}=S_{1}.
\]
That is, the transition probabilities satisfy $\beta(S_{2}=s_{2}^{u}|S_{1}) = \pi_{u}(S_{1})$ and $\beta(S_{2}=s_{2}^{d}|S_{1})=\pi_{d}(S_{1})$, and as a consequence, $\beta$ is uniquely determined and given by~(\ref{eq:beta}). 
Conversely, assume that $\mathrm{{supp}(\mu_{1})}\subset[s_{2}^{d},s_{2}^{u}]$ and let $\beta$
be the probability measure on $(\mathbb{R}_{+}^*)^{2}$ defined by~(\ref{eq:beta}). We readily verify that $\beta\in\mathcal{M}(\mu_{1},\mu_{2})$ and in particular $\mathcal{M}(\mu_{1},\mu_{2})\neq\emptyset$. The latter is equivalent to the absence of arbitrage; cf.\ Theorem~\ref{thm:arbitrage-free}.

Next, suppose that $\mathcal{M}(\mu_{1},\mu_{2})\neq\emptyset$ and let $\beta$ be its element. We have $\beta_{\Lambda}\in\mathcal{M}_{V}(\mu_{1},\mu_{2})$; cf.\ Lemma \ref{lem:projection_MV_M}(ii).
Let $\mu\in\mathcal{M}_{V}(\mu_{1},\mu_{2})$; we prove that $\mu=\beta_{\Lambda}$.
The martingale condition $\mathbb{E}^{\mu}\left[S_{2}|S_{1},V\right]=S_{1}$
implies that
\[
\mu(S_{2}=s_{2}^{u}|S_{1},V)s_{2}^{u}+(1-\mu(S_{2}=s_{2}^{u}|S_{1},V))s_{2}^{d}=S_{1},
\]
hence the transition probabilities
\[
\mu(S_{2}=s_{2}^{u}|S_{1},V)=\pi_{u}(S_{1})=\beta(S_{2}=s_{2}^{u}|S_{1}),\qquad\mu(S_{2}=s_{2}^{d}|S_{1},V)=\pi_{d}(S_{1})=\beta(S_{2}=s_{2}^{d}|S_{1})
\]
do not depend on $V$, and since the first marginal of both $\mu$
and $\beta$ is $\mu_{1}$, the projection of $\mu$ onto the first two
coordinates is equal to $\beta$. Moreover, $\mu$-a.s.,
\begin{multline*}
V^{2} = \mathbb{E}^{\mu}\left[L\left(\frac{S_{2}}{S_{1}}\right)\middle|S_{1},V\right] = \mu(S_{2}=s_{2}^{u}|S_{1},V)L\left(\frac{s_{2}^{u}}{S_{1}}\right)+\mu(S_{2}=s_{2}^{d}|S_{1},V)L\left(\frac{s_{2}^{d}}{S_{1}}\right) \\
 = \beta(S_{2}=s_{2}^{u}|S_{1})L\left(\frac{s_{2}^{u}}{S_{1}}\right)+\beta(S_{2}=s_{2}^{d}|S_{1})L\left(\frac{s_{2}^{d}}{S_{1}}\right)
 = \mathbb{E}^{\beta}\left[L\left(\frac{S_{2}}{S_{1}}\right)\middle|S_{1}\right]
 =  \lambda_b(S_{1})^{2}.
\end{multline*}
Therefore, $V=\lambda_b(S_{1})$ $\mu$-a.s., showing that $\mu=\beta_{\Lambda}$ is the unique element of $\mathcal{M}_{V}(\mu_{1},\mu_{2})$.

(i) By the definition of $\bar{\mathcal{M}}(\mu_{1},\mu_{2})$, we have $\beta\in\bar{\mathcal{M}}(\mu_{1},\mu_{2})$ if and only if $\Lambda_\beta(S_1)$ is $\mu_1$-a.s.\ constant. Due to the strict concavity of $\Lambda_\beta = \Lambda_b$, this happens only if $\mu_1$ is atomic with at most two atoms. If $\mu_{1}$ is a Dirac mass, since it has mean $S_0$, it can only be $\delta_{S_0}$ (which we consider a special case of the Bernoulli distribution). If $\mu_{1}$ has two atoms at $s_1^d < s_1^u$ then we must have $\Lambda_b(s_1^d) = \Lambda_b(s_1^u)$, i.e., $\lambda_b(s_1^d) = \lambda_b(s_1^u)$, and $s_1^d < S_0 < s_1^u$. As a consequence, there exists $\sigma\in[0,\lambda_b(S_0))$ such that $s_1^d=\lambda_{b,-}^{-1}(\sigma)$ and $s_1^u=\lambda_{b,+}^{-1}(\sigma)$. We recall from Theorem \ref{thm:D=Dbar} that $\bar{\mathcal{M}}(\mu_{1},\mu_{2})\neq\emptyset$ implies $P_\mathrm{super}=\sigma_{12}$.
  
 
 (ii) If we are not in the case (i), then $\bar{\mathcal{M}}(\mu_{1},\mu_{2})=\emptyset$, so Theorem~\ref{thm:D=Dbar} implies that $P_\mathrm{super}<\sigma_{12}$. Moreover,
\[
D_\mathrm{super}\equiv\sup_{\mu\in\mathcal{M}_{V}(\mu_{1},\mu_{2})}\mathbb{E}^{\mu}[V]=\mathbb{E}^{\beta_{\Lambda}}[V]=\mathbb{E}^{\beta_{\Lambda}}[\lambda_b(S_{1})]=\mathbb{E}^{1}[\lambda_b(S_{1})]
\]
and then the claim follows as $P_\mathrm{super}=D_\mathrm{super}$ by Theorem \ref{th:strongDuality}.
\end{proof}
 
\medskip

Next, we derive an explicit superreplicating portfolio with $\eps$-optimal price. It turns out that such a portfolio can be chosen of the functionally generated form~(\ref{eq:portfolio_convex}), (\ref{eq:convex_function_1d}). To that end, we first observe that $\Lambda_b(s_1)$ is of the form $-\Lambda_{a,b}$ as defined in~\eqref{eq:LambdaAB},
\begin{equation}
\Lambda_b(s_1) = -L(s_1)-as_1-b = -\Lambda_{a,b}(s_1) \label{eq:form_Lambda_b}
\end{equation}
with $a=-\Delta_b>0$ and $b=-L(s_{2}^{d}) + \Delta_b s_{2}^{d} = -L(s_{2}^{u}) + \Delta_b s_{2}^{u} $. Indeed, from (\ref{eq:replication_binomial}),
\begin{equation*}
\forall s_1 > 0, \qquad \Lambda_b(s_{1})= L(s_{2}^{d})-L(s_{1}) - \Delta_b (s_{2}^{d}-s_1) = L(s_{2}^{u})-L(s_{1}) - \Delta_b (s_{2}^{u} - s_1). \label{eq:Phi_delta_log}
\end{equation*}
As a consequence, we have
\begin{equation}
\forall s_1,s_2>0,\qquad \Lambda_b(s_{1})-\Lambda_b(s_{2})= L\left(\frac{s_2}{s_{1}}\right) + \Delta_b (s_{1}-s_2). \label{eq:Phi_s1_minus_Phi_s2}
\end{equation}

Let $0<\varepsilon< \bar{\lambda}_b$, $s_{1,\varepsilon}^d=\lambda_{b,-}^{-1}(\varepsilon)$, and $s_{1,\varepsilon}^u=\lambda_{b,+}^{-1}(\varepsilon)$, i.e., $s_{1,\varepsilon}^d < s_{1,\varepsilon}^u$ are the two values such that $\lambda_b(s_{1,\varepsilon}^d) = \lambda_b(s_{1,\varepsilon}^u) = \varepsilon$. Note that $s_2^d < s_{1,\varepsilon}^d < s_{1,\varepsilon}^u < s_2^u$ and define
\begin{equation*}
\psi(z)\equiv\frac{1}{2\varepsilon}(z+b)_{+}. \label{eq:psi_superrep_binomiale}
\end{equation*}
Using this function as a generator as in Proposition~\ref{prop:convex_portfolio}, we have
\begin{eqnarray*}
u_{2}(s_{2}) = \varphi(s_{2},L(s_{2}))=\psi(as_{2}+L(s_{2}))=\frac{1}{2\varepsilon}(as_{2}+L(s_{2})+b)_{+}=\frac{1}{2\varepsilon}(\Lambda_{a,b}(s_2))_{+}=\left(-\frac{1}{2\varepsilon}\Lambda_{b}(s_{2})\right)_{+}.
\end{eqnarray*}
Notice that $u_2 = 0$ on the support of $\mu_2$, so that the payoff $u_2(S_2)$ is \textit{free} at time 0. Moreover, one can obtain the wealth $\varphi(S_{1},L(S_{1})+V^{2})$ at $T_1$ for zero initial cost; cf.\ the proof of Proposition~\ref{prop:convex_portfolio}.
We now complete the portfolio using the procedure detailed in Proposition \ref{prop:convex_portfolio}: as
\[
\varphi(s_{1},L(s_{1})+v^{2})=\frac{1}{2\varepsilon}(as_{1}+L(s_{1})+b+v^{2})_{+}=\frac{1}{2\varepsilon}\left(v^{2}-\Lambda_{b}(s_{1})\right)_{+},
\]
we have
\[
\varphi^*_\mathrm{super}(s_{1})=\sup_{v\ge0}\left\{ v-\varphi(s_{1},L(s_{1})+v^{2})\right\} =\sup_{v\ge0}\left\{ v-\frac{1}{2\varepsilon}\left(v^{2}-\lambda_{b}^{2}(s_{1})\right)_{+}\right\}.
\]
If $s_{1}\in[s_{1,\varepsilon}^{d},s_{1,\varepsilon}^{u}]$, then $\lambda_{b}(s_{1})\ge\varepsilon$ and  the above supremum is attained for $v=\lambda_b (s_1)$ and $\varphi^*_\mathrm{super}(s_{1}) =\lambda_{b}(s_{1})$.
If $s_{1}\notin[s_{1,\varepsilon}^{d},s_{1,\varepsilon}^{u}]$, then $\lambda_{b}(s_{1})<\varepsilon$,
therefore the supremum is attained for $v=\varepsilon$ and $\varphi^*_\mathrm{super}(s_{1})\le\varepsilon$. As a consequence, the portfolio defined by
\begin{alignat}{2}
u_1(s_1) &= \begin{cases}
    \lambda_b(s_1) &\quad \text{if } s_1 \in [s_{1,\varepsilon}^d, s_{1,\varepsilon}^u]\\
    \varepsilon &\quad \text{if } s_1 \notin [s_{1,\varepsilon}^d, s_{1,\varepsilon}^u],\\
    \end{cases}
    &\qquad\quad
    \Delta^{S}(s_{1},v) &= -\partial_{1,r}\varphi(s_{1},L(s_{1})+v^{2})=\frac{\Delta_{b}}{2\varepsilon}\mathbf{1}_{v\ge\lambda_{b}(s_{1})},  \label{eq:Xi_epsilon} \\
    u_2(s_2) &= \begin{cases}
    0 &\quad \text{if } s_2 \in [s_2^d, s_2^u]\\
    -\frac{1}{2\varepsilon}\Lambda_b(s_2)&\quad \text{if } s_2 \notin [s_2^d, s_2^u],\\
    \end{cases}
    &\qquad\quad
    \Delta^{L}(s_{1},v) &= -\partial_{2,r}\varphi(s_{1},L(s_{1})+v^{2})=-\frac{1}{2\varepsilon}\mathbf{1}_{v\ge\lambda_{b}(s_{1})} \nonumber
\end{alignat}
%
is superreplicating by Proposition~\ref{prop:convex_portfolio}, since $u_{1}\ge \varphi^*_\mathrm{super}$. 
The payoffs $u_1$ and $u_2$ are plotted in Figure~\ref{fig:u1_bin}. Both $u_1$ and $u_2$ are continuous functions, with $u_1\in L^1(\mu_1)$ and $u_2\in L^1(\mu_2)$.
%
%

\begin{figure}
\begin{center}
\includegraphics[trim={3cm 3cm 3cm 3cm},clip,width=8cm,height=5.5cm]{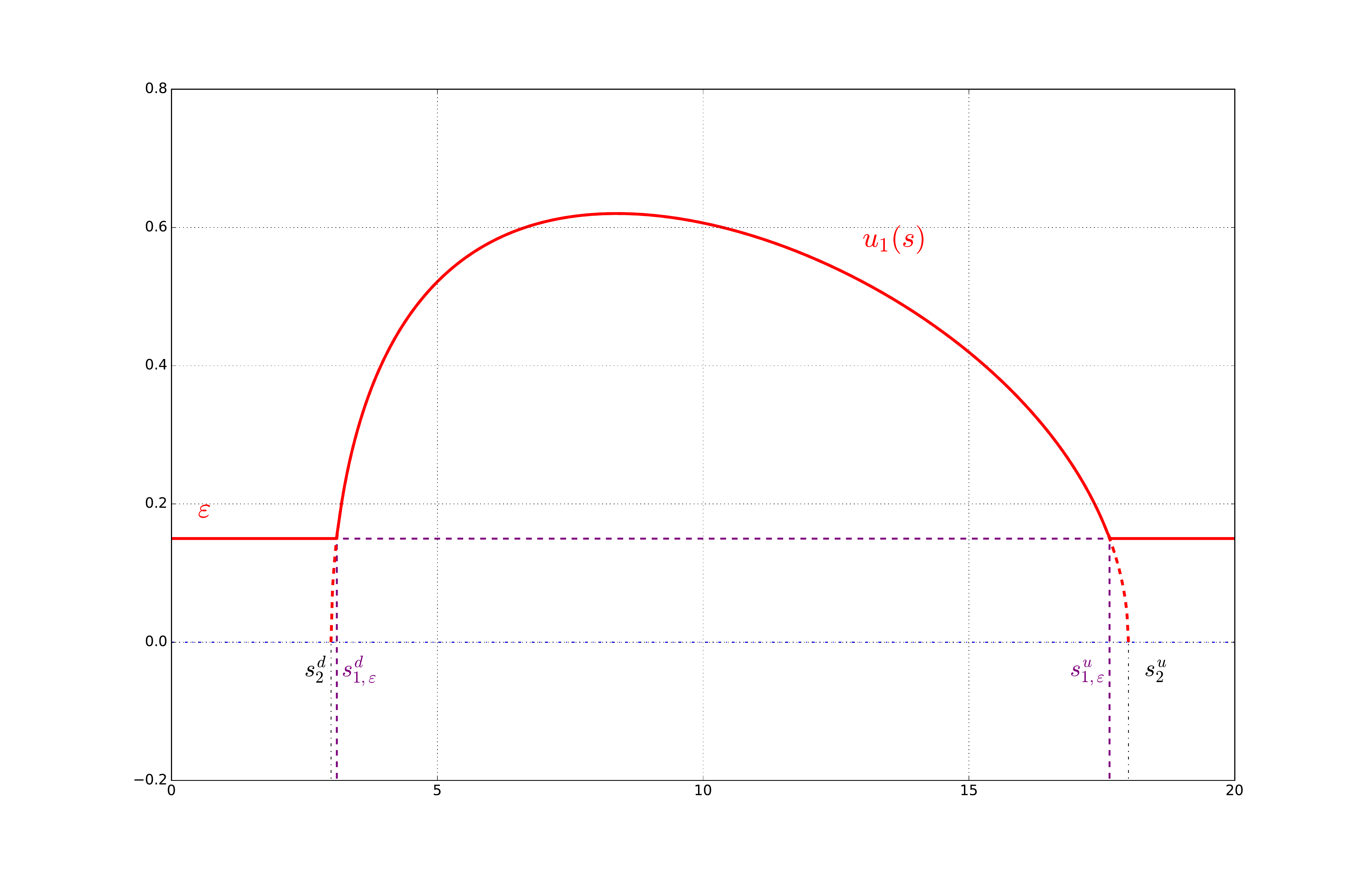}
\includegraphics[trim={3cm 3cm 3cm 3cm},clip,width=8cm,height=5.5cm]{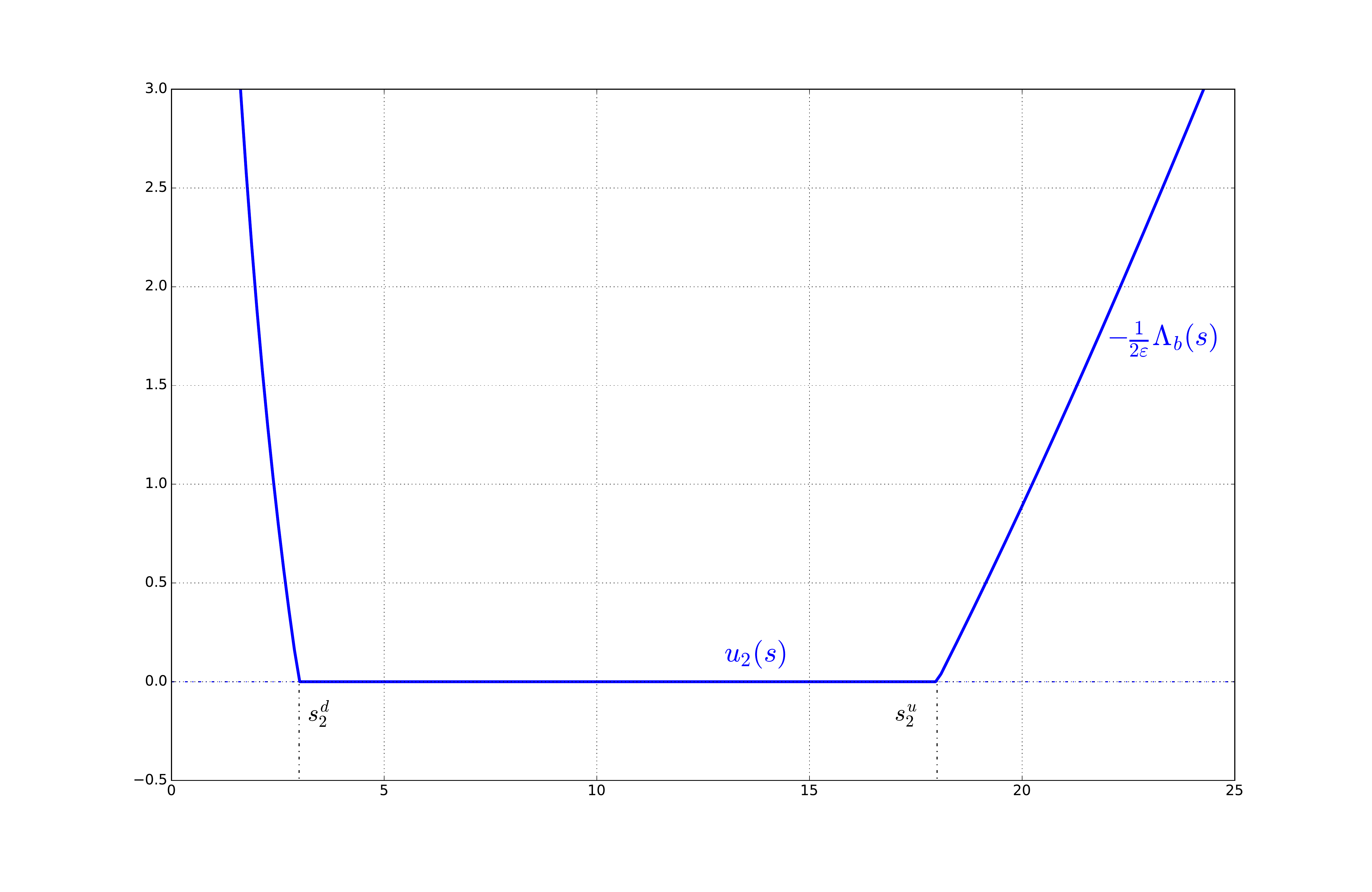}
\end{center}
\caption{Profiles of an $\eps$-optimal superreplicating portfolio when $\mu_2$ is Bernoulli}
\label{fig:u1_bin}
\end{figure}

\begin{prop}\label{prop:continuous_superrep_portfolio} 
Let $\mu_{2}$ be the Bernoulli distribution~(\ref{eq:mu2_bernoulli}) and $\mathrm{{supp}(\mu_{1})}\subset[s_{2}^{d},s_{2}^{u}]$. For $0<\varepsilon< \bar{\lambda}_b$, the portfolio defined in~(\ref{eq:Xi_epsilon}) is superreplicating and has price $\mathbb{E}^{1}\left[u_1(S_1)\right] + \mathbb{E}^{2}\left[u_2(S_2)\right]\leq P_\mathrm{super} + \varepsilon$. Moreover, if $\text{supp}(\mu_1) \subset (s_2^d, s_2^u)$, then $\mathbb{E}^{1}\left[u_1(S_1)\right] + \mathbb{E}^{2}\left[u_2(S_2)\right]=P_\mathrm{super}$ for all small enough $\eps>0$. 
\end{prop}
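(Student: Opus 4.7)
The plan is to verify superreplication first (essentially a citation), then compute the price explicitly using the Bernoulli structure of $\mu_2$, and finally compare with the value $P_\mathrm{super}=\mathbb{E}^{1}[\lambda_b(S_{1})]$ supplied by Theorem~\ref{thm:3cases}.

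\medskip
\noindent\textbf{Step 1 (Superreplication).} This is already settled by the construction: the portfolio is generated (in the sense of Proposition~\ref{prop:convex_portfolio}) by the convex function $\varphi(x,y)=\psi(ax+y)$ with $\psi(z)=\tfrac{1}{2\varepsilon}(z+b)_{+}$. The derivation in the excerpt shows that $\varphi^{*}_{\mathrm{super}}(s_{1})=\lambda_{b}(s_{1})$ on $[s^{d}_{1,\varepsilon},s^{u}_{1,\varepsilon}]$ and $\varphi^{*}_{\mathrm{super}}(s_{1})\le\varepsilon$ outside this interval. Thus the $u_{1}$ defined by~\eqref{eq:Xi_epsilon} dominates $\varphi^{*}_{\mathrm{super}}$ pointwise, so Proposition~\ref{prop:convex_portfolio} gives the superreplication inequality~\eqref{eq:constraint}.

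\medskip
\noindent\textbf{Step 2 (Computing the price and the $\varepsilon$-bound).} Because $\mu_{2}$ charges only $\{s^{d}_{2},s^{u}_{2}\}$ and $\Lambda_{b}(s^{d}_{2})=\Lambda_{b}(s^{u}_{2})=0$, the payoff $u_{2}$ vanishes $\mu_{2}$-a.s., so $\mathbb{E}^{2}[u_{2}(S_{2})]=0$. For the $\mu_{1}$-side, I split according to the partition implicit in~\eqref{eq:Xi_epsilon}:
\begin{equation*}
\mathbb{E}^{1}[u_{1}(S_{1})] = \mathbb{E}^{1}\!\left[\lambda_{b}(S_{1})\mathbf{1}_{\{S_{1}\in[s^{d}_{1,\varepsilon},s^{u}_{1,\varepsilon}]\}}\right] + \varepsilon\,\mathbb{P}^{1}\!\left(S_{1}\notin[s^{d}_{1,\varepsilon},s^{u}_{1,\varepsilon}]\right).
\end{equation*}
By Theorem~\ref{thm:3cases}, $P_\mathrm{super}=\mathbb{E}^{1}[\lambda_{b}(S_{1})]$ (both in case (i), where $\lambda_{b}$ is constant on $\mathrm{supp}(\mu_{1})$, and in case (ii)). Subtracting,
\begin{equation*}
\mathbb{E}^{1}[u_{1}(S_{1})]-P_\mathrm{super} = \mathbb{E}^{1}\!\left[\bigl(\varepsilon-\lambda_{b}(S_{1})\bigr)\mathbf{1}_{\{S_{1}\notin[s^{d}_{1,\varepsilon},s^{u}_{1,\varepsilon}]\}}\right].
\end{equation*}
On the complement $\{S_{1}\notin[s^{d}_{1,\varepsilon},s^{u}_{1,\varepsilon}]\}$ (inside $\mathrm{supp}(\mu_{1})\subset[s^{d}_{2},s^{u}_{2}]$), the strict concavity of $\Lambda_{b}$ yields $0\le\lambda_{b}(s_{1})<\varepsilon$, hence the integrand is nonnegative and bounded by $\varepsilon$. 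This gives $\mathbb{E}^{1}[u_{1}(S_{1})]+\mathbb{E}^{2}[u_{2}(S_{2})]\le P_\mathrm{super}+\varepsilon$.

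\medskip
\noindent\textbf{Step 3 (Exact optimality when $\mathrm{supp}(\mu_{1})\subset(s^{d}_{2},s^{u}_{2})$).} The support of a Borel probability measure is closed in $\RR$; combined with $\mathrm{supp}(\mu_{1})\subset(s^{d}_{2},s^{u}_{2})$ and the compactness of $[s^{d}_{2},s^{u}_{2}]$, this forces $\mathrm{supp}(\mu_{1})\subset[s^{d}_{2}+\eta,s^{u}_{2}-\eta]$ for some $\eta>0$. Since $\lambda_{b}$ is continuous on $[s^{d}_{2},s^{u}_{2}]$ and strictly positive on the open interval, the quantity $m\equiv\min_{\mathrm{supp}(\mu_{1})}\lambda_{b}$ is strictly positive. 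On the other hand, $s^{d}_{1,\varepsilon}\downarrow s^{d}_{2}$ and $s^{u}_{1,\varepsilon}\uparrow s^{u}_{2}$ as $\varepsilon\downarrow0$, so for every $\varepsilon<m$ we have $\mathrm{supp}(\mu_{1})\subset[s^{d}_{1,\varepsilon},s^{u}_{1,\varepsilon}]$. Then $\mathbb{P}^{1}(S_{1}\notin[s^{d}_{1,\varepsilon},s^{u}_{1,\varepsilon}])=0$, the error term in Step~2 vanishes, and the portfolio price equals $P_\mathrm{super}$ exactly.

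\medskip
\noindent No step looks genuinely hard; the only mild subtlety is Step~3, where one must be careful that the hypothesis $\mathrm{supp}(\mu_{1})\subset(s^{d}_{2},s^{u}_{2})$ combined with closedness of the support really gives positive separation from the endpoints, which is what makes $m>0$ and permits uniform containment in $[s^{d}_{1,\varepsilon},s^{u}_{1,\varepsilon}]$ for small $\varepsilon$.
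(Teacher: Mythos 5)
Your proof is correct and follows essentially the same route as the paper's: superreplication by construction, $\mathbb{E}^{2}[u_{2}(S_{2})]=0$ from the Bernoulli support, the split of $\mathbb{E}^{1}[u_{1}(S_{1})]$ over $[s^{d}_{1,\varepsilon},s^{u}_{1,\varepsilon}]$ and its complement, and the identification $P_\mathrm{super}=\mathbb{E}^{1}[\lambda_{b}(S_{1})]$ from Theorem~\ref{thm:3cases}. Your Step~3 merely spells out the compactness argument that the paper leaves implicit when asserting that $\mathrm{supp}(\mu_{1})\subset[s^{d}_{1,\varepsilon},s^{u}_{1,\varepsilon}]$ for small $\varepsilon$; this is a welcome but minor elaboration.
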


\begin{proof}
  The superreplication property has already been argued. By definition, $u_2(s_2) = 0$ $\mu_2$-almost everywhere and as a consequence,
\begin{multline*}
     \mathbb{E}^{1}\left[u_1(S_1)\right] + \mathbb{E}^{2}\left[u_2(S_2)\right]
    = \mathbb{E}^{1}\left[u_1(S_1)\right]
    = \mathbb{E}^{1}\left[\lambda_b(S_1)\mathbf{1}_{S_1 \in [s_{1,\varepsilon}^d, s_{1,\varepsilon}^u]}\right] + \mathbb{E}^{1}\left[\varepsilon\mathbf{1}_{S_1 \notin [s_{1,\varepsilon}^d, s_{1,\varepsilon}^u]}\right] \\
    \le \mathbb{E}^{1}\left[\lambda_b(S_1)\right] + \varepsilon = P_\mathrm{super} + \varepsilon
\end{multline*}
using the formula from Theorem~\ref{thm:3cases}. If $\text{supp}(\mu_1) \subset (s_2^d, s_2^u)$ and  $\varepsilon > 0$ is such that $\text{supp}(\mu_1) \subset [s_{1,\varepsilon}^d, s_{1,\varepsilon}^u]$, then we have $\mathbb{E}^{1}\left[\varepsilon\mathbf{1}_{S_1 \notin [s_{1,\varepsilon}^d, s_{1,\varepsilon}^u]}\right]=0$ in the above and thus $\mathbb{E}^{1}\left[u_1(S_1)\right] + \mathbb{E}^{2}\left[u_2(S_2)\right]=P_\mathrm{super}$.
\end{proof}

\section{The case where $\mu_2$ has compact support}\label{sec:mu2_compact_support}

In this section, we focus on the case where $\mu_2$ has compact support in $\RR_+^*$. We denote
\begin{equation}
s_2^d \equiv \min \, \mathrm{supp}(\mu_{2}) > 0, \qquad s_2^u \equiv \max \, \mathrm{supp}(\mu_{2}) < +\infty \label{eq:mu2_compact_support}.
\end{equation}
We assume throughout this section that the market is arbitrage-free, i.e., that $\mu_1$ and $\mu_2$ are in convex order. This implies that $\mathrm{{supp}(\mu_{1})}\subset[s_{2}^{d},s_{2}^{u}]$, in particular,
\begin{equation*}
s_1^d \equiv \min \, \mathrm{supp}(\mu_{1}), \qquad s_1^u \equiv \max \, \mathrm{supp}(\mu_{1})
\end{equation*}
satisfy $s_2^d\le s_1^d\le s_1^u \le s_2^u$.
Continuing the discussion from the preceding section, we seek sufficient conditions on the market smiles $\mu_1$ and $\mu_2$ under which $P_\mathrm{super} < \sigma_{12}$, and corresponding portfolios.
Following Theorem~\ref{thm:D=Dbar}, three strategies can be used to prove that $P_\mathrm{super} < \sigma_{12}$. We can (i) find a superreplication portfolio whose price is strictly smaller than $\sigma_{12}$, (ii) show that $\Law_{\mu_{1}}(S_{1}, L(S_{1})-\ell_1)$ and $\Law_{\mu_{2}}(S_{2}, L(S_{2}) - \ell_2)$ are not in convex order, or (iii) verify that $\bar\cm(\mu_1,\mu_2)=\emptyset$. 
The following result uses all three strategies. We recall from Section~\ref{sec:binomial} the function $\Lambda_b(s_{1})\equiv \Lambda_{s_2^d,s_2^u}(s_{1})$ defined in (\ref{eq:def_lambda_b}), its root $\lambda_b(s_{1})=\sqrt{\Lambda_b(s_{1})}$ for $s_1\in [s_2^d,s_2^u]$ as well as $\bar{\lambda}_b = \max_{[s_2^d,s_2^u]}\lambda_b>0$ and the solutions $\lambda_{b,\pm}^{-1}(\sigma)$ from~\eqref {eq:lambdaDefSol}.

\begin{prop}\label{prop:mu2_comp_supp}
	Assume (\ref{eq:mu2_compact_support}) and absence of arbitrage. Each of the following implies $P_\mathrm{super} < \sigma_{12}$:
	\begin{itemize}
	\item[(i)] $\mathbb{E}^{1}\left[\lambda_b(S_1)\right]<\sigma_{12}$,
	\item[(ii)] $L(s_1^d)-\ell_1 > L(s_2^d)-\ell_2$, which holds in particular if $s_1^d=s_2^d$ and $\mu_1\neq \mu_2$,
	\item[(iii)] $\mu_{1}(A)>0$ for $A \equiv (s_2^d,\lambda_{b,-}^{-1}(\sigma_{12})) \cup (\lambda_{b,+}^{-1}(\sigma_{12}),s_2^u)$.
	\end{itemize}
\end{prop}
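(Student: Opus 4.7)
The plan is to handle the three sufficient conditions by matching each with a separate characterization from Theorem~\ref{thm:D=Dbar}, Proposition~\ref{prop:optimality_of_local_vol}, and Theorem~\ref{th:strongDuality}. The unifying lemma I will establish first is the pointwise domination
\[
\Lambda_\mu(s_1)\le \Lambda_b(s_1)\quad\text{for $\mu_1$-a.e.\ }s_1,\qquad \mu\in\mathcal{M}(\mu_{1},\mu_{2}).
\]
Indeed, by disintegration the conditional law of $S_2$ given $S_1=s_1$ is supported in $[s_2^d,s_2^u]$ (since $\mathrm{supp}(\mu_2)\subset[s_2^d,s_2^u]$) and has mean $s_1$. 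Writing $s_2=\pi_u(s_2)s_2^u+\pi_d(s_2)s_2^d$ and using convexity of $L$, I get the chord bound $L(s_2/s_1)\le\pi_u(s_2)L(s_2^u/s_1)+\pi_d(s_2)L(s_2^d/s_1)$; integrating against the conditional law and using the martingale identity to evaluate $\int\pi_u(s_2)\,\mu(ds_2\mid S_1=s_1)=\pi_u(s_1)$ (and similarly for $\pi_d$) yields exactly $\Lambda_b(s_1)$ on the right.

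For (i), Theorem~\ref{th:strongDuality} and Proposition~\ref{prop:optimality_of_local_vol} give
\[
P_\mathrm{super}=\sup_{\mu\in\mathcal{M}(\mu_{1},\mu_{2})}\mathbb{E}^{1}\!\left[\sqrt{\Lambda_\mu(S_1)}\right]\le \mathbb{E}^{1}[\lambda_b(S_1)]<\sigma_{12},
\]
where the first inequality is the key lemma and the second is the hypothesis.

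For (ii), Theorem~\ref{thm:D=Dbar}(iv) shows that $P_\mathrm{super}=\sigma_{12}$ would force $\Law_{\mu_{1}}(S_{1},L(S_{1})-\ell_1)$ and $\Law_{\mu_{2}}(S_{2},L(S_{2})-\ell_2)$ to be in convex order. I refute this with the convex test function $g(x,y)=(y-(L(s_2^d)-\ell_2))_+$. Because $L$ is decreasing and $\mathrm{supp}(\mu_2)\subset[s_2^d,s_2^u]$, we have $L(S_2)\le L(s_2^d)$ $\mu_2$-a.s., so $\mathbb{E}^{\mu_2}[g(S_2,L(S_2)-\ell_2)]=0$. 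Meanwhile, the integrand under $\mu_1$ is $(L(s_1)-L(s_2^d)+\sigma_{12}^2)_+$, which by hypothesis is strictly positive at $s_1=s_1^d$; continuity of $L$ makes it bounded below by a positive constant in a neighborhood of $s_1^d$, and since $s_1^d\in\mathrm{supp}(\mu_1)$ that neighborhood has positive $\mu_1$-mass. Hence $\mathbb{E}^{\mu_1}[g(S_1,L(S_1)-\ell_1)]>0$, violating the convex order. The special case $s_1^d=s_2^d$, $\mu_1\neq\mu_2$ reduces to the main hypothesis: by Theorem~\ref{thm:arbitrage-free} and strict convexity of $L$, $\sigma_{12}^2=\ell_2-\ell_1>0$, so $L(s_1^d)-\ell_1=L(s_2^d)-\ell_1>L(s_2^d)-\ell_2$.

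For (iii), Theorem~\ref{thm:D=Dbar}(iii) reduces the task to showing $\bar{\mathcal{M}}(\mu_1,\mu_2)=\emptyset$. Any $\mu\in\bar{\mathcal{M}}(\mu_1,\mu_2)$ satisfies $\Lambda_\mu(S_1)=\sigma_{12}^2$ $\mu_1$-a.s., and combined with the key lemma this forces $\lambda_b(S_1)\ge\sigma_{12}$ $\mu_1$-a.s. But the construction of $A$ together with the concavity of $\lambda_b$ and its characterization at the level $\sigma_{12}$ (recalled in Section~\ref{sec:binomial}) give $\lambda_b<\sigma_{12}$ strictly on $A$, contradicting $\mu_1(A)>0$. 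There is no real obstacle in any of the three parts once the key lemma is in hand; the only bookkeeping is identifying which clause of Theorem~\ref{thm:D=Dbar} each hypothesis is designed to attack.
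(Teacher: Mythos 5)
Your proof is correct, and parts (ii) and (iii) follow essentially the paper's own route: (iii) rests on the same domination $\Lambda_\mu(s_1)\le\Lambda_b(s_1)$ (you prove it by a direct chord bound plus the linearity of $\pi_u,\pi_d$ and the martingale property, the paper by identifying $\Lambda_b(s_1)$ as the supremum over all conditional laws supported in $[s_2^d,s_2^u]$ with mean $s_1$ --- the same computation), and (ii) is the paper's support argument made explicit with the convex test function $g(x,y)=(y-(L(s_2^d)-\ell_2))_+$, which is a nice concrete witness for the failure of the two-dimensional convex order. The genuine difference is in (i): the paper argues on the primal side, observing that the functionally generated $\eps$-portfolio \eqref{eq:Xi_epsilon} from Section~\ref{sec:binomial} still superreplicates when $\mu_2$ merely has compact support (since $u_2$ vanishes on $[s_2^d,s_2^u]\supset\mathrm{supp}(\mu_2)$, its price is at most $\mathbb{E}^1[\lambda_b(S_1)]+\eps<\sigma_{12}$), whereas you argue on the dual side, combining Theorem~\ref{th:strongDuality}, Proposition~\ref{prop:optimality_of_local_vol}, and your domination lemma to get $P_\mathrm{super}\le\mathbb{E}^1[\lambda_b(S_1)]$. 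Your route is shorter but leans on the full duality machinery; the paper's route additionally produces an explicit near-optimal hedge, which is part of the point of that section. One small omission in your part (iii): you use $\lambda_{b,\pm}^{-1}(\sigma_{12})$ without checking that $\sigma_{12}\le\bar\lambda_b$, so that these quantities (and hence the set $A$) are well defined; the paper derives this from $\sigma_{12}^2=\mathbb{E}^1[\Lambda_\mu(S_1)]\le\mathbb{E}^1[\Lambda_b(S_1)]\le\bar\lambda_b^2$, which is immediate from your own key lemma, so this is a one-line fix rather than a gap.
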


\begin{proof}
	(i) Let $\varepsilon \in (0,\bar{\lambda}_b)$ be such that $\mathbb{E}^{1}\left[\lambda_b(S_1)\right]+\varepsilon<\sigma_{12}$. Exactly as in Section~\ref{sec:binomial}, the portfolio defined in~(\ref{eq:Xi_epsilon}) superreplicates the VIX, and its price is bounded from above by $\mathbb{E}^{1}\left[\lambda_b(S_1)\right] + \varepsilon<\sigma_{12}$. This proves that $P_\mathrm{super} < \sigma_{12}$.
	
	(ii) Let $L(s_1^d)-\ell_1 > L(s_2^d)-\ell_2$. As $L$ is decreasing, this means that the support of $\Law_{\mu_{1}}(L(S_{1})-\ell_1)$ is not included in the convex hull of the support of $\Law_{\mu_{2}}(L(S_{2}) - \ell_2)$, so these distributions are not in convex order, and then the same holds for $\Law_{\mu_{1}}(S_{1}, L(S_{1})-\ell_1)$ and $\Law_{\mu_{2}}(S_{2}, L(S_{2}) - \ell_2)$. By Theorem~\ref{thm:D=Dbar}, we conclude that $P_\mathrm{super}<\sigma_{12}$. If $\mu_1\neq \mu_2$, then $\ell_1 < \ell_2$ due to the strict convexity of $L$, which yields the second claim.
	
	(iii) Since $\mu_1$ and $\mu_2$ are in convex order, $\cm(\mu_1,\mu_2)\neq\emptyset$. For any $\mu\in\cm(\mu_1,\mu_2)$ and for $\mu_1$-almost all $s_1$,
	\begin{equation*}
	\Lambda_\mu(s_1) \le \sup_{\pi\in\Pi_{s_1}} \EE^{\pi}\left[ L\left(\frac{S_2}{s_1}\right)\right] = \Lambda_b(s_1), \label{eq:upper_bound_lambda_mu}
	\end{equation*}
	where $\Pi_{s_1}$ denotes the set of all probability measures $\pi$ such that $\mathrm{{supp}(\pi)}\subset[s_{2}^{d},s_{2}^{u}]$ and $\EE^{\pi}[S_2]=s_1$. Indeed, the inequality follows directly from the definition (\ref{eq:def_Lambda}) of $\Lambda_\mu$ and the fact that $\mathrm{{supp}(\mu_{2})}\subset[s_{2}^{d},s_{2}^{u}]$; moreover, since $L$ is convex, the supremum is attained when $\pi$ is the Bernoulli distribution that takes values in $\{s_{2}^{d},s_{2}^{u}\}$ and has mean $s_1$, whence the equality. As a consequence,
	\[
	\sigma_{12}^2 = \EE^1[\Lambda_\mu(S_1)]\le \EE^1[\Lambda_b(S_1)],
	\]
	hence $\sigma_{12} \le\bar{\lambda}_b$ and the set $A$ is well defined. Let $\mu\in\cm(\mu_1,\mu_2)$, then $\Lambda_\mu(s_1)\le\Lambda_b(s_1)<\sigma_{12}^2$ for $\mu_{1}$-a.e.\ $s_{1}\in A$. As a consequence, if $\mu_{1}(A)>0$, there exists no $\mu\in\cm(\mu_1,\mu_2)$ such that $\Lambda_\mu(S_1)=\sigma_{12}^2$ $\mu_1$-a.s., that is, $\bar\cm(\mu_1,\mu_2)=\emptyset$. By Theorem~\ref{thm:D=Dbar}, this implies $P_\mathrm{super} < \sigma_{12}$.
\end{proof}

\section{Smiles for which the classical upper bound is optimal}\label{sec:examples_where_classical_is_optimal} 

In this section, we show how to construct examples of arbitrage-free smiles $\mu_{1},\mu_{2}$ such that the classical upper bound is optimal, i.e., $P_\mathrm{super}=\sigma_{12}$. We recall from Theorem~\ref{thm:D=Dbar} that this is equivalent to $\bar{\cm}(\mu_1,\mu_2)$ of Definition~\ref{def:Mbar} being nonempty. Going backward, let $\mu\in\bar{\cm}(\mu_1,\mu_2)$, that is, $\mu\in\cm(\mu_1,\mu_2)$ and the price $\Lambda_\mu(S_1) \equiv \mathbb{E}^{\mu}\left[L\left(S_{2}/S_{1}\right)\middle|S_{1}\right]$ of the FSLC at $T_{1}$ is constant and equal to $\sigma_{12}^{2}$. Disintegrating $\mu=\mu_{1}\otimes T$ into its first marginal $\mu_{1}$ and a transition kernel $T(s,dx)=\mu(dx|S_{1}=s)$, these two conditions can be stated as
$$
  \int x \, T(s,dx) =s,\qquad\! \int L(x) \, T(s,dx) =L(s)+ \sigma_{12}^2 \qquad\mbox{for $\mu_{1}$-a.e.\ $s\in\R_{+}^{*}$.}
$$
 Conversely, if $T$ is a stochastic kernel with these two properties, then $\mu_{1}\otimes T\in \bar{\cm}(\mu_1,\mu_2)$. In brief, constructing an element of $\bar{\cm}(\mu_1,\mu_2)$ boils down to determining such a kernel.

One instance, similar to Example~4.14 in~\cite{phl-demarco}, is the following conditional Bernoulli model. Given $\sigma_{12}\geq0$ and measurable functions $\alpha_d, \alpha_u$ such that $0 < \alpha_d < 1 < \alpha_u$, let
\begin{equation}
\label{eq:cond_binomial}
  T(s,dx) \equiv p(s)\delta_{\alpha_{u}(s)s}(dx)  + (1-p(s)) \delta_{\alpha_{d}(s)s}(dx).
\end{equation}
Then, $T$ satisfies the two conditions if and only if, $\mu_{1}$-a.s.,
$p(s)\alpha_{u}(s)+(1-p(s))\alpha_{d}(s)=1$, i.e.,
\[
p(s)=\frac{1-\alpha_{d}(s)}{\alpha_{u}(s)-\alpha_{d}(s)},
\]
and  $p(s) L\left(\alpha_{u}(s)\right)+(1-p(s))L\left(\alpha_{d}(s)\right)=\sigma_{12}^{2}$,
i.e.,
\[
\frac{1-\alpha_{d}(s)}{\alpha_{u}(s)-\alpha_{d}(s)}\ln\alpha_{u}(s)+\frac{\alpha_{u}(s)-1}{\alpha_{u}(s)-\alpha_{d}(s)}\ln\alpha_{d}(s)=-\frac{1}{2}\sigma_{12}^{2}\tau.
\]
The latter can be rewritten as
\[
f_{\alpha_d(s)}(\alpha_u(s)) \equiv \frac{1-\alpha_d(s)}{\alpha_u(s)-\alpha_d(s)}\ln \frac{\alpha_u(s)}{\alpha_d(s)} + \ln \alpha_d(s) = -\frac{1}{2}\sigma_{12}^{2}\tau
\]
and for $0<a_{0}<1$, $a\mapsto f_{a_{0}}(a)$ is a decreasing continuous function on $[1, \infty)$ such that $f_{a_{0}}(1)=0$ and $f_{a_{0}}(\infty)=\ln a_{0}$. Therefore, if $\alpha_d$ is given and $\ln \alpha_d < -\frac{1}{2}\sigma_{12}^{2}\tau$, then there exists a unique $\alpha_u = f_{\alpha_d}^{-1}(-\frac{1}{2}\sigma_{12}^{2}\tau)$ such that the above conditions are satisfied. Now, \eqref{eq:cond_binomial} defines a kernel $T$. If $\mu_{1}$ is any distribution on $\R_{+}^{*}$ with mean $S_{0}$, we can set $\mu=\mu_{1}\otimes T$ and define $\mu_{2}$ to be the second marginal of $\mu$. Then, we have constructed smiles $\mu_{1},\mu_{2}$ together with an element  $\mu\in\bar{\cm}(\mu_1,\mu_2)$, thus $P_\mathrm{super}=\sigma_{12}$.
A similar reasoning can be applied to three-point or $n$-point models instead of Bernoulli.

\section{Numerical results}\label{sec:num}

\subsection{LP solver}\label{sec:LPsolver}

When we numerically solve the primal problems (\ref{eq:primal}) and (\ref{eq:primal_sub}), we discretize the payoffs $u_1$ and $u_2$ using a finite basis of out-the-money (OTM) calls and puts, cash $\alpha$, an initial delta $\Delta$, as well as the log-contract (so that the LP solver can exactly recover the classical superreplicating portfolio). Moreover, we decompose the deltas $\Delta^{S}$ and $\Delta^{L}$ over a polynomial basis, with respective orders $n_S$ and $n_L$:
\[
\Delta^{S}(s_{1},v)=\sum_{0\le i+j\le n_S}\Delta_{ij}^{S}\left(\frac{s_{1}}{S_{0}}\right)^{i}v^{j},\qquad\Delta^{L}(s_{1},v)=\sum_{0\le i+j\le n_L}\Delta_{ij}^{L}\left(\frac{s_{1}}{S_{0}}\right)^{i}v^{j}.
\]
Numerically, we can only check the super/subreplication constraints (\ref{eq:constraint}) and (\ref{eq:constraint_sub}) on a large but finite grid $G$ of values of $(s_1,s_2,v)$. Therefore our numerical upper bound is
\bea
P_\mathrm{super}^\mathrm{num}\equiv\inf_{\theta\in \Theta}\left\{\alpha+\sum_{i=1}^{m_{1}}\omega_{i}^{1}O_1(K_{i}^{1})+\sum_{i=1}^{m_{2}}\omega_{i}^{2}O_2(,K_{i}^{2})+\beta_{1}\mathrm{VS}_1+\beta_{2}\mathrm{VS}_2\right\} \label{eq:primal_num}
\eea
where
\begin{eqnarray*}
\theta = (\alpha,\Delta,\omega_{1}^{1},\ldots,\omega_{m_{1}}^{1},\omega_{1}^{2},\ldots,\omega_{m_{2}}^{2},\beta_{1},\beta_{2},(\Delta_{ij}^{S},0\le i,j\le n_{S}),(\Delta_{ij}^{L},0\le i,j\le n_{L}))^{t}\in\mathbb{R}^{p}
\end{eqnarray*}
(with $p = 4+m_{1}+m_{2}+\frac{(n_{S}+1)(n_{S}+2)}{2}+\frac{(n_{L}+1)(n_{L}+2)}{2}$) and $\Theta$ is the set of variables $\theta$ such that
\begin{multline*}
\forall (s_{1},s_{2},v)\in G,\qquad \Pi_\mathrm{num}(s_1,s_2,v)\equiv\alpha+\Delta(s_{1}-S_{0})+\sum_{i=1}^{m_{1}}\omega_{i}^{1}g(s_{1},K_{i}^{1})+\sum_{i=1}^{m_{2}}\omega_{i}^{2}g(s_{2},K_{i}^{2}) \\
+\beta_{1}\left(-\frac{2}{T_{1}}\ln\frac{s_{1}}{S_{0}}\right)+\beta_{2}\left(-\frac{2}{T_{2}}\ln\frac{s_{2}}{S_{0}}\right)
+\Delta^{S}(s_{1},v)(s_{2}-s_{1})+\Delta^{L}(s_{1},v)\left(-\frac{2}{\tau}\ln\frac{s_{2}}{s_{1}}-v^{2}\right)\ge v.
\end{multline*}
Here, $g(s,K)$ denotes the OTM vanilla payoff, i.e.,
\[
g(s,K)=\begin{cases}
(K-s)_{+} & \mathrm{if}\,\, K\le S_{0}\quad\mathrm{(OTM\,\, put)}\\
(s-K)_{+} & \mathrm{if}\,\, K>S_{0}\quad\mathrm{(OTM\,\, call)},
\end{cases}
\]
$O_i(K)$ denotes the market price of the OTM vanilla payoff $g(S_{i},K)$,
and $\mathrm{VS}_i$ denotes the price of the variance swap of
maturity $T_{i}$, i.e., of the payoff $-\frac{2}{T_{i}}\ln\frac{S_{i}}{S_{0}}$.
Similarly, our numerical lower bound reads
\bea
P_\mathrm{sub}^\mathrm{num}\equiv\sup_{\theta\in \Theta}\left\{\alpha+\sum_{i=1}^{m_{1}}\omega_{i}^{1}O_1(K_{i}^{1})+\sum_{i=1}^{m_{2}}\omega_{i}^{2}O_2(K_{i}^{2})+\beta_{1}\mathrm{VS}_1+\beta_{2}\mathrm{VS}_2\right\}  \label{eq:primal_sub_num}
\eea
where $\Theta$ is the set of variables $\theta$ such that for all $(s_{1},s_{2},v)\in G$, $\Pi_\mathrm{num}(s_1,s_2,v)\le v$. Note that if we could check the constraints everywhere, and not only on the finite grid $G$, we would have $P_\mathrm{super} \le P_\mathrm{super}^\mathrm{num}$ and $P_\mathrm{sub}^\mathrm{num} \le P_\mathrm{sub}$, which means that $P_\mathrm{super}^\mathrm{num}$ and $P_\mathrm{sub}^\mathrm{num}$ would be acceptable upper and lower bounds. Thus it is important to use a large enough grid $G$.

\medskip

To solve the problems (\ref{eq:primal_num}) and (\ref{eq:primal_sub_num}), we have used the software package MOSEK, with $m_1=m_2=30$, $n_S=n_L=4$, and a grid of constraints $G$ made of 130 values of $s_1$, 130 values of $s_2$ (unevenly distributed from 0.01 to 5, and including the strikes of the OTM calls and puts) and 100 values of $v$ (unevenly distributed from $1.9\%$ to $273\%$). Let us first consider the case where the smiles $\mu_1$ and $\mu_2$ are those of a SABR model
\bea
dS_t = \sigma_t S_t^{-\beta} \, dW_t, \qquad d\sigma_t = \alpha \sigma_t \, dZ_t, \qquad d\langle W,Z\rangle_t &=& \rho \, dt \label{eq:SABRmodel}
\eea
with $\sigma_0 = 20\%$, $\beta = -0.7$, $\alpha = 1$, $\rho = -50\%$, and $T_1=2$ months. The corresponding smiles and densities are reported in Figure \ref{fig:sabr_densities}. The implied volatility of the FSLC is $\sigma_{12} \approx 22.8\%$. The LP solver yields $P_\mathrm{super}^\mathrm{num} \approx 22.8\%$, together with the classical superreplicating portfolio (\ref{eq:portfolio_superrep_classical}), so the classical upper bound seems to be optimal. For the lower bound, we get $P_\mathrm{sub}^\mathrm{num} \approx 7.2\%$, which is much larger than the classical lower bound (zero), and the corresponding portfolio $(u_1,u_2,\Delta^S,\Delta^L)$ is reported in Figure \ref{fig:sabr_optimal_subreplicating_portfolio}. 

\begin{figure}[h]
\begin{center}
\includegraphics[width=8cm,height=6cm]{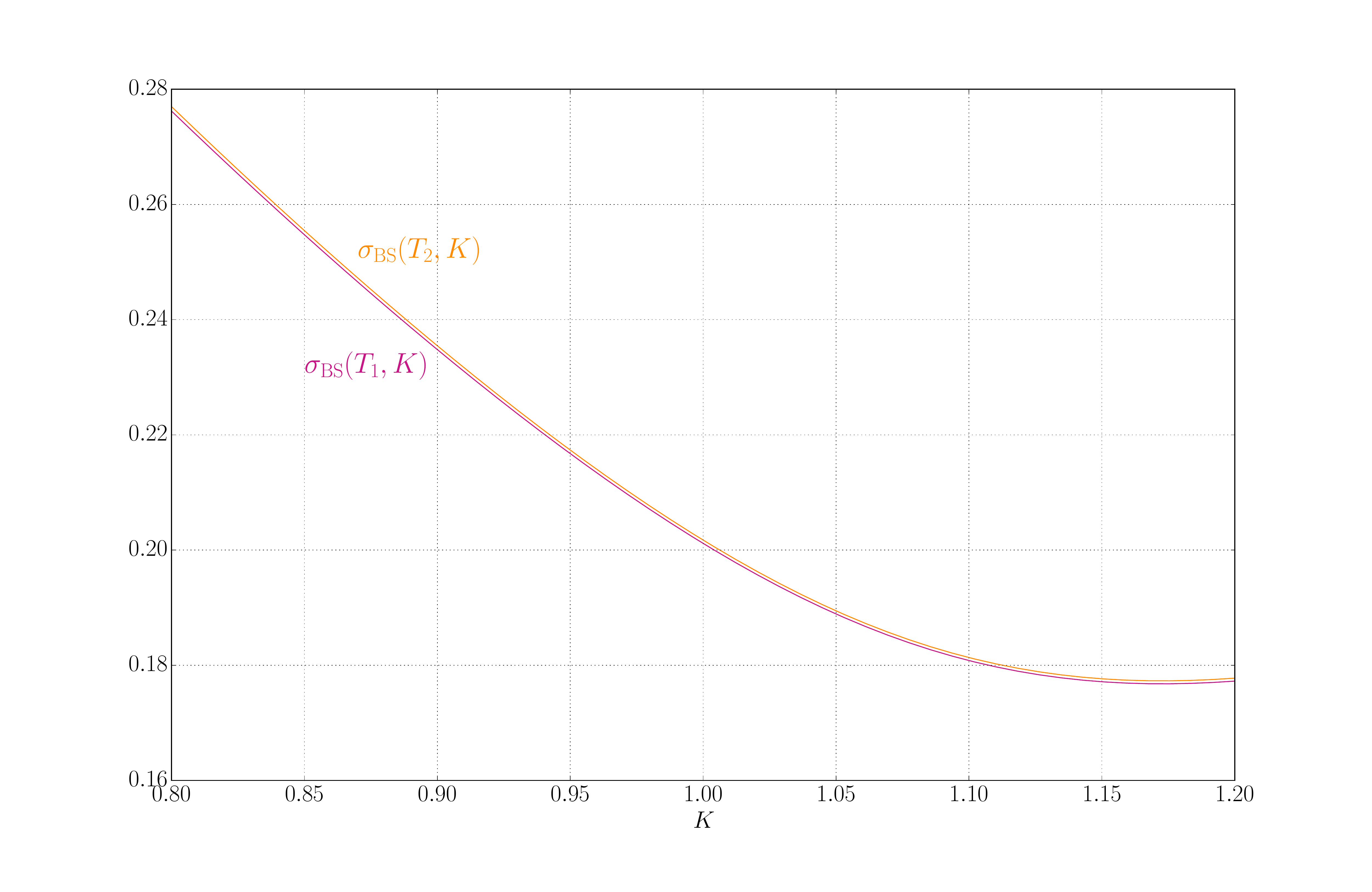}
\includegraphics[width=8cm,height=6cm]{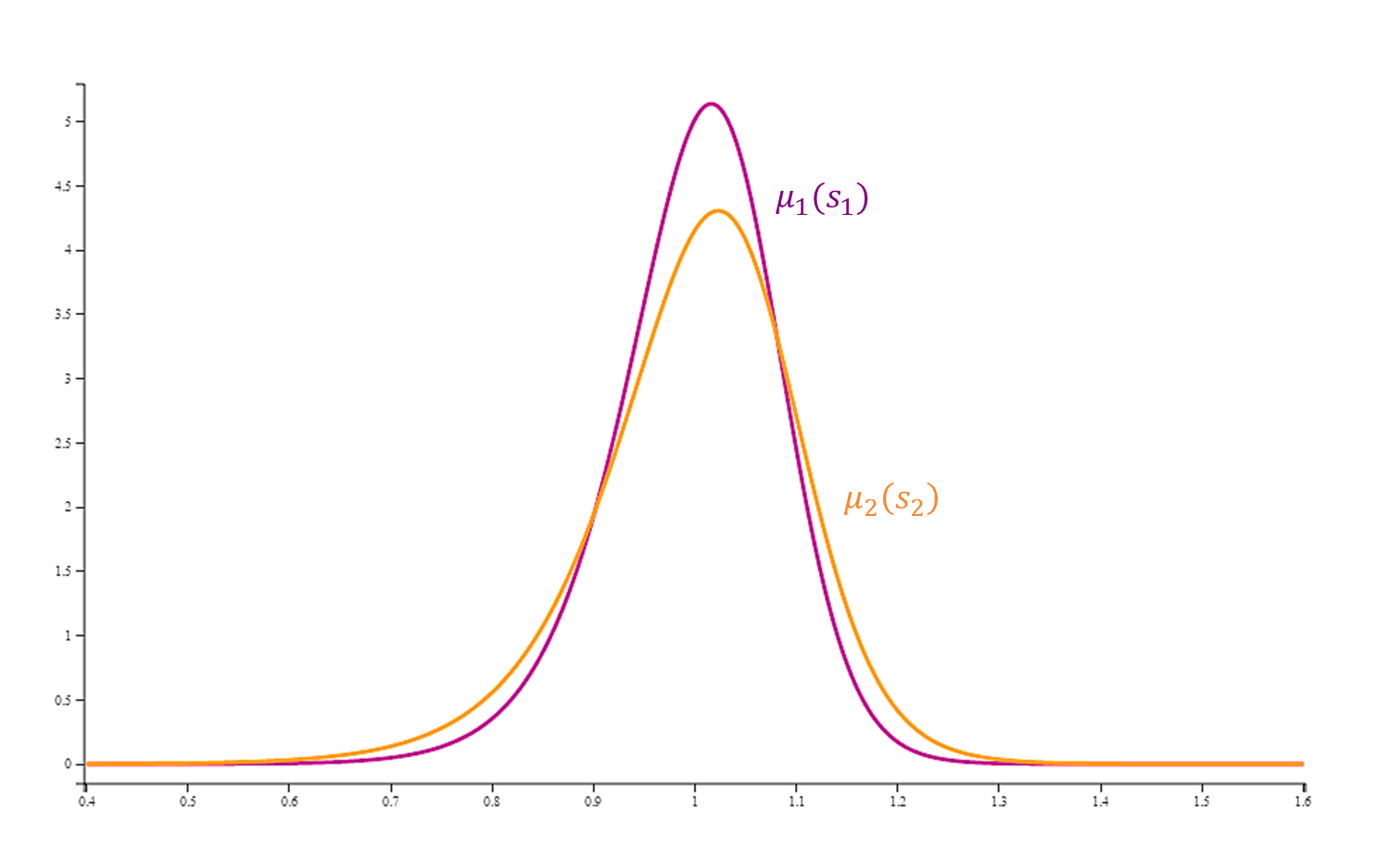}
\end{center}
\caption{Smiles (left) and densities (right) of the SABR model (\ref{eq:SABRmodel}) at maturities $T_1=2$ months and $T_2=T_1+30$ days; $\sigma_0 = 20\%$, $\beta = -0.7$, $\alpha = 1$, and $\rho = -50\%$}
\label{fig:sabr_densities}
\end{figure}

\begin{figure}
\begin{center}
\includegraphics[width=8cm,height=6cm]{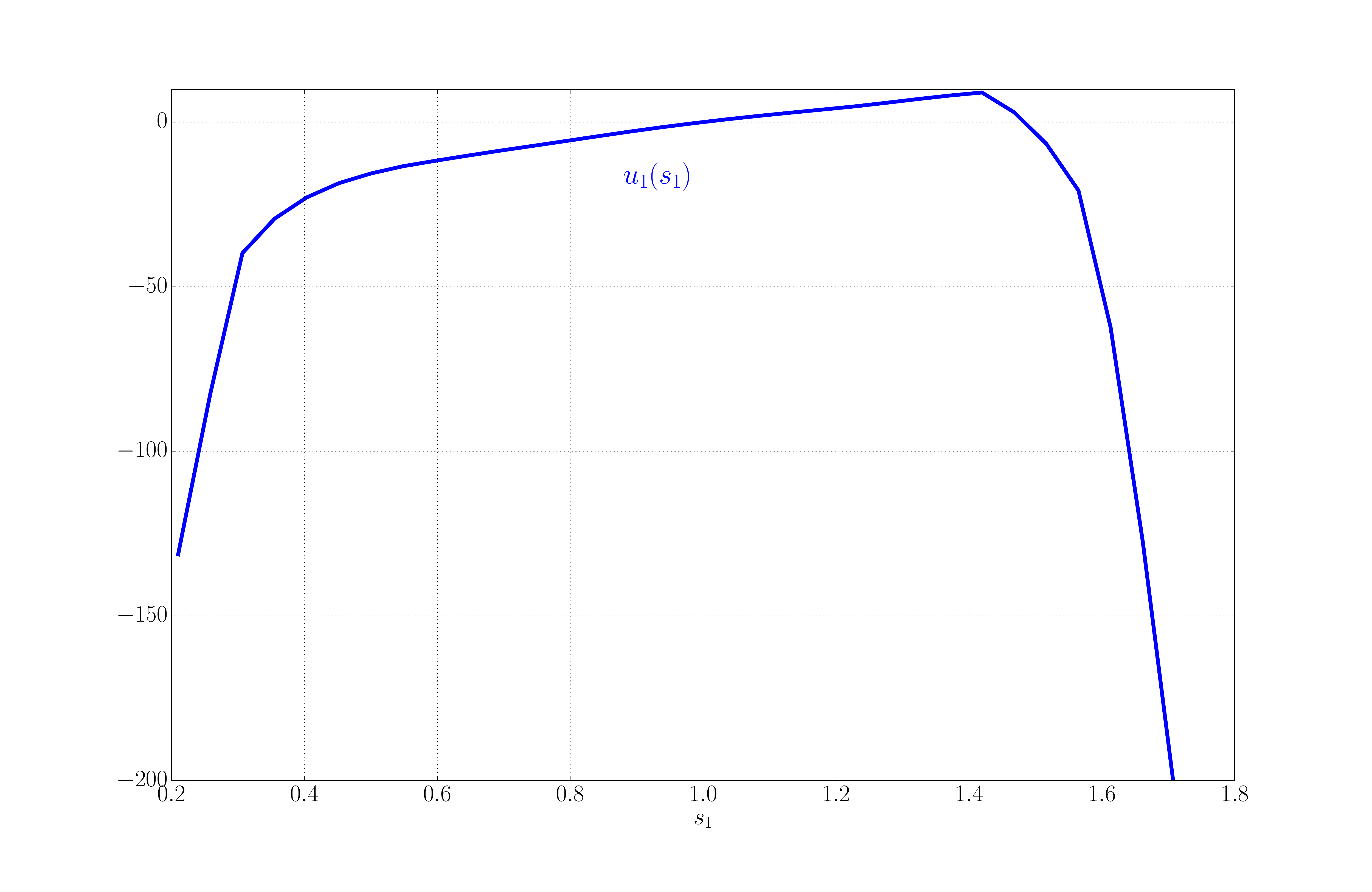}
\includegraphics[width=8cm,height=6cm]{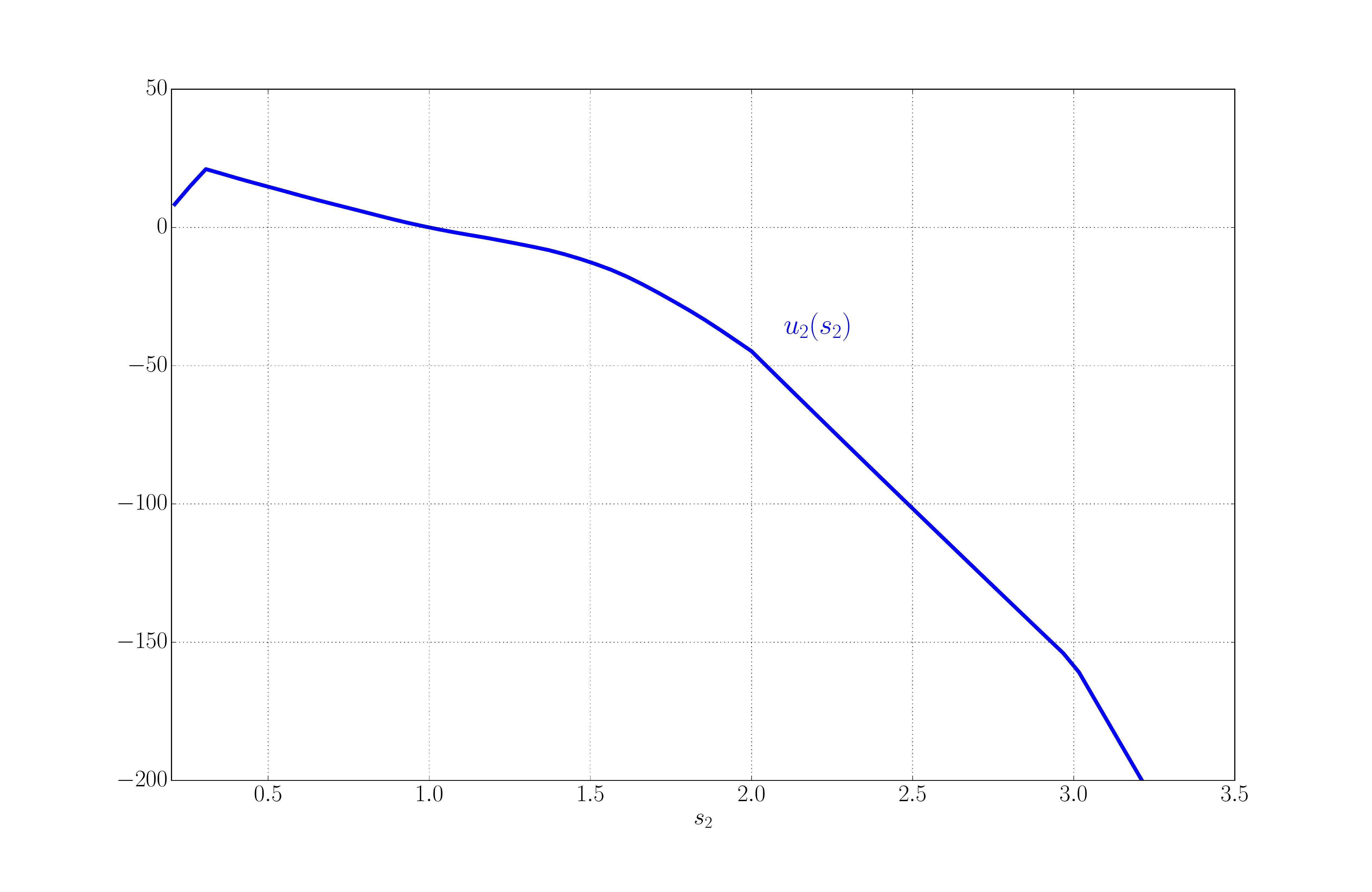}\\
\includegraphics[trim={4cm 4cm 4cm 4cm}, clip,width=8cm,height=5.5cm]{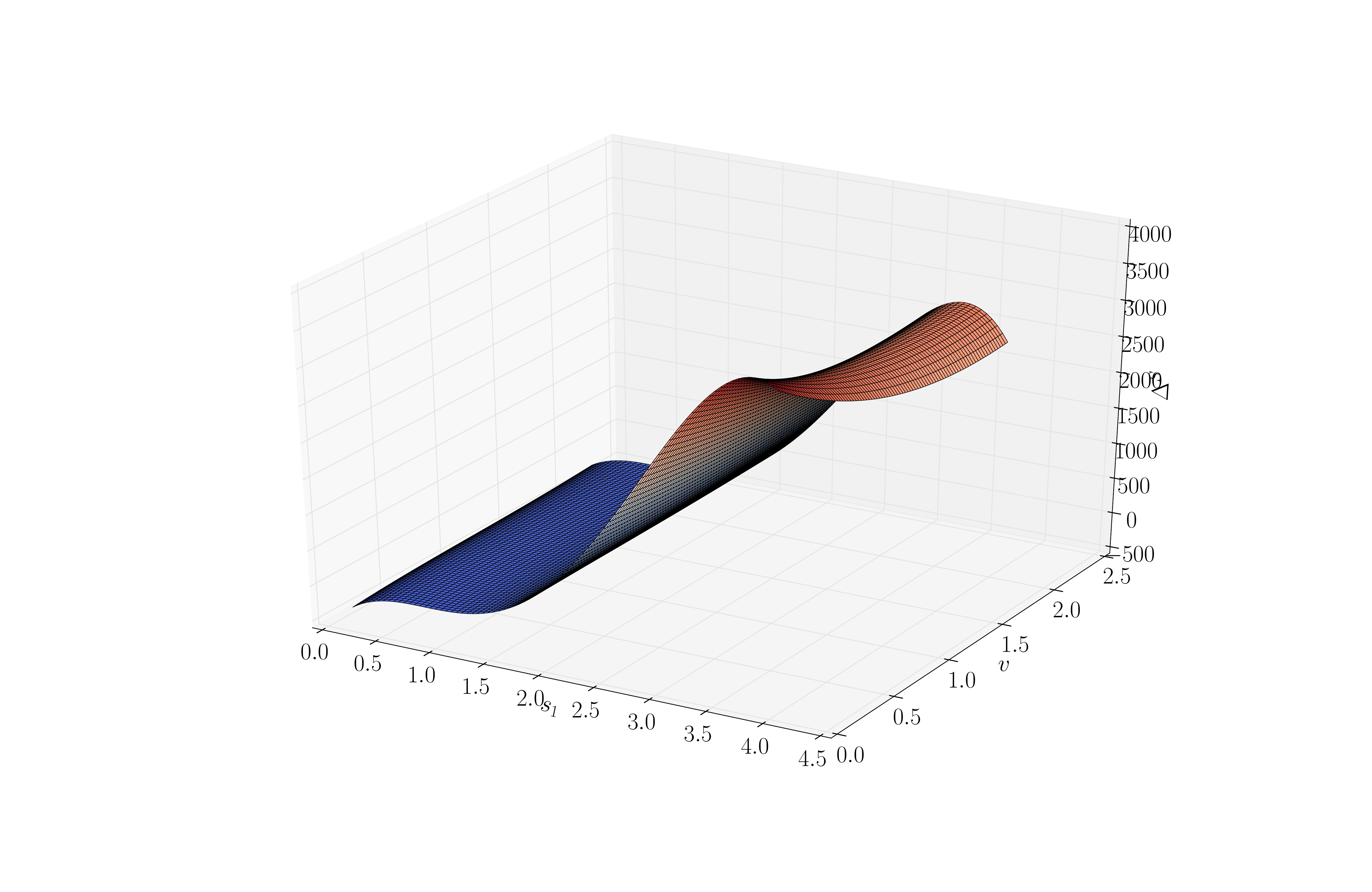}
\includegraphics[trim={4cm 4cm 4cm 4cm}, clip,width=8cm,height=5.5cm]{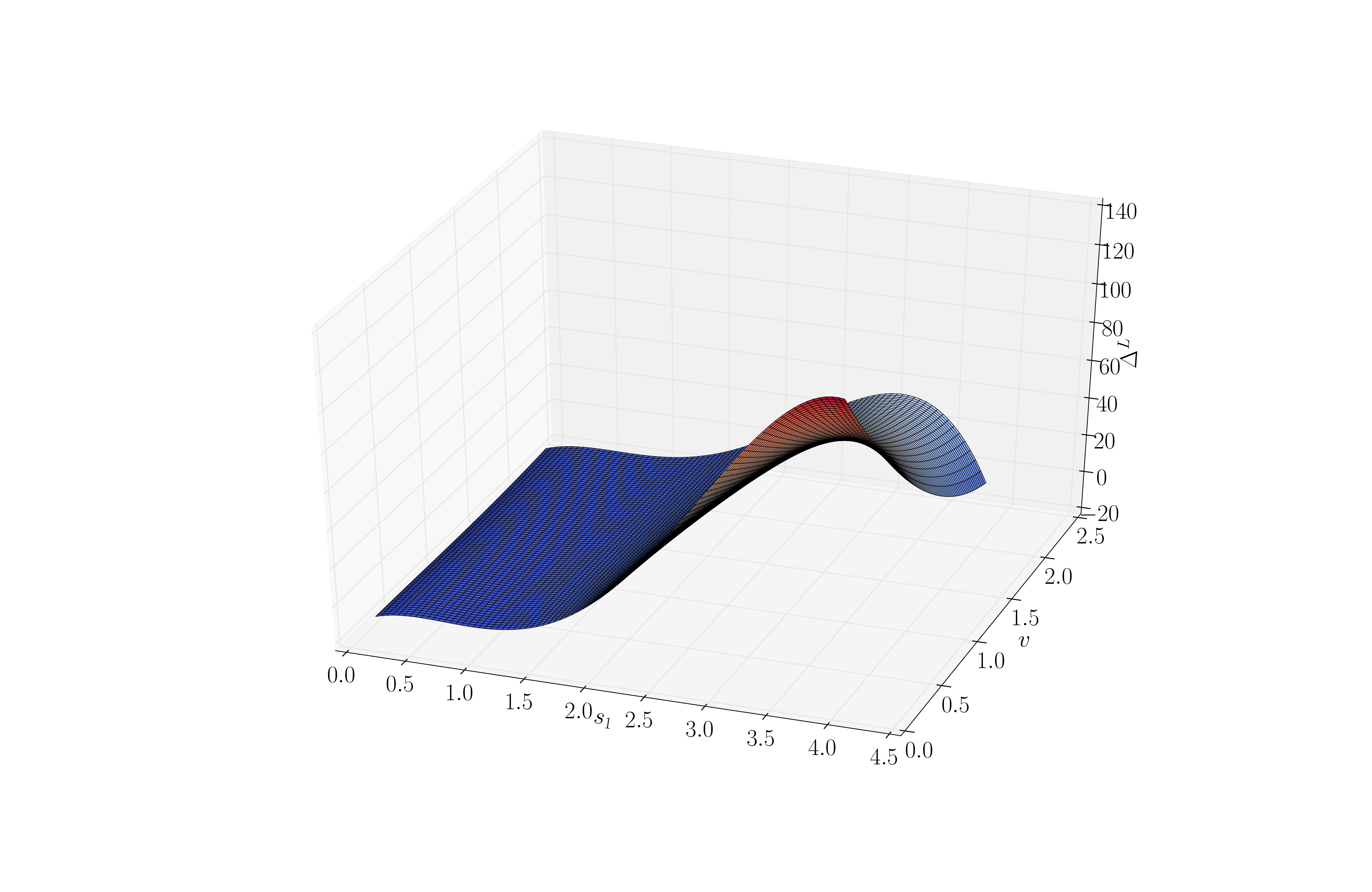}
\end{center}
\caption{Numerical optimal subreplicating portfolio $u_1$ (top left), $u_2$ (top right), $\Delta^S$ (bottom left), $\Delta^L$ (bottom right) when $\mu_1$ and $\mu_2$ are the SABR risk-neutral measures of Figure \ref{fig:sabr_densities}, using polynomials $\Delta^S$, $\Delta^L$ of degree 4}
\label{fig:sabr_optimal_subreplicating_portfolio}
\end{figure}

In Figure \ref{fig:sabr_subreplication_constraints} we check the subreplication constraints $\Pi_\mathrm{num}(s_1,s_2,v)\le v$. Note that, since the grid $G$ is finite, subreplication is not guaranteed everywhere. For instance, Figure \ref{fig:sabr_subreplication_constraints_v2p28} shows that at the very high value $v=228\%$, the subreplication constraint is not satisfied for some $(s_1,s_2)$. This is because $228\%$ is not a value of $v$ in the grid~$G$; its nearest neighbors in this grid are $v=182\%$ and $v=273\%$.

\begin{figure}
\begin{center}
\includegraphics[trim={2.5cm 2cm 2.5cm 2.5cm}, clip, width=8cm,height=5.5cm]{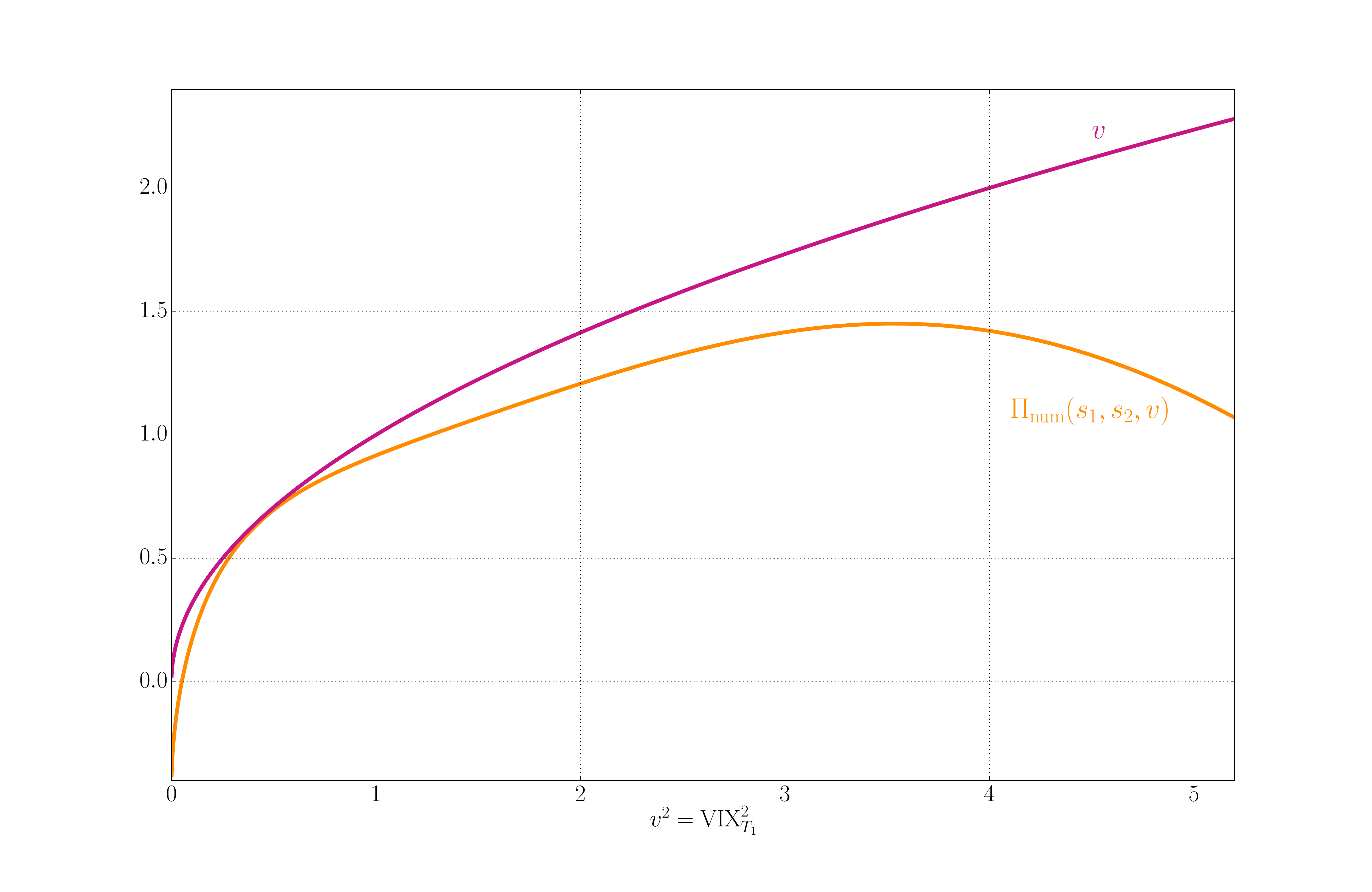}
\includegraphics[trim={4cm 4cm 4cm 4cm}, clip, width=8cm,height=5.5cm]{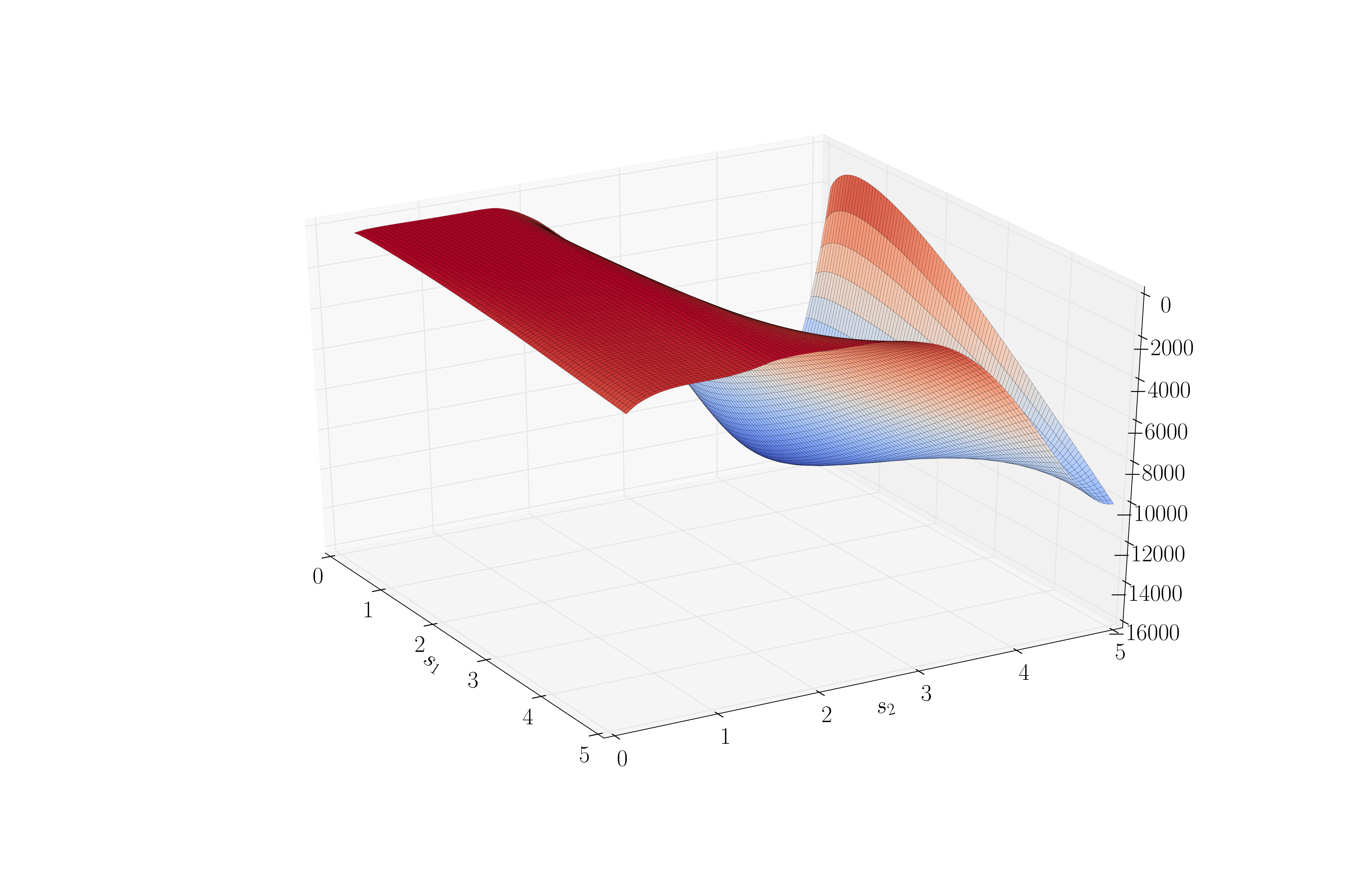}
\end{center}
\caption{Left: $\Pi_\mathrm{num}(s_1,s_2,v)$ and $v$ as a function of $v^2$, for $s_1=1.03$ and $s_2=1.18$. Right: $\Pi_\mathrm{num}(s_1,s_2,v)-v$ as a function of $(s_1,s_2)$ for $v=47\%$. This quantity is always negative. We used the SABR risk-neutral measures of Figure \ref{fig:sabr_densities}.}
\label{fig:sabr_subreplication_constraints}
\end{figure}

\begin{figure}
\begin{center}
\includegraphics[trim={3cm 3cm 3cm 3cm},width=8cm,height=6cm]{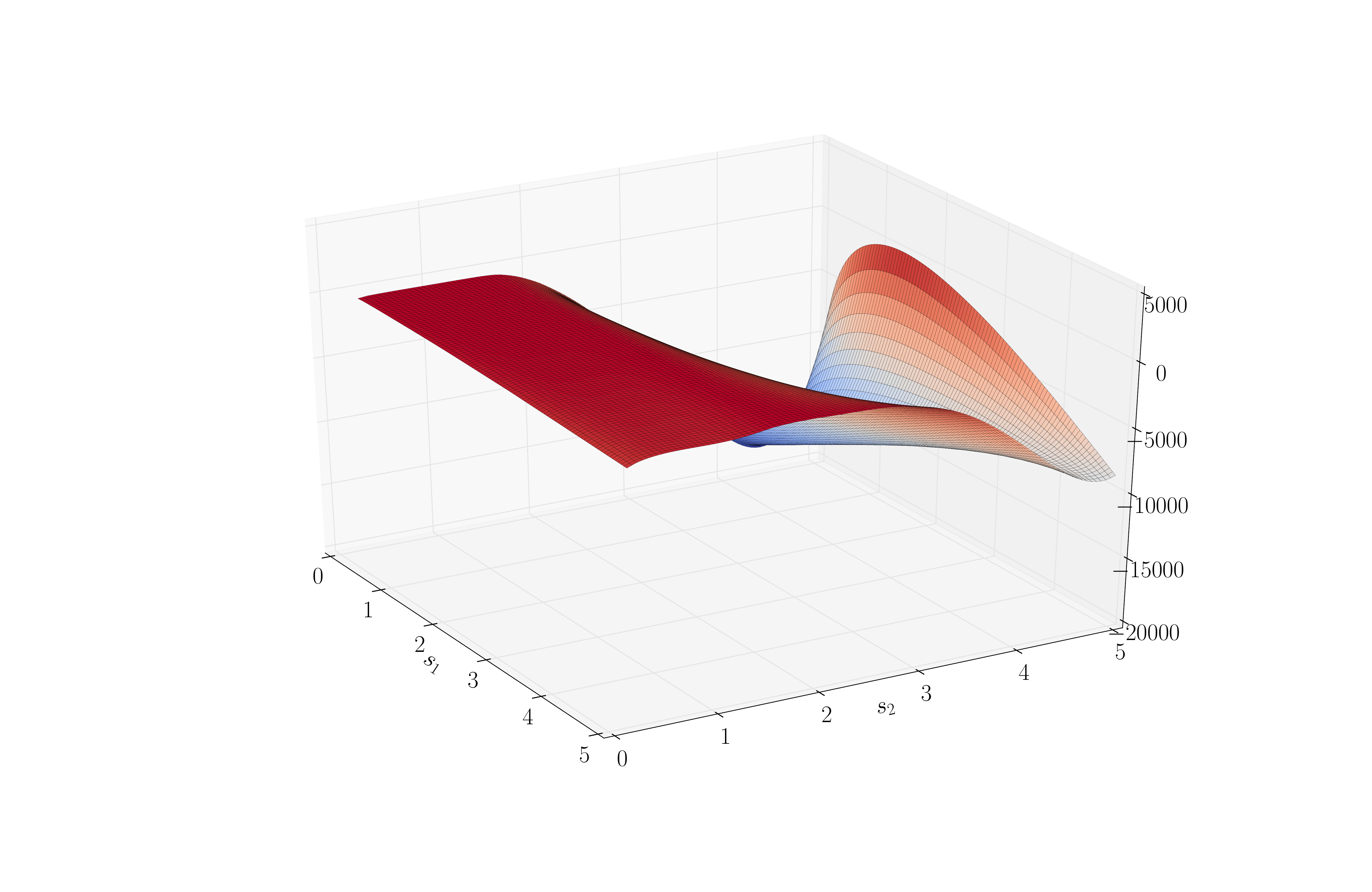}
\end{center}
\caption{$\Pi_\mathrm{num}(s_1,s_2,v)-v$ as a function of $(s_1,s_2)$ for $v=228\%$. This quantity is \emph{not} always negative (maximum value: 26). We used the SABR risk-neutral measures of Figure~\ref{fig:sabr_densities}.}
\label{fig:sabr_subreplication_constraints_v2p28}
\end{figure}

\subsection{Optimization over functionally generated subreplicating portfolios}

We recall from~\eqref{eq:cvx1def} that $P_\mathrm{sub} \ge P_\mathrm{sub}^{\mathrm{ccv},1}$,
where $P_\mathrm{sub}^{\mathrm{ccv},1}$ is the bound obtained from functionally generated portfolios,
\[
P_\mathrm{sub}^{\mathrm{ccv},1}=\sup_{\psi\in\mathcal{\mathcal{F}_{\mathrm{ccv}}}(\mathbb{R}),a\in\mathbb{R}}\left\{ \mathbb{E}^{1}\left[\inf_{v\ge0}\left\{ v-\psi(aS_{1}+L(S_{1})+v^{2})\right\} \right]+\mathbb{E}^{2}[\psi(aS_{2}+L(S_{2}))]\right\} .
\]

\subsubsection{Piecewise linear profiles}
Here we consider concave functions on $\RR$ that are piecewise linear. We start with a partition $-1=x_0<\cdots<x_N=1$ of $[-1, 1]$ and look at the concave, piecewise linear functions defined on $[-1, 1]$ with kinks at the points $x_i$ and a nonpositive right slope. These can be parametrized in the following way: 
\[
\psi_\omega(x) = \omega_{N-1}(x_{N-1}-x) - \displaystyle \sum_{i=1}^{N-1} \omega_{i-1} (x_i-x)_+ \, ,
\qquad \omega \in \mathbb{R}_+^{N}.
\]
We then extend the domain of definition of these functions to $\RR$ by linear extrapolation. Finally, we consider the homothetic transforms of the $\psi_\omega$, i.e., $\left\{s \mapsto \psi_\omega(\gamma s + b), \omega \in \mathbb{R}_+^{N}, \gamma>0, b \in \RR\right\}$; these form a subset of~$\mathcal{F}_{\mathrm{ccv}}(\RR)$.
The associated optimization problem $P_{\mathrm{kink}}^{N}$ satisfies $P_{\mathrm{kink}}^{N} \le P_\mathrm{sub}^{\mathrm{ccv},1}$, where
\[
P_{\mathrm{kink}}^{N} = \sup_{\omega \in \mathbb{R}_+^{N},\gamma>0,a,b\in\mathbb{R}} \left\{ \mathbb{E}^{1}\left[\inf_{v\ge0}\left\{ v-\psi_\omega(\gamma(v^{2}+\Lambda_{a,b}(S_{1})))\right\} \right]+\mathbb{E}^{2}[\psi_\omega(\gamma\Lambda_{a,b}(S_{2}))]\right\} .
\]
By concavity of the square root, the above infimum is reached either at the kinks of $\psi_\omega$ or for $v=0$:
\begin{equation*}
u_{1}(s_1) = \inf_{v\ge0}\left\{ v-\psi_\omega(\gamma(v^{2}+\Lambda_{a,b}(s_{1})))\right\} = 
\min \left\{-\psi_\omega(\gamma \Lambda_{a,b}(s_{1})), \min_{i, x_i \geq \gamma\Lambda_{a,b}(s_{1})}\left\{\sqrt{\frac{x_i}{\gamma}-\Lambda_{a,b}(s_{1})} - \psi_\omega(x_i)\right\}\right\}.
\end{equation*}
The results we obtained for market and SABR risk-neutral densities all have $v^2 \mapsto \psi_\omega(\gamma(v^{2}+\Lambda_{a,b}(s_{1})))$ fit the VIX very closely for one value of $s_1$ (the tip of $\Lambda_{a,b}$), then slip to the left, as shown in Figure \ref{fig:kink_subrep}. Lower bounds are reported in Table \ref{tab:numerical_results}.

\begin{figure}
    \centering
    \begin{minipage}[b]{0.33\textwidth}
        \centering
        \includegraphics[width=6cm, height=4cm]{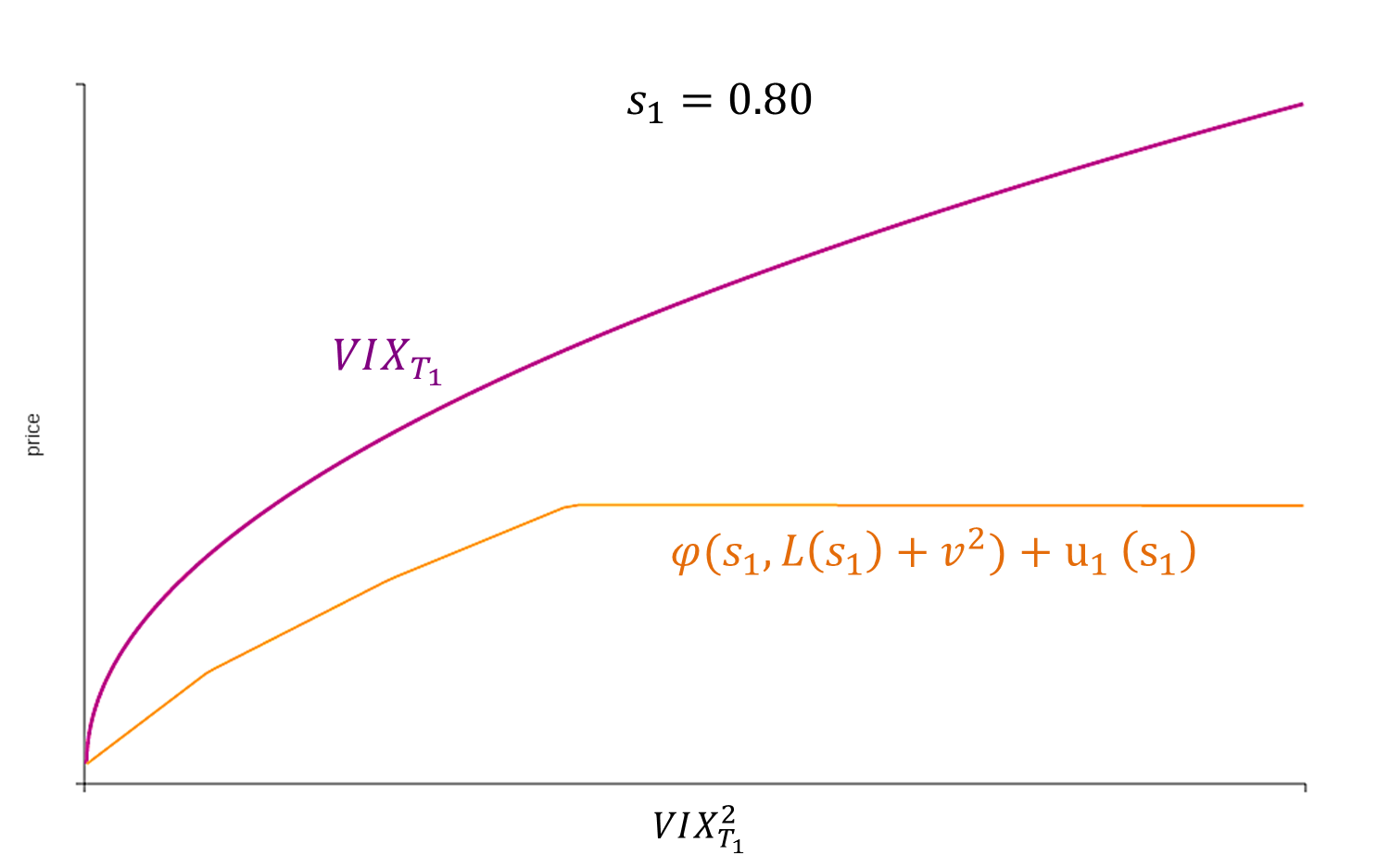}
    \end{minipage}%
    ~ 
    \begin{minipage}[b]{0.33\textwidth}
        \centering
        \includegraphics[width=6cm, height=4cm]{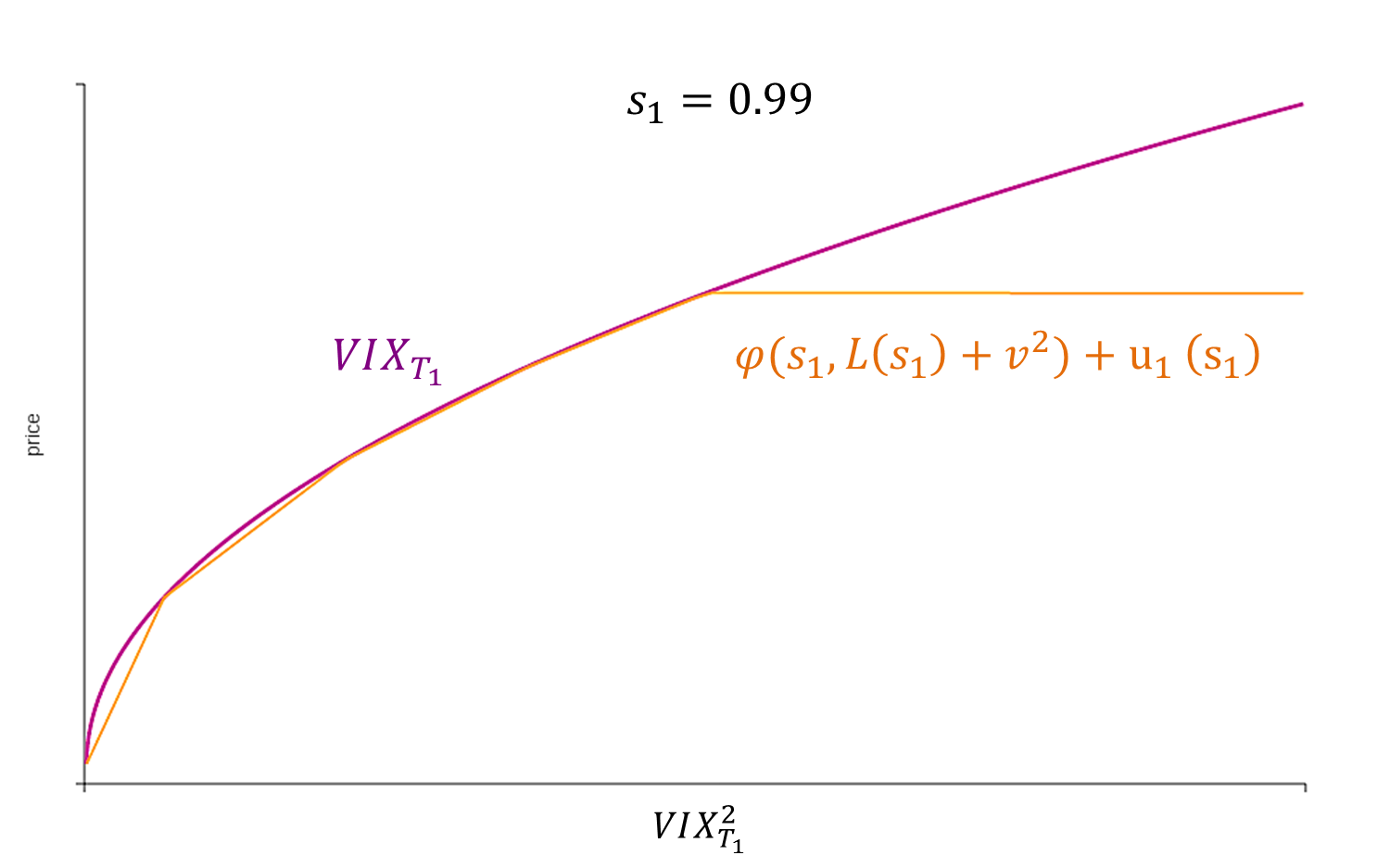}
    \end{minipage}%
    ~ 
    \begin{minipage}[b]{0.33\textwidth}
        \centering
        \includegraphics[width=6cm, height=4cm]{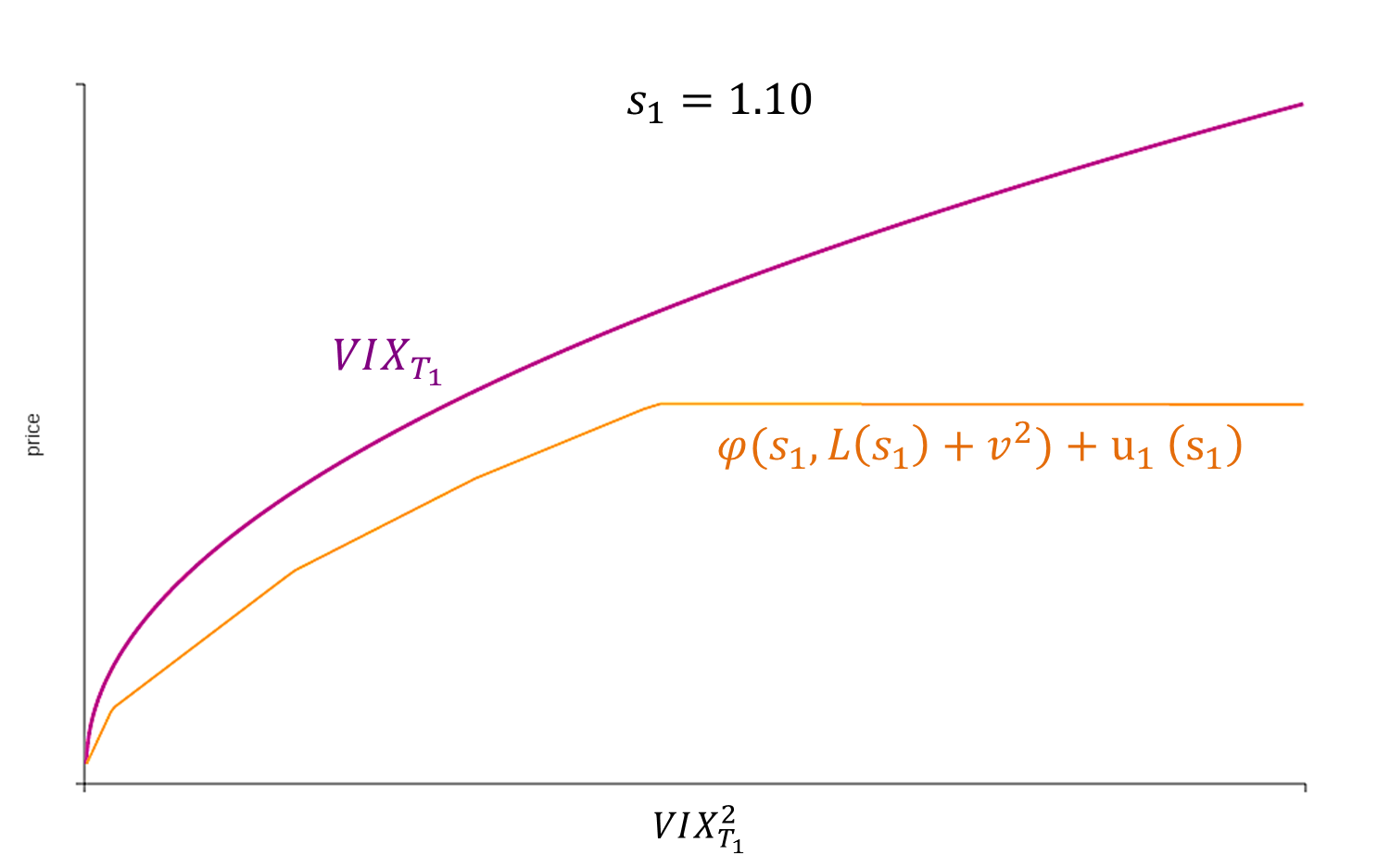}
    \end{minipage}
    \caption{Subreplication with a piecewise linear profile}
    \label{fig:kink_subrep}
\end{figure}

\subsubsection{Cut square root}

To obtain an even closer fit to the VIX, we consider the concave function
\[
\psi_\eps(x) = \begin{cases}
\frac{x}{\sqrt{\eps}} & \mathrm{if\,\,} x<\eps\\
\sqrt{x} & \mathrm{if\,\,} x \ge \eps.
\end{cases}
\]
for positive $\eps$. The corresponding $u_1$ is then
\begin{equation*}
u_{1}(s_1) = \inf_{v\ge0}\left\{ v-\psi_\eps(v^{2}+\Lambda_{a,b}(s_{1}))\right\} 
= -\psi_\eps(\Lambda_{a,b}(s_{1})_+) = 
\begin{cases}
-\psi_\eps(\Lambda_{a,b}(s_{1})) & \mathrm{if\,\,} \Lambda_{a,b}(s_{1}) \geq 0\\
0 & \mathrm{if\,\,} \Lambda_{a,b}(s_{1}) < 0
\end{cases}
\end{equation*}
which leads to the optimization problem
\[
P_{\mathrm{sqrt}} = \sup_{\eps>0, a, b \in \mathbb{R}} \left\{ -\mathbb{E}^{1}\left[\psi_\eps(\Lambda_{a,b}(S_{1})_+) \right]+\mathbb{E}^{2}[\psi_\eps(\Lambda_{a,b}(S_{2}))]\right\} 
\le P_\mathrm{sub}^{\mathrm{ccv},1}.
\]
Out of the concave functions we tested, this cut square root yielded the best results, that is, the highest lower bound for the price of the VIX future, as can be seen in Table \ref{tab:numerical_results}. The LP solver of Section~\ref{sec:LPsolver} gave a better lower bound for both the SABR smiles ($7.2\%$ versus $6.0\%$) and the market smiles as of May 5, 2016 (8.4\% versus 7.8\%), but the portfolio it yields is not guaranteed to subreplicate everywhere, as the subreplication constraint is only verified for a finite grid. By contrast, the functionally generated portfolios are subreplicating everywhere, by construction.  

\begin{figure}
    \centering
    \begin{minipage}[b]{0.33\textwidth}
        \centering
        \includegraphics[width=6cm, height=4cm]{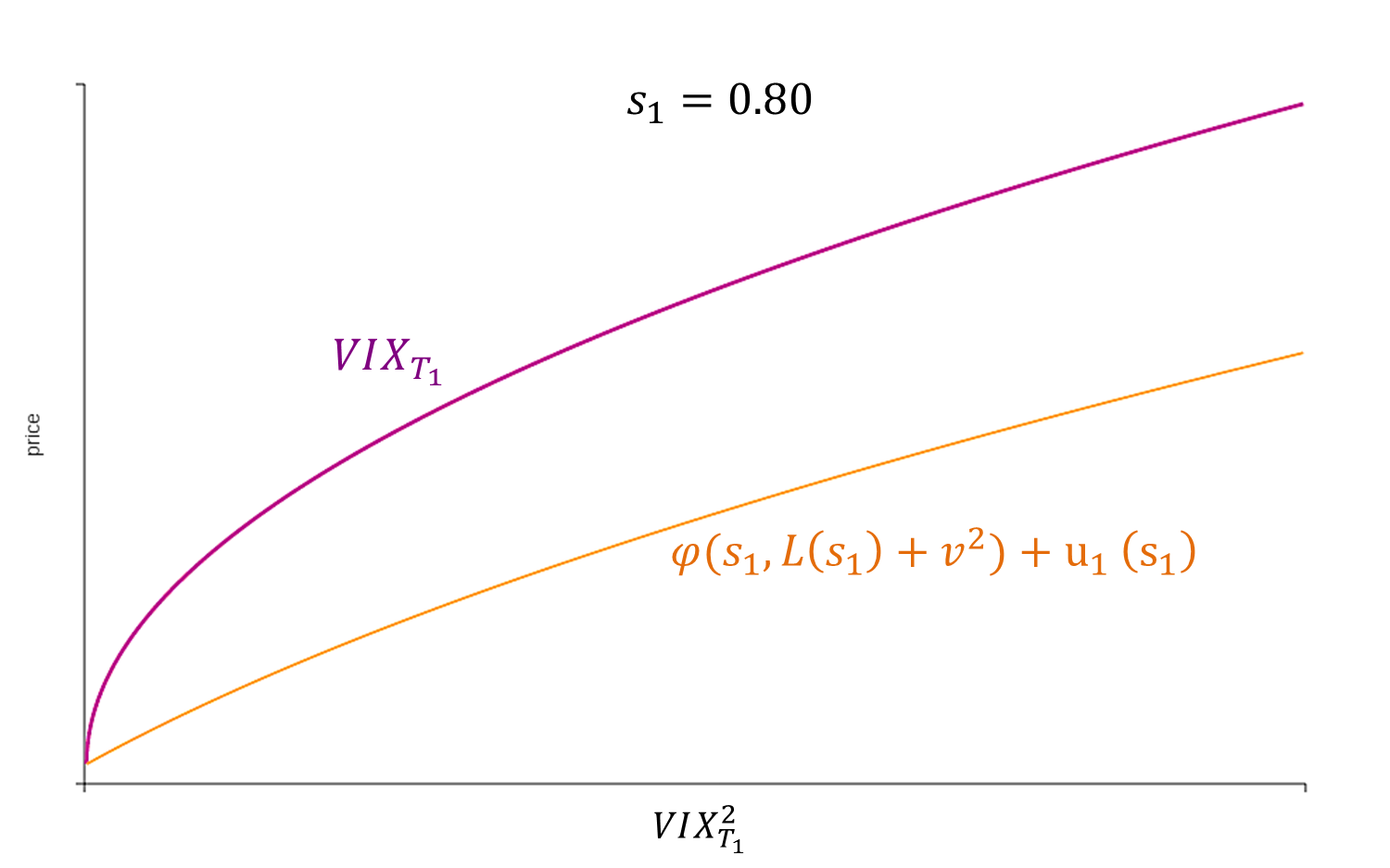}
     \end{minipage}%
    ~ 
    \begin{minipage}[b]{0.33\textwidth}
        \centering
        \includegraphics[width=6cm, height=4cm]{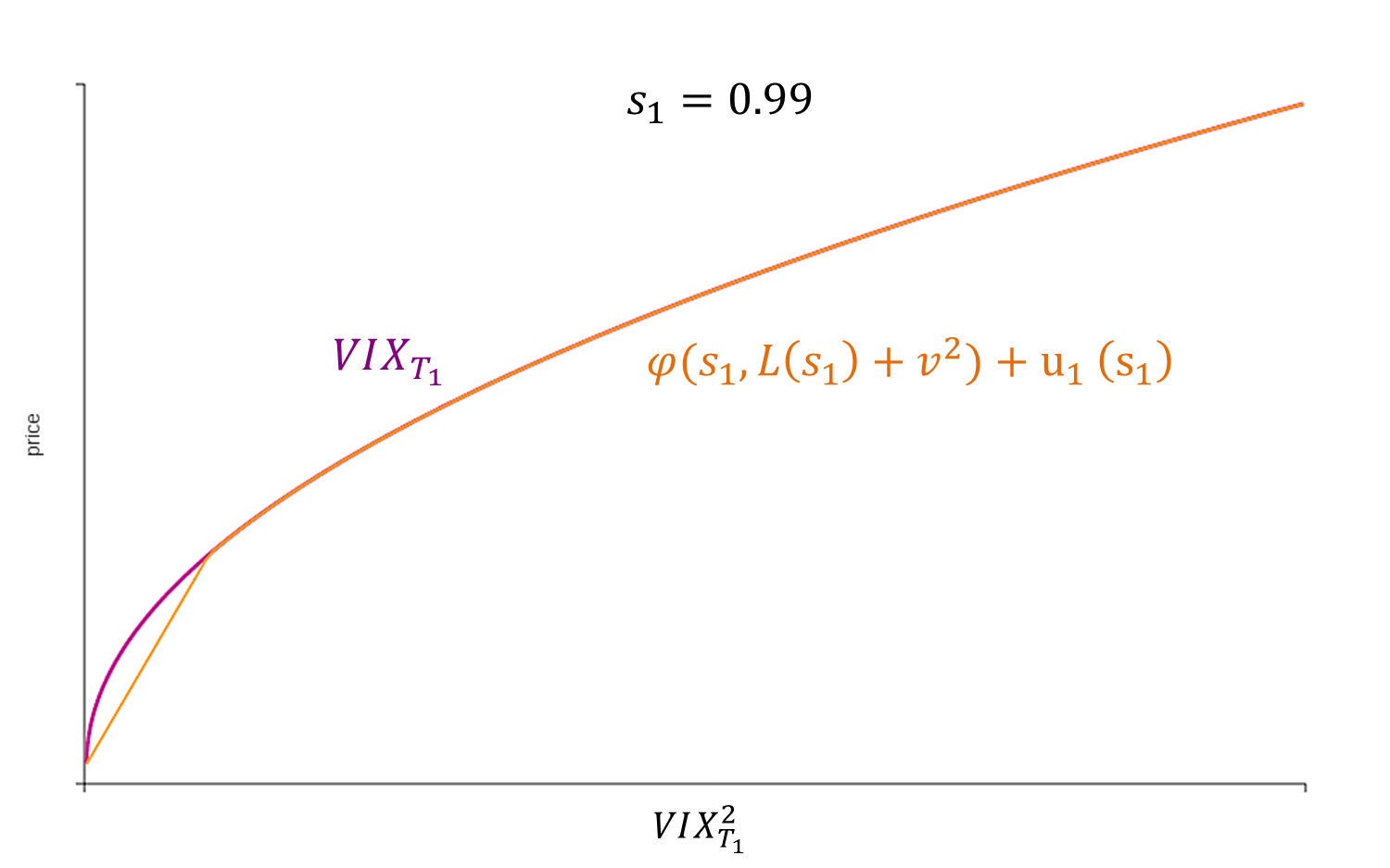}
     \end{minipage}%
    ~ 
    \begin{minipage}[b]{0.33\textwidth}
        \centering
        \includegraphics[width=6cm, height=4cm]{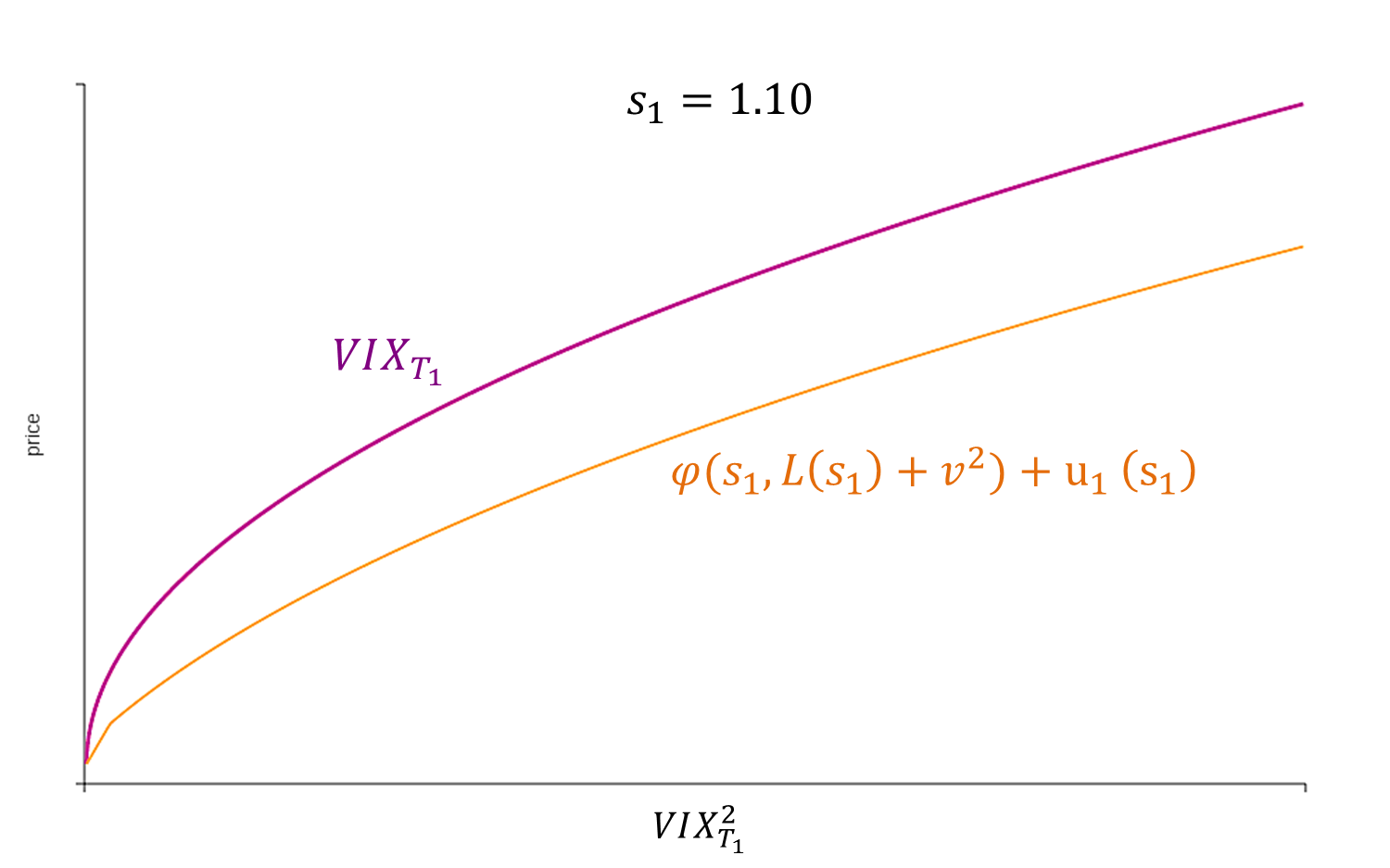}
    \end{minipage}
    \caption{Subreplication with a cut square root profile}
    \label{fig:sqrt_subrep}
\end{figure}

\begin{table}[]
\caption{Numerical results}
\centering
\begin{tabular}{|c|c|c|c|}
\hline
\multicolumn{2}{|c|}{} & \begin{tabular}[c]{@{}c@{}}SABR model (\ref{eq:SABRmodel}),\\ $T_1 = 2$ months\end{tabular} & \begin{tabular}[c]{@{}c@{}}Market smiles as of \\ May 5, 2016; $T_1 = 10$ days \end{tabular} \\ \hline
\multirow{ 5}{*}{Lower bound} & Classical lower bound & 0\% & 0\% \\
\cline{2-4}
& Piecewise linear profiles ($N=1$ kink) & 4.6\% & 4.4\% \\
\cline{2-4}
 & Piecewise linear profiles ($N=10$ kinks) & 5.2\% & 7.2\% \\
  \cline{2-4}
 & Cut square root & 6.0\% & 7.8\% \\
 \cline{2-4}
 & Lower bound from LP solver & 7.2\% & 8.4\% \\ \hline \hline
\multicolumn{2}{|c|}{Classical upper bound} & 22.8\% & 16.7\% \\ \hline \hline
\multicolumn{2}{|c|}{Upper bound from LP solver} & 22.8\% & 16.7\% \\ \hline
\end{tabular}
\label{tab:numerical_results}
\end{table}

\subsection*{Acknowledgments} We would like to thank Bruno Dupire for fruitful discussions, as well as Lorenzo Bergomi, Stefano De Marco, Pierre Henry-Labord\`ere, the Associate Editor, and two anonymous referees for their helpful comments on a preliminary version of this article. The research of M.\ Nutz is supported in part by an Alfred P.\ Sloan Fellowship and NSF Grant DMS-1512900.



\begin{thebibliography}{99}

\bibitem{BayraktarZhou.14}
Bayraktar, E., Zhao, Z.: \emph{On arbitrage and duality under model uncertainty and portfolio
  constraints}, to appear in Math. Finance, 2014.

\bibitem{beiglbock}Beiglb\"ock, M., Henry-Labord\`ere, P., Penkner, F.:
\emph{Model-independent bounds for option prices: A mass-transport
approach}, Finance Stoch., 17(3):477--501, 2013.


\bibitem{BeiglbockJuillet.12}
Beiglb\"ock, M., Juillet, N.:
\emph{On a problem of optimal transport under marginal martingale
  constraints}, Ann. Probab., 44(1):42--106, 2016.


\bibitem{nutz}Beiglb\"ock, M., Nutz, M., Touzi, N.: \emph{Complete duality for martingale optimal transport on the line},
to appear in  Ann. Probab., 2016.

\bibitem{BouchardNutz.13}
Bouchard, B., Nutz, M.:
\emph{Arbitrage and duality in nondominated discrete-time models},
Ann. Appl. Probab., 25(2):823--859, 2015.

\bibitem{carr-lee}Carr, P., Lee, R.: \emph{Robust replication of volatility derivatives}, preprint, 2009.

\bibitem{carr-lee2}Carr, P., Lee, R.: \emph{Hedging variance options on continuous semimartingales}, Finance Stoch., 14(2):179--207, 2010.

\bibitem{carr-madan}Carr, P., Madan D.: \emph{Towards a theory of volatility trading},
Reprinted in Option Pricing, Interest Rates, and Risk Management, Musiela, Jouini, Cvitanic, University Press, 417--427, 1998.

\bibitem{CBOE}The CBOE volatility index - VIX, www.cboe.com/micro/vix/vixwhite.pdf
(accessed on August 3, 2016).

\bibitem{cox-wang}Cox, A., Wang, J.: \emph{Root’s barrier: Construction, optimality and applications to variance options}, Ann. Appl. Probab., 23(3):859--894, 2013.

\bibitem{DellacherieMeyer.78}
Dellacherie, C., Meyer, P.-A.:
\emph{Probabilities and Potential A}, North Holland, Amsterdam, 1978.

\bibitem{phl-demarco}De Marco, S., Henry-Labord\`ere, P.: \emph{Linking
vanillas and VIX options: A constrained martingale optimal transport
problem}, SIAM J. Financial Math., 6(1):1171–-1194, 2015.

\bibitem{dupire}Dupire, B.: \emph{Model free results on volatility derivatives}, SAMSI,
Research Triangle Park, 2006.

\bibitem{FollmerSchied.11}
F{\"{o}}llmer, H., Schied, A.:
\newblock {\em Stochastic Finance: An Introduction in Discrete Time}.
\newblock W. de Gruyter, Berlin, 3rd edition, 2011.


\bibitem{galichon}Galichon, A., Henry-Labord\`ere, P., Touzi, N.: \emph{A
stochastic control approach to no-arbitrage bounds given marginals,
with an application to Lookback options}, Ann. Appl. Probab.,
24(1):312--336, 2014.

\bibitem{HenryLabordere.11}
Henry-Labord{\`e}re, P.:
\emph{Automated option pricing: Numerical methods}.
 Int. J. Theor. Appl. Finan., 16(8), 2013.

\bibitem{HobsonKlimmek.12}
Hobson, D., Klimmek, M.:
\emph{Model-independent hedging strategies for variance swaps},
 Finance Stoch., 16(4):611--649, 2012.

\bibitem{johansen72}Johansen, S.: \emph{A representation theorem for a convex cone of quasi convex functions}, Math. Scand., 30:297--312,
1972.

\bibitem{johansen74}Johansen, S.: \emph{The extremal convex functions}, Math. Scand., 34:61--68,
1974.

\bibitem{Kahale.16}
Kahal{\'e}, N.:
 \emph{Model-independent lower bound on variance swaps},
 Math. Finance, 26(4):939--961, 2016.

\bibitem{neuberger}Neuberger, A.: \emph{Volatility trading}, London Business School working paper, 1990.

\bibitem{scarsini}Scarsini, M.: \emph{Multivariate convex orderings, dependence, and stochastic equality}, J. Appl. Prob., 35:93--103,
1998.

\bibitem{Sion.58}
Sion, M.: \emph{On general minimax theorems}, Pacific J. Math., 8:171--176, 1958.

\bibitem{strassen}Strassen, V.: \emph{The existence of probability
measures with given marginals}, Ann. Math. Statist., 36:423--439,
1965.


\end{thebibliography}
\end{document}